\newtheorem{theorem}{Theorem}
\newtheorem{proposition}{Proposition}
\newtheorem*{corollary}{Corollary}
\newtheorem{lemma}{Lemma}
\newtheorem*{example}{Example}
\theoremstyle{definition}
\newtheorem{definition}{Definition}
\newtheorem{assumption}{Assumption}
\theoremstyle{remark}
\newtheorem{remark}{Remark}
\newcommand{\N}{\mathbb{N}}
\newcommand{\R}{\mathbb{R}}
\newcommand{\1}{\mathbf{1}}
\renewcommand{\d}{{\rm d}}
\newcommand{\scrW}{\mathscr{W}}
\newcommand{\scrX}{\mathscr{X}}
\newcommand{\calF}{\mathcal{F}}
\newcommand{\calG}{\mathcal{G}}
\newcommand{\calL}{\mathcal{L}}
\newcommand{\calN}{\mathcal{N}}
\newcommand{\calP}{\mathcal{P}}
\newcommand{\calS}{\mathcal{S}}
\newcommand{\calX}{\mathcal{X}}
\newcommand{\argmax}{\mathop{\rm arg~max}\limits}
\newcommand{\vertd}[1]{{\left\vert\kern-0.25ex\left\vert\kern-0.25ex\left\vert #1 \right\vert\kern-0.25ex\right\vert\kern-0.25ex\right\vert}}
\newcommand{\vertb}[1]{{\bigl\vert\kern-0.25ex\bigl\vert\kern-0.25ex\bigl\vert #1 \bigr\vert\kern-0.25ex\bigr\vert\kern-0.25ex\bigr\vert}}
\newcommand{\vertt}[1]{{\vert\kern-0.25ex\vert\kern-0.25ex\vert #1 \vert\kern-0.25ex\vert\kern-0.25ex\vert}}
\DeclareMathOperator{\E}{\mathbb{E}}
\DeclareMathOperator{\Var}{\mathbb{V}ar}
\DeclareMathOperator{\Cov}{\mathbb{C}ov}
\DeclareMathOperator{\Corr}{\mathbb{C}orr}
\renewcommand{\P}{\mathbb{P}}
\DeclareMathOperator{\Diag}{Diag}
\DeclareMathOperator{\sgn}{sgn}
\begin{document}

\title{{\bf Identification of Information Structures in Bayesian Games}\thanks{This paper is based on the first chapter of my dissertation submitted to Yale University.
I am deeply indebted to my advisors, Mira Frick, Johannes H\"orner, Ryota Iijima, and Larry Samuelson, for their invaluable advice and encouragement throughout the project.
I benefited greatly from constructive comments from an associate editor and four anonymous referees through the review process.
For helpful comments and discussions, I am grateful to Dirk Bergemann, Krishna Dasaratha, Tetsuya Hoshino, Jonathan Libgober, Hitoshi Matsushima, Kyohei Okumura, Aniko \"Ory, Philipp Strack, Takashi Ui, Katsutoshi Wakai, and Kai Hao Yang, as well as seminar and conference participants at Hitotsubashi, HKU, ITAM, Kyoto, Yale, SWET2023, and the 33rd Stony Brook Conference.}
}
\author{Masaki Miyashita\thanks{The University of Hong Kong; {\tt masaki11@hku.hk}
}}
\date{\today}
\maketitle

\bigskip

\renewcommand{\baselinestretch}{1.1}
\begin{abstract}
To what extent can an external observer infer the underlying information structure from an equilibrium action distribution in an incomplete-information game?
We investigate this question in a general linear-quadratic-Gaussian framework.
A simple class of canonical information structures is offered and proves rich enough to rationalize any equilibrium action distribution that can arise under an arbitrary information structure.
Moreover, this class is parsimonious in the sense that its relevant parameters are uniquely identified from an observed equilibrium outcome.
We then show that a canonical information structure characterizes the lower bound on the amount by which each agent's signal can reduce the state variance, across all observationally equivalent information structures.
This identified lower bound can in turn be used to predict equilibrium action volatility following changes in the payoff structure.
\end{abstract}

\renewcommand{\baselinestretch}{1}
\hypersetup{colorlinks=true, linkcolor=Red, citecolor=Blue}

\newpage
\renewcommand{\baselinestretch}{1}
\onehalfspacing
\section{Introduction}

Imagine a situation where a population of agents is engaged in an incomplete-information game.
There is an outside observer---called an \emph{econometrician}---who possesses only partial knowledge of the game structure.
This paper asks to what extent the econometrician can learn about agents' informational traits, such as the informativeness of each agent's signal about the state or the correlation between different agents' signals, by observing the equilibrium outcome of the game.
In other words, our interest lies in the identification of the underlying information structure.

Knowing agents' informational traits is policy relevant because of their decisive role in shaping equilibrium behavior.
For example, in a standard model of market competition, firms choose production levels based on their private information about an unknown demand function.
At the same time, their strategic responses to policy changes that alter the payoff structure of the game depend on the underlying information structure.
For instance, the introduction of an excise tax reduces firms' production levels, but private information affects how much production shrinks in equilibrium.
Thus, a government may benefit from learning the underlying information structure in order to make reliable predictions about counterfactual economic outcomes following policy changes.\footnote{In this paper, ``counterfactual'' prediction refers to prediction of economic outcomes following a potential change in the payoff structure, rather than, for example, prediction in the identical game after redrawing relevant random variables such as the state and signals.}
The fundamental difficulty is that the information structure is not directly observable, but rather need be inferred from observable data on firms' past choices.

The presence of strategic interactions poses a crucial obstacle to identification by confounding agents' private information about fundamental and strategic uncertainty in equilibrium behavior, in proportions unknown to the econometrician.
To illustrate this, suppose that a fairly high correlation is observed between the state realization and an agent's action.
On the one hand, this correlation may reflect the high predictive accuracy of the agent's private signal about the state.
On the other hand, it may be attributed to the agent's payoff-relevant motive to adjust her action to those of other agents who are better informed about the state.
Hence, the observed correlation may result from a mixture of the direct correlation between the state and the agent's own signal and the spurious correlation generated through strategic interactions with others.

Our analysis is conducted in a general framework of \emph{linear-quadratic-Gaussian} (LQG) games \`a la \cite{radner1962} with a continuum of agents, which maintains two structural assumptions.
First, each agent's payoff function is quadratic, so that the induced best-response strategy is linear in her best estimates about the state and other agents' actions.
Second, the state and signals jointly follow a Gaussian process, which is an infinite-dimensional analogue of a normal distribution.
Apart from these requirements, our framework allows for general payoff structures, where actions may be strategic complements or substitutes, as well as general correlation structures among agents' signals and the state.

The focus on LQG games allows us to sidestep issues of equilibrium non-existence or multiplicity.
Theorem~\ref{thm_well} shows that, for almost every LQG game, there exists a unique affine equilibrium, that is, an equilibrium in which each agent responds linearly to her private signal, and this equilibrium varies continuously with the primitives.
This result holds without imposing substantial parameter restrictions, such as homogeneity across agents or restrictions on the sign or magnitude of strategic interactions.

Our identification analysis proceeds in steps.
First, we offer a simple class of \emph{canonical information structures}, under which each agent receives a one-dimensional signal that is additively decomposed into the state-relevant term and idiosyncratic noise.
Theorem~\ref{thm_equiv} shows that this class is rich enough to rationalize any equilibrium outcome that can arise under an arbitrary information structure.
From a modeling perspective, therefore, the modeler incurs no loss of generality by assuming as if agents' signals are drawn from some  canonical information structure.
On the other hand, the theorem points to the limit of full identification, since it implies that any information structure is observationally equivalent to some canonical information structure.

Second, we show that, by restricting attention to the canonical class, the econometrician can fully identify the relevant parameters of an information structure.
Specifically, any canonical information structure is parameterized by two functions, one representing the exposure of each agent's signal to the state and the other governing correlations among noise terms.
Theorem~\ref{thm_id} shows that these functions are uniquely determined from observables, under a minimal condition on action variability.
Our identification approach takes the form of a reduced-form regression.
The canonical representation allows us to separate the variation in each agent's action attributed to changes in the state from the residual variation attributed to orthogonal noise, and the parameters of interest are recovered from the corresponding coefficients.
A key feature of this reduced-form approach is that these objects are recovered directly from the observed action distribution, and thus Theorem~\ref{thm_id} requires no prior knowledge of the underlying payoff structure.

Third, we study how the identified canonical information structure is related to a general information structure that is observationally equivalent to it.
This issue is important from a counterfactual viewpoint because observational equivalence between two information structures depends on a given payoff structure.
For example, two information structures that generate the same market outcome under an initial tax rate may no longer do so after a change in the tax rate.
Thus, eliciting a meaningful relationship between these information structures is useful for making reliable counterfactual predictions.

Our starting point is a rather simple observation that an agent's general signal can contain more information than the agent's action, while the canonical signal serves, by construction, as an action recommendation in the spirit of Bayes correlated equilibrium \citep{bm2013}.
Put differently, the canonical signal contains the minimal amount of information needed for the agent to play the same action.
Building on this observation, Proposition~\ref{prop_var} shows that a canonical information structure characterizes a lower bound on the informativeness of each agent's signal, as measured by state variance reduction, across all information structures that can be observationally equivalent to it under some payoff structure.
Moreover, Proposition~\ref{prop_var_eq} derives an equality condition for this bound, which is satisfied, e.g., when there are no strategic interactions among agents.
In general, however, strategic interactions create a gap between the lower bound and the actual variance reduction, reflecting the identification difficulty posed by the aforementioned confounding issue.

Lastly, we showcase how our results can be translated into economic predictions.
For this exercise, we focus on aggregate action volatility, an equilibrium statistic that has direct welfare implications in a wide class of LQG games \citep{ms2002,ap2007,ui2015}.
Proposition~\ref{prop_actvar} provides a lower bound on aggregate action volatility in terms of aggregate state variance reduction and payoff parameters.
Combining this result with the previous identification results then yields a robust counterfactual prediction of aggregate action volatility following an exogenous change in payoff parameters.
In the context of our running example, this implies that the econometrician can identify guaranteed levels of producer surplus and expected tax revenue following a change in the tax rate.

The rest of the paper is organized as follows.
After discussing related literature, Section~\ref{sec_model} sets up the model of LQG games.
Section~\ref{sec_eq} studies the theoretical properties of equilibria.
Section~\ref{sec_id} conducts the main identification analysis.
Section~\ref{sec_conc} concludes.
The appendices collect omitted proofs and additional results.

\paragraph{Related literature.}
This paper borrows the framework of \cite{bm2013}, a seminal paper that studies how an analyst can make robust predictions about economic outcomes without knowing the information structure agents face.
A sequence of papers \citep{bm2013, bm2016,bhm2015,bhm2017,bhm2021} addresses this problem by characterizing the set of equilibrium outcomes that can arise under any information structure on the basis of Bayes correlated equilibrium (BCE).
\cite{lop2018} also provide robust predictions of market variables, such as prices, in the context of asset markets.
We put forward this line of analysis by allowing the analyst to access data on past equilibrium play and thereby sharpen her knowledge of the underlying information structure.
In addition, while the BCE-based approach fixes the payoff structure  and instead obtain a set prediction of equilibrium outcomes across information structures, our analysis addresses counterfactual predictions upon potential changes in payoff structures.
Our partial identification results speak to how past observations of equilibrium outcomes can narrow down information structures and yield counterfactual predictions following policy changes.

In comparison to the vast literature on belief elicitation in statistics \citep[e.g.,][]{savage1971} and economics \citep[e.g.,][]{karni2009}, this paper aims to identify the ex-ante joint distribution of the state and signals, rather than eliciting subjective beliefs about the state formed after signal realizations.
There are also contemporaneous papers that share a similar goal of identifying information structures, such as \cite{lu2016,lu2019}, \cite{arieli2017}, and \cite{libgober2025}.
In contrast with these papers, our analysis emphasizes strategic interactions among agents, which hinder the identification of agents' informational traits by confounding their private information through equilibrium strategies.

There is also a related brunch of econometrics literature on identification in linear interaction models \citep{manski1993,manski1995,lee2007,bramoulle2009}.
As \cite{bm2013} clarify, the regression models in these papers correspond to a reduced-form representation of Nash equilibrium in network games under complete information.
\cite{blume2015} build a structural model of linear interaction games and extend the identification analysis to incomplete-information settings.
While this strand of literature focuses on the identification of payoff parameters, the informational connections across agents constitute the parameters of interest for us.\footnote{\cite{bm2013} also address partial identification of payoff structures.
In this regard, this paper is complementary to their analysis, as we are interested in the identification of information structures.}

Apart from identification analysis, this paper's theoretical contribution lies in proving a general result on equilibrium well-posedness in continuum-population LQG games.
We develop a novel proof technique by leveraging the Riesz--Fredholm theory of integral equations, and show that unique existence of affine equilibrium holds for almost every LQG game, without imposing homogeneity across agents \citep{ms2002,ap2007,bm2013,ui2013}, restricting the sign and/or magnitude of strategic interactions \citep{graphon2023}, or assuming a finite population \citep{ui2016,lmo2018}.
We also address the well-known measurability problem \citep{judd1985, uhlig1996} concerning the aggregation of a continuum of random variables in continuum-population settings.
As a remedy for this issue, we advocate the use of the integral notion \`a la \cite{pettis1938} and showcase how the properties of the Pettis integral allow us to carry over much of the intuition from finite-population models to the continuum-population setting.
Finally, the continuity result in Theorem~\ref{thm_well} is an addition to the literature, which enables us to interpret the present model with  continuously many heterogenous agents as the limit of large finite-population analogues, in a spirit similarly to \cite{graphon2023}.

%%%%%%%%%%%%%%%%%%%%%%%%%%%%%%%%%%%%%%%%%%%%%%%%%%%%%%%%%%%%%%%%%%%%%%%%%%%%%%%%%%%%%%%%%%
%%%%%%%%%%%%%%%%%%%%%%%%%%%%%%%%%%%%%%%%%%%%%%%%%%%%%%%%%%%%%%%%%%%%%%%%%%%%%%%%%%%%%%%%%%

\section{Model} \label{sec_model}

%Our model builds on the frameworks of \cite{ap2007} and \cite{bm2013}. We allow for both payoff-relevant and informational heterogeneity across agents by relaxing their homogeneity assumptions. After separately defining payoff and information structures, we introduce an equilibrium concept.

\subsection{Payoff Structure} \label{sec_payoff}

There are a continuum of agents $i \in [0,1]$, each of whom simultaneously chooses an action $x(i) \in \R$.
Given an action profile $\{x(i)\}_{i \in [0,1]}$, the \emph{local aggregate} for agent $i$ is defined as
\[
y(i) = \int_0^1 w(i,j)x(j) \d j,
\]
where $w: [0,1]^2 \to \R$ is a weight function with $w(i,j)$ measuring the influence of agent $j$'s action on agent $i$'s local aggregate.
The function $w$ need not be binary-valued or symmetric, and it allows for heterogeneous signs and magnitudes of strategic interaction.
%For instance, if $w(i,j_1) > 0 > w(i,j_2)$, then $y(i)$ increases with $j_1$'s action but decreases with $j_2$'s action.

Each agent $i$'s utility function $u_i:\R^3 \to \R$ is quadratic in own action $x(i)$, local aggregate $y(i)$, and the payoff-relevant state $\theta\in\R$, and concave in $x(i)$. 
Concretely, we consider the following general quadratic form:
\begin{equation} \label{eq_q}
u_i \qty(x(i),y(i),\theta) = z(i)^\top U(i) z(i), \quad {\rm where} \quad z(i)^\top = \mqty[x(i) & y(i) & \theta & 1].
\end{equation}
with $U(i) = [U_{kl}(i)]_{4\times 4}$ a symmetric matrix of quadratic coefficients satisfying $U_{11}(i)<0$.
We define $a(i) \coloneqq -U_{12}(i)/U_{11}(i)$, $b(i) \coloneqq -U_{13}(i)/U_{11}(i)$, and $c(i) \coloneqq -U_{14}(i)/U_{11}(i)$.

The quadratic specification implies linear best responses.
Under complete information, given any action profile $\{x(j)\}_{j\neq i}$ of other agents, the first-order condition yields
\[
x(i) =  a(i)\int_0^1 w(i,j)x(j)\d j + b(i)\theta + c(i),
\]
and an analogous affine best-response formula applies under incomplete information. 
Thus $w$ together with the first row of $U$---i.e., the functions $a$, $b$, and $c$---suffice to describe agents' incentives. 
Without loss of generality, we normalize $a(i)=1$ for all $i\in[0,1]$ by absorbing heterogeneity in $a$ into the weight function $w$.
We therefore refer to the profile $(w,b,c)$ as the \emph{payoff structure}. 
Throughout, we assume $b$, $c$, and $w$ are square-integrable functions,\footnote{That is, $\int_0^1 |b(i)|^2\d i < \infty$, $\int_0^1 |c(i)|^2\d i < \infty$, and $\int_0^1\int_0^1 |w(i,j)|^2\d i\d j < \infty$.} but are otherwise unrestricted.

\begin{example}[Market Competition]
As a running example, we use a stylized model of market competition, 
which has been repeatedly studied in the LQG literature; see, e.g., \cite{vives1984,vives1999} and \cite{bm2013}.
Agents $i \in [0,1]$ are interpreted as firms producing a homogeneous good. 
Each firm chooses a production level $x(i) \in \R$ at the quadratic cost $x(i)^2/2$. 
Market demand depends on aggregate production $y=\int_0^1 x(i)\,\d i$ and on an unknown demand intercept $\theta \in \R$. 
Each firm's profit is
\[
u(x(i),y,\theta) = p^\tau(y,\theta)\cdot x(i) - \frac{x(i)^2}{2},  \quad \text{where} \quad p^\tau(y,\theta)=(1-\tau)(\theta-y).
\]
Here, $\tau\in[0,1)$ denotes the excise tax rate set by the authority, so that firms face the net price $p^\tau(y,\theta)$ per unit of output. 
This payoff function is quadratic in $(x(i),y,\theta)$, with the associated parameters $w(i,j) = -(1-\tau)$, $b(i) = 1-\tau$, and $c(i) = 0$ for all $i,j \in [0,1]$.
\end{example}

%%%%%%%%%%%%%%%%%%%%%%%%%%%%%%%%%%%%%%%%%%%%%%%%%%%%%%%%%%%%%%%%%%%%%%%%%%%%%%%%%%%%%%%%%%
%%%%%%%%%%%%%%%%%%%%%%%%%%%%%%%%%%%%%%%%%%%%%%%%%%%%%%%%%%%%%%%%%%%%%%%%%%%%%%%%%%%%%%%%%%

\subsection{Information Structure} \label{sec_info}

We fix the common prior over the state $\theta$ to be normal, $\theta \sim \calN(\mu_\theta,\sigma_\theta^2)$, with mean $\mu_\theta\in\R$ and variance $\sigma_\theta^2>0$. 
Before choosing an action, each agent $i$ observes a signal realization $s(i)\in\R^{\bar d}$ of a possibly multi-dimensional random vector $S(i)$, where $\bar d\in\N$. 
We specify the joint distribution of the state and signals as a \emph{Gaussian process}, meaning that any finite subset of the collection
\[
\calS \coloneqq \{\theta\} \cup \{S_1(i),\ldots,S_{\bar d}(i)\}_{i\in[0,1]}
\]
is multivariate normal.

A key advantage of Gaussian modeling is that the signal-state joint distribution is summarized by its first and second moments. 
Let $S(i)=[S_1(i),\ldots,S_{\bar d}(i)]^\top$ be the column vector of agent $i$'s signal, with mean $m(i)\coloneqq\E[S(i)]\in\R^{\bar d}$ and covariance matrices given by
\[
K_\theta(i)\coloneqq\Cov\qty[S(i),\theta]\in\R^{\bar d}, \qquad
K(i,j)\coloneqq\Cov\qty[S(i),S(j)]\in\R^{\bar d\times\bar d}.
\]
Moreover, for any finite set $N=\{i_1,\ldots,i_n\}\subseteq[0,1]$, we write the joint covariance matrix as
\begin{equation} \label{cov_matrix_join}
\mqty[{\bm K}(N) & {\bm K}_\theta(N) \\ {\bm K}_\theta(N)^\top & \sigma^2_\theta] \coloneqq
\mqty[K(i_1,i_1) & \cdots & K(i_1,i_n) & K_\theta(i_1) \\ \vdots & \ddots & \vdots & \vdots \\ K(i_n,i_1) & \cdots & K(i_n,i_n) & K_\theta(i_n) \\ K_\theta(i_1)^\top & \cdots & K_\theta(i_n)^\top & \sigma^2_\theta]
\in \R^{(\bar{d}n+1) \times (\bar{d}n+1)},
\end{equation}
which is symmetric and positive semidefinite by construction.
These two properties are also sufficient for the existence of the corresponding Gaussian process.\footnote{According to Theorem 12.1.3 of \cite{dudley}, given any set $T$ and any function $\mu: T\to \R$, if a function $\sigma:T^2 \to \R$ satisfies symmetry and positive semidefiniteness, then there exists a probability space on which a Gaussian process over $T$ having the mean function $\mu$ and the covariance function $\sigma$ takes place.}
Without loss of generality, assume that $K(i,i)$ is invertible for every $i \in [0,1]$.

Since $K(i,i)$ is positive definite, its inverse admits a unique symmetric square root $K(i,i)^{-\frac{1}{2}}$. 
Using it, we define the standardized covariance matrices
\[
P(i,j)\coloneqq K(i,i)^{-\frac{1}{2}}K(i,j)K(j,j)^{-\frac{1}{2}}, \qquad
P_\theta(i)\coloneqq \frac{1}{\sigma_\theta}\,K(i,i)^{-\frac{1}{2}}K_\theta(i),
\]
which are matrix-valued analogues of correlation coefficients.
By construction, $P(i,i)$ is the identity matrix.
As we will see, equilibrium behavior depends on an information structure only through the functions $P:[0,1]^2\to\R^{\bar d\times\bar d}$ and $P_\theta:[0,1]\to\R^{\bar d}$.
We therefore refer to $\calP=(P,P_\theta)$ as an \emph{information structure}.
Throughout, assume that each coordinate of $P$ and $P_\theta$ is measurable as a function of $i \in [0,1]$.\footnote{Unlike the parameters of the payoff structure, we do not need to assume square integrability of $P$ and $P_\theta$ explicitly because their values are bounded by construction. This boundedness resembles that of correlation coefficients and is established formally in Lemma~\ref{lem_P}. Also, no condition on $m$ is required, since only unbiased signals matter for agents' strategies \eqref{eq_a}.}

%%%%%%%%%%%%%%%%%%%%%%%%%%%%%%%%%%%%%%%%%%%%%%%%%%%%%%%%%%%%%%%%%%%%%%%%%%%%%%%%%%%%%%%%%%
%%%%%%%%%%%%%%%%%%%%%%%%%%%%%%%%%%%%%%%%%%%%%%%%%%%%%%%%%%%%%%%%%%%%%%%%%%%%%%%%%%%%%%%%%%

\subsection{Strategy and Equilibrium} \label{sec_def}

An \emph{LQG game} $\calG = (w,b,c,\calP)$ consists of a payoff structure $(w,b,c)$ and an information structure $\calP = (P,P_\theta)$. 
The profile $\calG$ is assumed to be common knowledge among all agents, while an external observer may know it only partially or not at all. 
The observer's goal is to recover the information structure $\calP$ from an observed equilibrium outcome.

\begin{definition} \label{def_st}
Each agent $i$'s \emph{strategy} is a measurable mapping $X(i): \R^{\bar d} \to \R$ such that $\E|X(i)|^2 < \infty$. 
A collection of strategies $X(\cdot) \equiv \{X(i)\}_{i \in [0,1]}$ is a \emph{strategy profile} if it satisfies the following regularity conditions:
\begin{inparaenum}[(i)]
\item the mapping $i \mapsto \E[X(i)X(j)]$ is measurable for each $j \in [0,1]$; and
\item the mapping $i \mapsto \E|X(i)|^2$ is integrable.
\end{inparaenum}
\end{definition}

\begin{remark} \label{remark_int}
In a continuum-population LQG game, a strategy profile $X(\cdot) = \{X(i)\}_{i \in [0,1]}$ forms a stochastic process on $[0,1]$, with randomness in actions stemming from that in signals. 
Thus, we must integrate such processes to compute stochastic local aggregates of the form $Y(i)=\int_0^1 w(i,j) X(j),\mathrm{d}j$.
A technical difficulty arises because, when each $X(i)$ contains a purely idiosyncratic component independent of others, a typical sample path of the process is highly discontinuous and fails to satisfy the needed measurability for path-wise integration; see, e.g., \citet{judd1985} and \cite{uhlig1996}.
To avoid this measurability issue, we adopt the integral notion \`a la \cite{pettis1938} to define the integration of strategy profiles.\footnote{The use of Pettis integral is not new in economics, but rather, it also appear in earlier works such as \cite{alnajjar1995} and \cite{uhlig1996}; see also Chapter~11 of \cite{ab2006}.}
Appendix~\ref{app_pettis} provides the formal definition of the Pettis integral and confirms that the regularity conditions in Definition~\ref{def_st} ensure the Pettis integrability of any strategy profile.
\end{remark}

\begin{definition}
A strategy profile $X(\cdot)$ forms a \emph{Bayesian Nash equilibrium} if for any $i \in [0,1]$ and $s(i) \in \R^{\bar d}$,
\[
X(i)[s(i)] \in \argmax_{x \in \R} \E \qty[u_i(x, Y(i), \theta) \mid S(i) = s(i)],
\]
where $Y(i) = \int_0^1 w(i,j) X(j)\d j$ is defined as the Pettis integral.
We write $E_i[\cdot] = \E[\cdot \mid S(i)]$ for the expectation conditional on agent $i$'s signal.
\end{definition}

We focus on Bayesian Nash equilibria in which each agent's strategy depends linearly on their standardized signal. 
This standardization is without loss of generality, but it helps economize notation.

\begin{definition} \label{def_affine}
A strategy $X(i)$ is \emph{affine} if there exist $\varphi_0(i) \in \R$ and $\varphi(i) \in \R^{\bar d}$ such that
\begin{equation} \label{eq_a}
X(i)[s(i)] = \varphi_0(i) + \varphi(i)^\top K(i,i)^{-\frac{1}{2}}\qty(s(i)-m(i)), 
\quad \forall\, s(i) \in \R^{\bar d}.
\end{equation}
\end{definition}

Thus, $\varphi_0(i)$ captures the intercept of agent $i$'s strategy, while $\varphi(i)$ governs its sensitivity to the standardized signal. 
Since these components fully determine an affine strategy, a profile of affine strategies is identified as a collection of functions $\varphi_0:[0,1]\to\R$ and $\varphi=(\varphi_1,\ldots,\varphi_{\bar d}):[0,1]\to\R^{\bar d}$. 
We denote the full profile by $\bar{\varphi}=(\varphi_0,\varphi)$. 
When the regularity conditions in Definition~\ref{def_st} are satisfied, we refer to $\bar{\varphi}$ as an \emph{affine strategy profile}. 
If $\bar{\varphi}$ constitutes a Bayesian Nash equilibrium, it is called an \emph{affine equilibrium}.

\begin{definition}
For an affine strategy profile $\bar{\varphi}$ in an LQG game $\calG$, the collection of random variables 
\[
\calX \coloneqq \{\theta\} \cup \{X(i)\}_{i\in[0,1]}
\]
is called the \emph{induced outcome} of $(\calG,\bar{\varphi})$.
\end{definition}

An induced outcome will serve as a key building block in our identification exercise, where the econometrician aims to recover the primitive of an LQG game from the distribution of $\calX$. 
Since affine transformations preserve normality, any affine strategy profile induces a Gaussian distribution over the state and actions.
The next lemma characterizes the corresponding moments and also provides the necessary and sufficient condition for an affine strategy profile $\bar{\varphi}$ to be ``valid'' in the sense of Definition~\ref{def_st}. 
The proof is straightforward and omitted.

\begin{lemma} \label{lem_affine}
An affine strategy profile $\bar{\varphi}:[0,1] \to \R^{\bar d+1}$ satisfies the regularity conditions in Definition~\ref{def_st} if and only if $\varphi_d \in \calL_2[0,1]$ for all $d=0,1,\ldots,\bar d$, where $\calL_2[0,1]$ is the space of square-integrable functions on $[0,1]$. 
For any such $\bar{\varphi}$, the induced outcome is a Gaussian process with mean and covariance given by
\[
\E[X(i)] = \varphi_0(i), \quad
\Cov\qty[X(i),X(j)] = \varphi(i)^\top P(i,j)\varphi(j), \quad
\Cov\qty[X(i),\theta] = \sigma_\theta \varphi(i)^\top P_\theta(i).
\]
\end{lemma}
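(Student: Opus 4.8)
The plan is to regard the affine strategy as the random variable $X(i) = \varphi_0(i) + \varphi(i)^\top \tilde{S}(i)$, where $\tilde{S}(i) \coloneqq K(i,i)^{-\frac{1}{2}}(S(i)-m(i))$ is the standardized signal, and to extract every assertion from the single observation that $X(i)$ is an \emph{affine functional} of $S(i)$ with deterministic coefficients. The standardized signal satisfies $\E[\tilde{S}(i)] = 0$ and has second moments $\E[\tilde{S}(i)\tilde{S}(j)^\top] = K(i,i)^{-\frac{1}{2}}K(i,j)K(j,j)^{-\frac{1}{2}} = P(i,j)$ and $\E[\tilde{S}(i)(\theta-\mu_\theta)] = K(i,i)^{-\frac{1}{2}}K_\theta(i) = \sigma_\theta P_\theta(i)$, which are exactly the two building blocks needed below.

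For the Gaussian-process claim I would argue that any finite subfamily $\{\theta\} \cup \{X(i_1),\ldots,X(i_n)\}$ of $\calX$ is the image of the vector $\{\theta\} \cup \{S(i_1),\ldots,S(i_n)\}$ under an affine map whose linear part is block-diagonal with blocks $\varphi(i_k)^\top K(i_k,i_k)^{-\frac{1}{2}}$ and whose constant part gathers the intercepts $\varphi_0(i_k)$ and the centering terms. Since the underlying vector is jointly normal by the defining property of the Gaussian process on $\calS$, and affine images of jointly Gaussian vectors are jointly Gaussian, the induced outcome $\calX$ is itself a Gaussian process. The moment formulas then follow by linearity of expectation: $\E[X(i)] = \varphi_0(i)$ is immediate, while expanding the centered products and substituting the second-moment identities above yields $\Cov(X(i),X(j)) = \varphi(i)^\top P(i,j)\varphi(j)$ and $\Cov(X(i),\theta) = \sigma_\theta \varphi(i)^\top P_\theta(i)$.

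For the $\calL_2$ characterization the decisive step is to specialize the covariance formula to $i=j$ and use $P(i,i)=I$, giving $\E|X(i)|^2 = \Var(X(i)) + \varphi_0(i)^2 = |\varphi(i)|^2 + \varphi_0(i)^2 = \sum_{d=0}^{\bar{d}} \varphi_d(i)^2$. This shows the per-agent requirement $\E|X(i)|^2 < \infty$ from Definition \ref{def_st} holds automatically for any real coefficients, so the binding condition is the profile-level integrability one. Integrating, $\int_0^1 \E|X(i)|^2 \d i = \sum_{d=0}^{\bar{d}} \int_0^1 \varphi_d(i)^2 \d i$, and since every summand is nonnegative this total is finite if and only if each $\varphi_d \in \calL_2[0,1]$, which gives both directions of the equivalence. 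The measurability requirement that $i \mapsto \E[X(i)X(j)] = \varphi(i)^\top P(i,j)\varphi(j) + \varphi_0(i)\varphi_0(j)$ be Borel for each fixed $j$ then reduces to measurability of sums and products of the coefficient functions against $P(\cdot,j)$.

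I expect the only delicate point to be the bookkeeping around measurability rather than any genuine mathematical difficulty: one must ensure the coefficient functions $\varphi_d$ are measurable (so that $\calL_2$-membership is the correct statement of the equivalence, not merely square-summability of a possibly non-measurable representative) and that $P(\cdot,j)$ is measurable, so that conditions (i)--(ii) of Definition \ref{def_st} are genuinely \emph{implied} by $\calL_2$-integrability of the coefficients and not just consistent with it. Everything else is routine moment computation, which is why the proof can be safely omitted.
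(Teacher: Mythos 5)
Your proposal is correct and is exactly the ``straightforward proof'' the paper omits: affine images of the jointly Gaussian family $\calS$ are Gaussian, the moment formulas follow from $\E[\tilde{S}(i)\tilde{S}(j)^\top]=P(i,j)$ and $\E[\tilde{S}(i)(\theta-\mu_\theta)]=\sigma_\theta P_\theta(i)$ for the standardized signals, and the $\calL_2$ characterization follows from $\E|X(i)|^2=\varphi_0(i)^2+\|\varphi(i)\|^2$ via $P(i,i)=I$. The measurability bookkeeping you flag is the only genuine subtlety (in the ``only if'' direction the regularity conditions pin down $|\varphi_d|$ but not the measurability of signs), and the paper implicitly resolves it by taking the coefficient functions to be measurable, so nothing essential is missing from your argument.
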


Thanks to the signal standardization in Definition~\ref{def_affine}, this lemma shows that $\varphi_0(i)$ coincides with the mean of $X(i)$, and $\varphi(i)$ captures the variability of $X(i)$ as $\Var\qty[X(i)] = \varphi(i)^\top \varphi(i)$ holds by the fact that $P(i,i)$ is the identity matrix.

\begin{example}[Continued]
The market competition model constitutes an LQG game when the unknown state $\theta$ and each firm's private signal $S(i)$ are jointly Gaussian distributed. 
Each firm's best-response strategy is linear in its conditional expectations of $\theta$ and of the random aggregate output $Y=\int_0^1 X(j)\d j$, as follows:
\[
X(i) = (1-\tau)\E_i\qty[\theta - Y].
\]
The equilibrium is shaped not only by the payoff parameters but also by the information structure through the conditional distributions of $Y$ and $\theta$.

In this model, producer surplus (PS) is defined as the expected aggregate profit of all firms.
Together with the law of iterated expectations, the above best-response condition implies that PS is calculated as
\begin{align*}
{\rm PS} &= \E\qty[\int_0^1 \qty(p^\tau(Y,\theta) \cdot X(i) - \frac{X(i)^2}{2}) \d i] \\
&= \int_0^1 \E \qty[(1-\tau)\E_i\qty[\theta - Y] \cdot X(i) - \frac{X(i)^2}{2}] \d i = \frac{1}{2} \int_0^1 \E\qty|X(i)|^2 \d i.
\end{align*}
This feature, whereby expected payoffs can be written in terms of equilibrium action volatility, is a common feature of LQG games; see, e.g., \cite{ui2015}.
\end{example}

%%%%%%%%%%%%%%%%%%%%%%%%%%%%%%%%%%%%%%%%%%%%%%%%%%%%%%%%%%%%%%%%%%%%%%%%%%%%%%%%%%%%%%%%%%
%%%%%%%%%%%%%%%%%%%%%%%%%%%%%%%%%%%%%%%%%%%%%%%%%%%%%%%%%%%%%%%%%%%%%%%%%%%%%%%%%%%%%%%%%%

\section{Equilibrium Analysis} \label{sec_eq}

In this section, we establish the unique existence of an affine equilibrium in an LQG game.
The analysis proceeds in two steps.
We first express agents' best-response conditions as a Fredholm integral equation.
We then show that, generically, this equation admits a unique solution, and the solution depends continuously on primitives.

%%%%%%%%%%%%%%%%%%%%%%%%%%%%%%%%%%%%%%%%%%%%%%%%%%%%%%%%%%%%%%%%%%%%%%%%%%%%%%%%%%%%%%%%%%
%%%%%%%%%%%%%%%%%%%%%%%%%%%%%%%%%%%%%%%%%%%%%%%%%%%%%%%%%%%%%%%%%%%%%%%%%%%%%%%%%%%%%%%%%%

\subsection{Generic Well-posedness of Equilibrium}

Let $X(\cdot)$ be a candidate strategy profile that is affine and represented by coefficient functions 
$\bar{\varphi} = (\varphi_0, \varphi)$.
For this profile to constitute an affine equilibrium, each agent $i$ must satisfy the first-order condition
\begin{equation} \label{eq_l}
X(i) = \E_i \qty[Y(i)] + b(i)\E_i\qty[\theta] + c(i).
\end{equation}
Substituting the definition \eqref{eq_a} of affine strategies into \eqref{eq_l} yields
\begin{equation} \label{eq_l2}
X(i) = \int_0^1 w(i,j)\qty( \varphi_0(j) + \varphi(j)^\top K(j,j)^{-\frac{1}{2}} \qty(\E_i[S(j)] - m(j))) \d j + b(i)\E_i[\theta] + c(i),
\end{equation}
where the interchange of expectation and integration is justified by the definition of the Pettis integral; see Appendix~\ref{app_pettis}.
Moreover, because $\calS$ is a Gaussian process, the conditional expectations $\E_i\qty[S(j)]$ and $\E_i\qty[\theta]$ are given as affine functions of $S(i)$,
\begin{gather*}
\E_i\qty[S(j)] = m(j) + K(j,i)K(i,i)^{-1}\qty(S(i) - m(i)), \\
\E_i\qty[\theta] = \mu_\theta + K_\theta(i)K(i,i)^{-1}\qty(S(i) - m(i)).
\end{gather*}
Substituting these expressions back into \eqref{eq_l2} yields
\begin{align} \label{eq_match}
X(i)
=&\qty(\int_0^1 w(i,j) \varphi_0(j) \d j+\mu_\theta b(i) + c(i)) \nonumber\\
&+ \qty(\int_0^1 w(i,j) P(i,j) \varphi(j) \d j + \sigma_\theta b(i)P_\theta(i))^\top
K(i,i)^{-\frac{1}{2}}\qty(S(i)-m(i)).
\end{align}
Then matching coefficients between \eqref{eq_a} and \eqref{eq_match} yields the following $(\bar d+1)$-dimensional integral equation characterizing an affine equilibrium:
\begin{equation} \tag{BNE} \label{int_bne}
\mqty[\varphi_0(i) \\ \varphi(i)] = \mqty[\displaystyle \int_0^1 w(i,j)\varphi_0(j) \d j \\ \displaystyle \int_0^1 w(i,j) P(i,j) \varphi(j) \d j] + \mqty[\mu_\theta b(i) + c(i) \\ \sigma_\theta b(i) P_\theta(i)], \quad \forall i \in [0,1].
\end{equation}
By Lemma~\ref{lem_affine}, the profile $\bar{\varphi}$ constitutes an affine equilibrium if and only if it satisfies \eqref{int_bne} and each component $\varphi_d$ belongs to $\calL_2[0,1]$ for all $d=0,1,\ldots,\bar d$.

Mathematically, the equation~\eqref{int_bne} is an instance of Fredholm integral equations (of the second kind) in the unknown function $\bar{\varphi}$, whose solvability is well studied in the classical Riesz--Fredholm theory of integral equations.
The next theorem establishes that, generically, this equation admits a unique solution that depends continuously on the primitives of the game.
Formally, we topologize the space of LQG games by the $L_2$-norm on their primitives and define genericity in terms of openness and denseness with respect to this norm.
The detailed functional-analytic setup and the proof are given in Appendix~\ref{app_pf}.

\begin{theorem} \label{thm_well}
There exists a unique affine equilibrium for almost every LQG game.
Moreover, if a sequence of LQG games $\{\calG_n\}_{n=1}^\infty$ converges to $\calG$, and if $\calG$ has a unique affine equilibrium $\bar{\varphi}$, then $\calG_n$ has a unique affine equilibrium $\bar{\varphi}_n$ for all large enough $n$, and $\bar{\varphi}_n$ converges to $\bar{\varphi}$.
\end{theorem}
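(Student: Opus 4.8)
The plan is to recast the equilibrium fixed-point as a pair of Fredholm integral equations of the second kind and then read off both conclusions from the Riesz--Fredholm theory of compact operators. First I would derive the operator form of the equilibrium condition. The affine best response $X(i) = \E_i[Y(i)] + b(i)\E_i[\theta] + c(i)$ (after the normalization $a\equiv 1$), together with the Gaussian conditional-expectation formulas and Lemma \ref{lem_affine}, reduces to two equations once the intercept and the slope on the standardized signal $K(i,i)^{-\frac{1}{2}}(s(i)-m(i))$ are matched. The intercept satisfies the scalar equation $\varphi_0(i) - \int_0^1 w(i,j)\varphi_0(j)\,\d j = b(i)\mu_\theta + c(i)$, and the slope satisfies the $\R^{\bar{d}}$-valued equation $\varphi(i) - \int_0^1 w(i,j)P(i,j)\varphi(j)\,\d j = \sigma_\theta\, b(i) P_\theta(i)$. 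Writing $W$ for the integral operator with kernel $w$ on $\calL_2[0,1]$ and $\calT_\calG$ for the operator with matrix kernel $w(i,j)P(i,j)$ on $\calL_2([0,1];\R^{\bar{d}})$, an affine equilibrium exists and is unique precisely when both $(I-W)$ and $(I-\calT_\calG)$ are invertible. The delicate point here is justifying the interchange of the Pettis integral defining $Y(i)$ with the conditional expectation $\E_i[\cdot]$ and the attendant measurability in the continuum-agent model; I would relegate this to the integrability results behind Definition \ref{def_st} and Lemma \ref{lem_pettis}.

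Next I would establish existence and uniqueness for a fixed game. Since $w\in\calL_2([0,1]^2)$ the operator $W$ is Hilbert--Schmidt, hence compact; and because the standardized covariance block is positive semidefinite, each $P(i,j)$ has operator norm at most $1$, so the Frobenius norm of $w(i,j)P(i,j)$ is dominated by $\sqrt{\bar{d}}\,|w(i,j)|$ and $\calT_\calG$ is Hilbert--Schmidt as well. By the Riesz--Fredholm theorem a compact operator has countable spectrum accumulating only at $0$, its nonzero spectral values are eigenvalues, and the Fredholm alternative holds; thus $(I-\calT_\calG)$ is invertible if and only if $1$ is not an eigenvalue, and injectivity already forces bijectivity. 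Uniqueness and existence therefore coincide, and the equilibrium, when it exists, is $\varphi = (I-\calT_\calG)^{-1}(\sigma_\theta\, bP_\theta)$ and $\varphi_0 = (I-W)^{-1}(b\mu_\theta + c)$, which lie in $\calL_2$ by Lemma \ref{lem_affine}.

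For the genericity claim I would show the well-posed set is open and dense. Openness is immediate: the invertible operators form an open set in operator norm, and $\calG\mapsto(W,\calT_\calG)$ is norm-continuous because $\calL_2$-convergence of the kernels dominates operator-norm convergence, so having neither $W$ nor $\calT_\calG$ admit the eigenvalue $1$ is preserved under small perturbations. For denseness I would perturb a given game by scaling the weight function, $w\mapsto\lambda w$, which scales the spectra of both $W$ and $\calT_\calG$ by $\lambda$; since these spectra are countable, the set of $\lambda$ making $1$ an eigenvalue of either operator is countable, so all but countably many $\lambda$ arbitrarily close to $1$ yield a well-posed game within any $\calL_2$-neighborhood. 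Open and dense delivers the asserted genericity.

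Finally, the continuity statement follows from the same ingredients. If $\calG_n\to\calG$ and $\calG$ is well-posed, then $W_n\to W$ and $\calT_{\calG_n}\to\calT_{\calG}$ in operator norm, so for large $n$ the perturbed operators are invertible with $(I-\calT_{\calG_n})^{-1}\to(I-\calT_{\calG})^{-1}$ and likewise for $W$, by continuity of inversion on the open set of invertible operators. Combining this with $\calL_2$-convergence of the right-hand data $b_nP_{\theta,n}\to bP_\theta$ and $b_n\mu_\theta+c_n\to b\mu_\theta+c$ yields $\bar{\varphi}_n\to\bar{\varphi}$ in $\calL_2$. The main obstacle throughout is not these soft functional-analytic steps but the first one---rigorously reducing the continuum-population equilibrium to a compact-operator equation---after which the decisive structural fact is simply that well-posedness is governed by the eigenvalues of the equilibrium operator.
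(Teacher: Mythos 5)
Your proposal is correct and follows the same overall skeleton as the paper: reduce the affine-equilibrium condition to a Fredholm equation of the second kind, invoke compactness and the Fredholm alternative so that unique existence is equivalent to $1$ not being an eigenvalue of the equilibrium operator, obtain denseness by rescaling $w$ (which rescales the spectrum, and the spectrum of a compact operator is discrete away from $0$), and obtain openness and continuity by perturbation arguments. Where you differ is in the technical implementation of two steps, and your choices are arguably leaner. First, you get compactness by observing that the kernel $w(i,j)\bar P(i,j)$ is square integrable (using $\vertd{P(i,j)}\le 1$), so the operator is Hilbert--Schmidt and hence compact; the paper instead proves compactness via an $L_2$-equicontinuity (shift) argument and the Fr\'echet--Riesz--Kolmogorov theorem, machinery it then reuses to extract a convergent subsequence of eigenfunctions. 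Second, for openness and the convergence $\bar\varphi_n\to\bar\varphi$ you appeal to the standard fact that invertibility is stable under operator-norm perturbations and that inversion is continuous there, whereas the paper argues openness by contradiction with a sequence of normalized eigenfunctions $\bar\psi^n$ of $T_n$ and a compactness/subsequence argument, and proves continuity by hand via the resolvent identity and a uniform bound on $\vertd{(I-T_n)^{-1}}$; the two routes are equivalent in substance, with yours outsourcing more to textbook facts and the paper's self-contained estimates doubling as the ingredients for its restated, slightly stronger denseness claim (perturbing only $w$) --- which your scaling argument also delivers. The one place you should add a line of care is the assertion that $\|\calG_n-\calG\|\to 0$ implies operator-norm convergence $T_n\to T$: the kernels are products $w_n\bar P_n$, and since the hypothesis only gives $L_2$-convergence of $w_n$ and of $P_n$ separately, you need the triangle-inequality splitting plus dominated convergence (using $\vertd{\bar P_n}\le 1$ and $\|P_\theta(i)\|\le 1$) that the paper carries out in its Lemma on continuity of $T$ and $f$; the same remark applies to $f_n\to f$. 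With that detail supplied, your argument goes through.
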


This theorem generalizes the standard unique-existence results obtained for a continuum of homogeneous agents \citep{ap2007, bm2013} to the case with heterogeneous agents.
Notably, it imposes no parametric restrictions on the payoff or information structures, which makes the result substantially more general than existing ones.\footnote{Our model bears some similarity to the recent framework of continuum-population network games developed by \citet{graphon2023}, and Theorem~\ref{thm_well} is related to their unique-existence theorem.
Compared with their setting, the strength of Theorem~\ref{thm_well} lies first in incorporating incomplete information, whereas their model assumes a complete-information environment.
Moreover, while they adopt the common assumption in the network-game literature that the spectral radius of the integral kernel $w:[0,1]^2 \to \R$ is bounded, thereby limiting the strength of strategic interactions, our result holds for generic $w$ as well as for generic signal-state joint distributions.}
A related generic result is obtained by \citet{lmo2018} in a setting with finitely many agents.
Extending such results to the continuum-agent case requires careful treatment of measurability and integrability and is new in the literature.

A natural question is whether the uniqueness assertion of Theorem~\ref{thm_well} extends beyond affine strategies; that is, whether the affine equilibrium remains the only BNE even when non-affine strategies are considered.
\citet{ui2016} obtains a positive result in finite-population models by deriving a sufficient condition under which this strong uniqueness holds in the mean-square sense.
Although we do not fully pursue this kind of stronger result here and adopt the unique affine equilibrium in Theorem~\ref{thm_well} as the equilibrium prediction for the subsequent identification analysis, in Appendix~\ref{app_pettis}, we hint at how an analogous result would be obtained in the continuum-population setting.
Crucially, the definition of Pettis integral provides the technical foundation that allows the key argument for establishing mean-square uniqueness to extend to infinite populations, which makes it a compelling modeling choice from a technical standpoint.

%%%%%%%%%%%%%%%%%%%%%%%%%%%%%%%%%%%%%%%%%%%%%%%%%%%%%%%%%%%%%%%%%%%%%%%%%%%%%%%%%%%%%%%%%%
%%%%%%%%%%%%%%%%%%%%%%%%%%%%%%%%%%%%%%%%%%%%%%%%%%%%%%%%%%%%%%%%%%%%%%%%%%%%%%%%%%%%%%%%%%

\subsection{Equilibrium Moment Restrictions}

The equation~\eqref{int_bne} imposes restrictions on each agent's affine strategy by requiring it to equal the sum of the weighted average of others' strategies and constant terms.
Specifically, since $\varphi_0(i) = \E[X(i)]$ by Lemma~\ref{lem_affine}, the first line of~\eqref{int_bne} can be written as
\begin{equation} \label{bce1}
\E\qty[X(i)] = \int_0^1 w(i,j)\E\qty[X(j)]\d j + b(i)\mu_\theta + c(i), 
\quad \forall i \in [0,1].
\end{equation}
This expression decomposes the mean of each agent's action into the weighted average of others' mean actions, the response to the expected state, and the payoff-relevant constant term.
Note that this characterization of the mean is independent of the information structure and depends only on the payoff structure $(w,b,c)$ and $\mu_\theta$.

Similarly, the second line of~\eqref{int_bne} restricts the second moments of the induced outcome.
Multiplying both sides of that equation by $\varphi(i)^\top$ and applying the covariance formulae in Lemma~\ref{lem_affine} yields
\begin{equation} \label{bce2}
\Var[X(i)] = \int_0^1 w(i,j)\Cov[X(i),X(j)]\d j + b(i)\Cov[X(i),\theta], 
\quad \forall i \in [0,1].
\end{equation}
Hence, each agent's action volatility is decomposed into two parts: 
one arising from correlations with others' actions and another from correlations with the state.
Since any affine strategy profile induces a jointly Gaussian distribution over the state and actions, the moment restrictions~\eqref{bce1} and~\eqref{bce2} characterize the set of equilibrium outcomes that can be induced under some information structure.
Following \citet{bm2013}, we refer to such a distribution as a \emph{Bayes correlated equilibrium (BCE)}.

\begin{definition}
A Gaussian process $\calX = \{\theta\} \cup \{X(i)\}_{i \in [0,1]}$ forms a \emph{Gaussian BCE} under a payoff structure $(w,b,c)$ if its first and second moments satisfy the restrictions~\eqref{bce1}~and~\eqref{bce2}.
\end{definition}

The next result shows that any induced outcome arising from an affine equilibrium of an LQG game must form a Gaussian BCE.

\begin{lemma} \label{lem_bce}
For any LQG game $\calG = (w,b,c,\calP)$ and its affine equilibrium $\bar{\varphi}$, the induced outcome of $(\calG, \bar{\varphi})$ forms a Gaussian BCE under $(w,b,c)$.
\end{lemma}

In fact, an even stronger statement holds.
The restrictions~\eqref{bce1} and~\eqref{bce2} must be satisfied by any equilibrium, possibly non-affine, and under any information structure, even when the state and signals are not jointly Gaussian, as long as the payoff structure is quadratic.
This is because these restrictions are derived directly from the law of iterated expectations, which holds independently of distributional assumptions.
The proof of the lemma in Appendix~\ref{app_moment} clarifies this point.

Moreover, the converse of Lemma~\ref{lem_bce} holds true; namely, starting from any Gaussian BCE, one can construct a Gaussian information structure $\calP$ under which a unique affine equilibrium exists and reproduces the same first and second moments.
This statement is implied by Theorem~\ref{thm_equiv}, where we show that such replication is possible even when attention is restricted to the class of canonical information structures.
%Taken together, these results yield an observational equivalence result, which is discussed as the main implication of the theorem.

\begin{example}[Continued]
In what follows, suppose that $\theta \sim \calN(0,1)$.
Then the first line of \eqref{int_bne} implies $\varphi_0\equiv 0$, and the second line implies that $\varphi$ solves
\begin{equation} \label{ex_int}
\varphi(i) + (1-\tau)\int_0^1 P(i,j)\varphi(j)\d j = (1-\tau)P_\theta(i).
\end{equation}
By Theorem~\ref{thm_well}, this equation admits a unique solution for almost every $\tau$ and $\calP$.
In fact, an even stronger conclusion holds in this example, i.e., an affine equilibrium exists for every $\tau$ and $\calP$ and is the unique BNE, including among non-affine strategy profiles.\footnote{This is because the present example features strategic substitutability and payoff symmetry across all firms; see Proposition~\ref{prop_unique_const}.}

The normalization $\theta\sim\calN(0,1)$ allows us to represent PS as the integral of firms' action variances:
\begin{equation} \label{eq_PS}
{\rm PS} = \frac{1}{2} \int_0^1 \Var\qty[X(i)] \d i = \frac{1}{2}  \int_0^1 \|\varphi(i)\|^2 \d i,
\end{equation}
where $\|\cdot\|$ denotes the Euclidean norm, and $\Var\qty[X(i)] = \|\varphi(i)\|^2$ holds by Lemma~\ref{lem_affine}.
In addition, the authority's expected tax revenue (TR), defined as the $\tau$-fraction of the gross price $\theta-Y$ times total output $Y$, takes a similar form:
\begin{align}
{\rm TR}
&=\tau \E\qty[(\theta-Y) \cdot Y]
=\tau \int_0^1 \E\qty[\theta X(i)] \d i - \tau\int_0^1 \int_0^1 \E\qty[X(i)X(j)] \d i \d j \nonumber \\
&=\tau \int_0^1 \qty(\Cov\qty[X(i), \theta]-\int_0^1 \Cov\qty[X(i), X(j)]\d j)\d i \nonumber \\
&=\frac{\tau}{1-\tau} \int_0^1 \Var\qty[X(i)]\d i
=\frac{\tau}{1-\tau} \int_0^1 \|\varphi(i)\|^2 \d i, \label{eq_TR}
\end{align}
where the third line follows from \eqref{bce2} by recalling $b(i) = - w(i,j) = 1-\tau$ in the present setup.
\end{example}

\subsection{Relation to Finite-population Models}
\label{sec_nstep}

To conclude this section, we point out that the continuity result in Theorem~\ref{thm_well} bridges an affine equilibrium in the continuum-population model with equilibria in finite-population models by arguing that the former can be obtained as the limit of the latter as the number of agents tends to infinity.

To formalize this idea, let the set of agents $[0,1]$ be partitioned into $n$ equal-measure cells
\[
t_1 = \left[0,\, \frac{1}{n}\right] \quad \text{and} \quad t_k = \left(\frac{k-1}{n},\, \frac{k}{n}\right], \quad \forall k=2,\ldots,n.
\]
A payoff structure $(w,b,c)$ is \emph{$n$-step} if, for each $i,j,i',j' \in [0,1]$ with $i,j \in t_k$ and $i',j' \in t_{k'}$,
\[
w_{kk'} \equiv w(i,i') = w(j,j'), \quad b_k \equiv b(i) = b(j), \quad c_k \equiv c(i) = c(j).
\]
Likewise, an information structure is \emph{$n$-step} if, for each $i,j \in t_k$,
\[
S_k \equiv S(i) = S(j) \quad \text{a.s.}
\]
An LQG game is \emph{$n$-step} if both its payoff structure and information structure are $n$-step.

In an $n$-step LQG game, agents in the same cell entail the same payoff parameters and receive the same signal.
Hence, the best-response condition \eqref{eq_l} implies that $X_k \equiv X(i) = X(j)$ a.s.~whenever $i,j \in t_k$.
Moreover, for any strategy profile such that agents in the same cell adopt the same strategy, \eqref{eq_l} reduces to
\begin{equation} \label{br_finite}
X_k =\sum_{k'=1}^n  \frac{w_{kk'}}{n} \E\qty[X_{k'} \mid S_k] + b_k \E\qty[\theta \mid S_k] + c_k.
\end{equation}
This best-response system is identical to the one arising in a finite-population analogue of the LQG game with the finite set of agents $T=\{t_1,\ldots,t_n\}$, where each agent $t_k$ receives signal $S_k$ about the state $\theta$, and the complete-information best response takes the form $x_k = \sum_{k'=1}^n \frac{w_{kk'}}{n} x_{k'} + b_k \theta + c_k$.
Thus, the equilibrium of an $n$-step continuum-population game coincides with the equilibrium of its associated $n$-agent finite-population game, after identifying each cell $t_k$ with one finite-population agent.

Now, given a general continuum-population LQG game, consider a sequence of $n$-step approximations of its payoff and information structures.
Each $n$-step approximation admits an associated finite-population analogue, whose equilibrium can be embedded into the continuum model as a step-function strategy profile.
The continuity result in Theorem~\ref{thm_well} then implies that, as the $n$-step approximations converge to the original continuum-population game, the corresponding equilibria converge to the affine equilibrium of the original game.
In this sense, the continuum-population equilibrium studied in this paper is obtained as the limit of equilibria in finite-population analogues.

%%%%%%%%%%%%%%%%%%%%%%%%%%%%%%%%%%%%%%%%%%%%%%%%%%%%%%%%%%%%%%%%%%%%%%%%%%%%%%%%%%%%%%%%%%
%%%%%%%%%%%%%%%%%%%%%%%%%%%%%%%%%%%%%%%%%%%%%%%%%%%%%%%%%%%%%%%%%%%%%%%%%%%%%%%%%%%%%%%%%%

\section{Identification of Information Structures} \label{sec_id}

We now investigate the empirical content of equilibrium behavior in the LQG framework from the perspective of identification in econometrics \citep{manski1995}.
Suppose that our econometrician has access to the data on equilibrium outcomes, while she does not know precisely the primitives of an LQG game in play.
Her objective is to recover the underlying information structure from observables, at least, in a partial sense.

%%%%%%%%%%%%%%%%%%%%%%%%%%%%%%%%%%%%%%%%%%%%%%%%%%%%%%%%%%%%%%%%%%%%%%%%%%%%%%%%%%%%%%%%%%
%%%%%%%%%%%%%%%%%%%%%%%%%%%%%%%%%%%%%%%%%%%%%%%%%%%%%%%%%%%%%%%%%%%%%%%%%%%%%%%%%%%%%%%%%%

\subsection{Data and Hypothesis} \label{sec_asm}

The possibility of identification hinges on empirically relevant assumptions about the dataset and the structural model.
The econometrician faces a more difficult identification problem when the available data are limited and only weak restrictions are imposed on the structural model.
In what follows, we describe layers of assumptions considered in our analysis.

We begin by specifying what the econometrician observes.
Throughout, we assume that the econometrician knows the common prior over the state; that is, she knows that $\theta$ is ex ante normally distributed with mean $\mu_\theta$ and variance $\sigma^2_\theta$.
Since our interest lies in identifying agents' heterogeneous informational traits, we consider identification under rich individual choice data.
The strongest form of such data is the observability of the entire joint distribution of the state and actions.\footnote{Our joint observability is essentially equivalent to the assumption employed in the identification analysis of \cite{bm2013}. They assume that ``the econometrician observes the realized individual actions and the realized state. In other words, the econometrician learns the first and second moments of the joint equilibrium distribution over actions and state'' (see pp.\ 1290--1291). As such, since their model has a continuum of homogeneous agents who receive conditionally i.i.d.~signals, there is no essential difference between assuming that the econometrician observes the whole profile of realized actions and assuming that she learns the distribution of an individual agent's action.}
We also consider a weaker form of data, which requires only the observability of conditional action distributions at two distinct state realizations.

\begin{itemize}
\setlength{\leftmargin}{0pt}
\setlength{\itemindent}{0pt}
\item \emph{Joint observability}: The econometrician observes the joint distribution of $\calX$.
\item \emph{Conditional observability}: The econometrician observes the conditional action distributions $X(\cdot)|_{\theta = \hat{\theta}}$ for two distinct state realizations $\hat{\theta}_1,\hat{\theta}_2 \in \R$, which are also observable.
\end{itemize}

Since the common state prior is known to the econometrician, joint observability essentially postulates that the econometrician can observe the conditional action distribution $X(\cdot)|_{\theta=\hat{\theta}}$ for every state realization $\hat{\theta}\in\R$.
By contrast, under conditional observability, the econometrician has access to conditional action distributions only for two distinct state realizations.
We consider these assumptions of different strengths to sharpen our results: our impossibility result is established under joint observability, whereas our possibility results are established under conditional observability.

Next, we specify assumptions regarding the structural model---which we call \emph{hypotheses}---that describe the econometrician's prior knowledge about the underlying data-generating process.
Throughout, our most fundamental hypothesis is that the observed data are generated by Bayesian rational agents playing an affine equilibrium of some LQG game, while the econometrician does not know precisely which LQG game is being played.
Put differently, the econometrician faces a parametric identification problem, where the relevant parameters correspond to the primitives of an LQG game.
Our main interest lies in the identifiability of information structures $\calP$.

As with the observability assumptions, we consider several hypotheses of varying strength.
Depending on the result, the payoff structure is either known or unknown to the econometrician.
The information structure is always unknown to the econometrician, but the set of information structures deemed possible varies across the analysis.
Specifically, we consider either the set of all (Gaussian) information structures or the canonical class introduced in Definition~\ref{def_cano}.
Table~\ref{table_asm} summarizes the identification assumptions used for each result.

\begin{table}[t]
\caption{Identification assumptions and objects}
\renewcommand{\arraystretch}{1.1}
\begin{center}
{\footnotesize
\setlength{\tabcolsep}{4pt}
\begin{tabular}{|c|cccc|} \hline
 & Data & \begin{tabular}{c} Hypothesis on \\ payoff structure \end{tabular} & \begin{tabular}{c} Hypothesis on \\ information structure \end{tabular} & \begin{tabular}{c} Identification \\ object \end{tabular}  \\ \hline \hline
\begin{tabular}{c} Section~\ref{sec_equiv} \\ (Theorem~\ref{thm_equiv}) \end{tabular} & joint observability & known & general & $\calP$ \\ \hline
\begin{tabular}{c} Section~\ref{sec_cano} \\ (Theorem~\ref{thm_id}) \end{tabular} & conditional observability & unknown & canonical & $\calP^*$ \\ \hline
%\begin{tabular}{c} Section~\ref{sec_cano_var} \\ (Theorem~\ref{thm_id_var}) \end{tabular} & conditional observability & unknown & canonical & $r(i \mid \calP^*)$ \\ \hline
\begin{tabular}{c} Section~\ref{sec_var} \\ (Proposition~\ref{prop_var}) \end{tabular}  & conditional observability & unknown & general & \begin{tabular}{c} lower bound \\ on $r_i(\cdot \mid \calP)$ \end{tabular}  \\ \hline
\end{tabular}
}
\end{center}
\label{table_asm}
\end{table}

\begin{example}[Continued]
Hereafter, we discuss the example by viewing the tax authority as the econometrician.
As we have seen, PS and TR depend both on the chosen tax rate $\tau$ and on the underlying information structure, which is initially unknown.
Suppose that, before changing the tax rate to some new value $\tau$, the tax authority has access to rich data on firms' past production choices under the initial tax rate, say $\tau_0$.
Our observability assumptions require the distribution of each individual firm's choices to be observable, which presumes the availability of rich cross-sectional data.

The availability of such rich data may be compelling in connection with the $n$-step structure, where there are finitely many firm types and each type consists of infinitely many firms sharing the same payoff parameters.
Unlike the original setting in Section~\ref{sec_nstep}, however, suppose here that firms of the same type receive conditionally independent signals drawn from an identical distribution given the state.\footnote{One possible interpretation is that each type corresponds to a geographic or industrial group of firms, and firms in the same group may be exposed to the same source of policy announcements or news media, while the way such information is transmitted or perceived is subject to idiosyncratic noise.}
In the resulting equilibrium, firms of the same type choose different realized actions, but these actions are generated from the same conditional distribution.
Thus, observing many firms within the same type allows the econometrician to recover the action distribution associated with that type.
Conditional observability posits that the outcome of this game is observed for at least two distinct state realizations, or demand intercepts, which may correspond, for example, to different annual economic conditions.
Joint observability requires analogous observations for infinitely many state realizations.
In either case, the realized state is assumed to be observable.

Regarding the econometrician's hypotheses, it is natural in the present context to assume that the payoff structure is known to the authority.
More generally, however, our identification results speak to situations in which the authority does not know precisely how the demand shock $\theta$ and aggregate production $Y$ are translated into the market price.
\end{example}

%%%%%%%%%%%%%%%%%%%%%%%%%%%%%%%%%%%%%%%%%%%%%%%%%%%%%%%%%%%%%%%%%%%%%%%%%%%%%%%%%%%%%%%%%%
%%%%%%%%%%%%%%%%%%%%%%%%%%%%%%%%%%%%%%%%%%%%%%%%%%%%%%%%%%%%%%%%%%%%%%%%%%%%%%%%%%%%%%%%%%

\subsection{Canonical Information Structures}
\label{sec_equiv}

In this section, we impose no restriction on the class of information structures under consideration.
Instead, we assume that the econometrician observes the entire joint distribution of an equilibrium outcome and knows the precise payoff structure $(w,b,c)$.

The next definition introduces a class of simple information structures that will play a key role throughout the subsequent analysis.

\begin{definition} \label{def_cano}
An information structure $\calP^*$ is \emph{canonical} if each agent receives a one-dimensional signal $S^*(i)$ of the form
\[
S^*(i) = h(i) \cdot \theta + \varepsilon(i) \quad \text{with} \quad \Var\qty[S^*(i)] = 1,
\]
where $h:[0,1] \to \R_+$ is a given function and, together with a given function $g:[0,1]^2 \to \R$, the collection $\{\theta\} \cup \{\varepsilon(i)\}_{i \in [0,1]}$ forms a Gaussian process satisfying
\[
\E\qty[\varepsilon(i)] = 0, \quad
\Cov\qty[\varepsilon(i), \theta] = 0, \quad 
\Cov\qty[\varepsilon(i), \varepsilon(j)] = g(i,j).
\]
\end{definition}

\begin{remark}
In terms of the parametric specification of general information structures, a canonical information structure corresponds to the case with $\bar d=1$, where
\begin{equation} \label{P_cano}
P(i,j) = K(i,j) = \sigma_\theta^2 h(i)h(j) + g(i,j), \quad
P_\theta(i) = \frac{1}{\sigma_\theta}K_\theta(i) = \sigma_\theta h(i).
\end{equation}
Since the covariance matrix \eqref{cov_matrix_join} must be symmetric and positive semidefinite, $g$ must satisfy the corresponding symmetry and positive semidefiniteness requirements.
These requirements amount to conditions \eqref{canonical_psd2} and \eqref{canonical_psd3} in the next lemma.
In addition, the normalization $\Var\qty[S^*(i)] = 1$ is equivalent to condition \eqref{canonical_psd1}, and $h$ takes nonnegative values by construction.
These conditions are necessary and sufficient for the pair $(h,g)$ to represent a canonical information structure.
\end{remark}

\begin{lemma} \label{lem_canonical_psd}
A pair of functions $h:[0,1] \to \R_+$ and $g:[0,1]^2 \to \R$ represents a canonical information structure for which \eqref{cov_matrix_join} is symmetric and positive semidefinite if and only if the following conditions hold:
\begin{enumerate}[\rm(i)]
\item \label{canonical_psd1}
$\sigma_\theta^2h(i)^2+g(i,i)=1$ for all $i\in[0,1]$;
\item \label{canonical_psd2}
$g(i,j)=g(j,i)$ for all $i,j\in[0,1]$; and
\item \label{canonical_psd3}
for any finite set $\{i_1,\ldots,i_n\}\subseteq[0,1]$, the matrix 
\begin{equation} \label{g_psd}
\mqty[
g(i_1,i_1) & \cdots & g(i_1,i_n) \\
\vdots & \ddots & \vdots \\
g(i_n,i_1) & \cdots & g(i_n,i_n)]
\end{equation}
 is positive semidefinite.
\end{enumerate}
\end{lemma}

\begin{comment}
\begin{equation} \label{eq_cov_canonical}
\Var\mqty[S^*(i_1) \\ \vdots \\ S^*(i_n) \\ \theta] =
\mqty[
\sigma_\theta^2h(i_1)^2+g(i_1,i_1) 
& \cdots 
& \sigma_\theta^2h(i_1)h(i_n)+g(i_1,i_n) 
& \sigma_\theta^2h(i_1) \\
\vdots 
& \ddots 
& \vdots 
& \vdots \\
\sigma_\theta^2h(i_n)h(i_1)+g(i_n,i_1) 
& \cdots 
& \sigma_\theta^2h(i_n)^2+g(i_n,i_n) 
& \sigma_\theta^2h(i_n) \\
\sigma_\theta^2h(i_1) 
& \cdots 
& \sigma_\theta^2 h(i_n)
& \sigma_\theta^2
]
\end{equation}
is positive semidefinite for every finite set $\{i_1,\ldots,i_n\}\subseteq[0,1]$.
\end{comment}

Under a canonical information structure, each agent receives a normalized one-dimensional signal $S^*(i)$, which is positively correlated with the state.
In particular, $S^*(i)$ admits an additive decomposition into two components: the \emph{common component}, $h(i)\cdot\theta$, and the \emph{idiosyncratic component}, $\varepsilon(i)$.\footnote{A similar decomposable process is considered in finance under the name of a $K$-factor model \citep[see, e.g.,][]{alnajjar1995}, but our canonical model is more permissive since agents' noise terms can be correlated.}
Since these two components are uncorrelated, the \emph{exposure function} $h:[0,1]\to\R_+$ represents the informativeness of agent $i$'s signal $S^*(i)$ about the state $\theta$.
On the other hand, we allow different agents' idiosyncratic components to be correlated with one another, and this correlation is captured by the \emph{idiosyncratic kernel} $g:[0,1]^2\to\R$.
Any canonical information structure is therefore summarized by two functions $h$ and $g$ satisfying Lemma~\ref{lem_canonical_psd}.
In light of this, we refer to the pair $(h,g)$ itself as a canonical information structure.

The next result shows that, for any Gaussian BCE under a given payoff structure, there exists a canonical information structure that, when paired with the same payoff structure, yields that Gaussian BCE as an induced outcome.
As a mnemonic, hereafter we use the superscript ``$*$'' to denote objects associated with a canonical information structure.

\begin{theorem} \label{thm_equiv}
Given a Gaussian BCE $\calX$ under a payoff structure $(w,b,c)$, there exist a canonical information structure $\calP^*$ and an affine equilibrium $\bar{\varphi}^*$ of the LQG game $\calG^* = (w,b,c,\calP^*)$ such that $\calX$ arises as the induced outcome of $(\calG^*,\bar{\varphi}^*)$.
\end{theorem}

Taken together, Lemma~\ref{lem_bce} and Theorem~\ref{thm_equiv} establish an equivalence between BNE and BCE.
That is, fixing a payoff structure $(w,b,c)$, any Gaussian process forms a Gaussian BCE if and only if it arises as the induced outcome of an affine equilibrium under some information structure.
We remark that this equivalence is not entirely new in itself but rather builds on related observations appearing in a sequence of works such as \cite{bm2013,bm2016,bhm2017}.

What we highlight here is its reinterpretation from an identification perspective.
Specifically, while Lemma~\ref{lem_bce} shows that the induced outcome under any information structure forms a Gaussian BCE, Theorem~\ref{thm_equiv} shows that, in reproducing such a Gaussian BCE, attention can be restricted to the class of canonical information structures.
This observation yields the following corollary.

\begin{corollary}
If a game $\calG = (w,b,c,\calP)$ has an affine equilibrium $\bar{\varphi}$, then there exist some canonical information structure $\calP^*$ and an affine equilibrium $\bar{\varphi}^*$ in $\calG^* = (w,b,c,\calP^*)$ such that the induced outcomes of $(\calG,\bar{\varphi})$ and $(\calG^*,\bar{\varphi}^*)$ coincide.
\end{corollary}

From a modeling perspective, this corollary implies that the modeler incurs no loss of generality in explaining any possible equilibrium outcome by assuming as if agents' signals are generated from some canonical information structure.
From an identification perspective, however, this points to the limit of fully identifying information structures.
That is, even when the strongest form of data, i.e., the entire joint distribution of $\calX$, is available to the econometrician, and even when she knows the fixed payoff structure $(w,b,c)$, any information structure is observationally equivalent to some canonical information structure in the sense that they induce the same outcome.

\begin{example}[Continued]
Consider the now-standard information structure $\calP$ as in \cite{ms2002}, where each agent receives two signals: a conditionally i.i.d.~private signal $S_1(i)= \theta+\alpha_1\eta_1(i)$ and a public signal $S_2\equiv S_2(i)=\theta+\alpha_2\eta_2$, where $\alpha_1,\alpha_2 > 0$ and $\eta_1(i)$, $\eta_2$, and $\theta$ are mutually independent standard normal random variables.
The unique affine equilibrium exhibits symmetry across firms and takes the form:\footnote{This way of writing the affine equilibrium differs slightly from \eqref{eq_a}, since $\phi_1$ and $\phi_2$ here capture responses to the raw signals $S_1(i)$ and $S_2$, rather than to standardized signals.}
\[
X(i) = \phi_1 S_1(i) + \phi_2 S_2 = \underbrace{\qty(\phi_1 + \phi_2)}_{\text{exposure to $\theta$}} \cdot \; \theta + \underbrace{\qty(\alpha_1\phi_1 \eta_1(i) + \alpha_2\phi_2 \eta_2)}_{\text{noise term}}.
\]
This outcome can be replicated by constructing a canonical information structure $\calP^* = (h,g)$ such that $h$ is proportional to the coefficient on $\theta$ in the above expression and $\varepsilon(i)$ accounts for the residual noise term.
The constructive proof of Theorem~\ref{thm_equiv} reveals that
\[
h(i) = \frac{\phi_1 + \phi_2}{D}, \quad
g(i,i) = \qty(\frac{\alpha_1\phi_1 + \alpha_2\phi_2}{D})^2, \quad 
g(i,j) = \qty(\frac{\alpha_2\phi_2}{D})^2,
\]
where $D = \sqrt{(\phi_1+\phi_2)^2 + \alpha_1^2\phi_1^2 + \alpha_2^2 \phi_2^2}$ represents the standard deviation of $X(i)$, and division by $D$ normalizes the variance of $S^*(i)$ to one.
It is worth noting that the construction of the observationally equivalent canonical information structure $\calP^*$ relies on the underlying payoff structure through the determination of $(\phi_1,\phi_2)$.
In particular, when $\tau$ changes, the induced outcome under the previously constructed $\calP^*$ need not coincide with that under $\calP$ anymore.
\end{example}

\begin{comment}
\begin{equation} \label{eq_bne_tax2}
\varphi_1 = \frac{(1-\tau)\alpha_1(1-\alpha_2^2)}{1+(1-\tau)\alpha_1^2-(2-\tau)\alpha_1^2\alpha_2^2},
\quad \varphi_2 = \frac{(1-\tau)\alpha_2(1-\alpha_1^2)} {(2-\tau)(1+(1-\tau)\alpha_1^2-(2-\tau)\alpha_1^2\alpha_2^2)}.
\end{equation}
\end{comment}

%%%%%%%%%%%%%%%%%%%%%%%%%%%%%%%%%%%%%%%%%%%%%%%%%%%%%%%%%%%%%%%%%%%%%%%%%%%%%%%%%%%%%%%%%%
%%%%%%%%%%%%%%%%%%%%%%%%%%%%%%%%%%%%%%%%%%%%%%%%%%%%%%%%%%%%%%%%%%%%%%%%%%%%%%%%%%%%%%%%%%

\subsection{Identification in the Canonical Class} \label{sec_cano}

In this section, we seek an affirmative resolution to our identification exercise by hypothesizing that the underlying information structure is canonical.
In other words, we treat the observed Gaussian BCE $\calX$ as if it were generated from the following statistical model, which follows from the affinity of strategies and the canonicality of signals:
\begin{align} \label{eq_id}
X(i) &= \varphi_0^* (i) + \varphi_1^*(i) \cdot \qty(S^*(i) - \E\qty[S^*(i)]) \nonumber \\
&= \varphi_0^*(i) + \varphi_1^*(i)h(i) \cdot \qty(\theta - \mu_\theta) + \varphi_1^*(i) \cdot \varepsilon(i), \quad \text{where} \quad \varepsilon (\cdot) \sim {\rm GP}(0,g).
\end{align}
Under this hypothesis, we consider the identifiability of the associated exposure function $h$ and idiosyncratic kernel $g$, as well as the intercept $\varphi_0^*$ and slope $\varphi_1^*$ of the corresponding affine equilibrium.

The following regularity conditions are considered to obtain sharp identification results.

\begin{assumption} \label{asm_var}
The observed Gaussian BCE $\calX$ satisfies $\Var\qty[X(i)] > 0$ for every $i \in [0,1]$.
\end{assumption}

\begin{assumption} \label{asm_beta}
Under two conditioning state realizations $\hat{\theta}_1, \hat{\theta}_2 \in \R$, the observed Gaussian BCE $\calX$ satisfies $\E[X(i) \mid \theta = \hat{\theta}_1] \neq \E[X(i) \mid \theta = \hat{\theta}_2]$ for every $i \in [0,1]$.
\end{assumption}

These assumptions postulate variation in agents' actions and serve as minimal regularity conditions for identification.
We note that Assumption~\ref{asm_beta} is stronger than Assumption~\ref{asm_var} since the law of total variance implies
\begin{equation} \label{law_var}
\Var\qty[X(i)] = \Var\qty[X(i) \mid \theta] + \Var\qty[\E\qty[X(i) \mid \theta]].
\end{equation}
Namely, the total variation in $X(i)$ decomposes into two parts: the variation in $X(i)$ that remains after conditioning on $\theta$, and the variation in the conditional mean of $X(i)$ attributed to variation in the state.
Assumption~\ref{asm_var} requires at least one of these components to be positive.
By contrast, Assumption~\ref{asm_beta} requires the second component to be positive, so that agent $i$'s action varies on average across different state realizations.

The key idea behind identification comes from the reduced-form representation \eqref{eq_id}.
Viewed as a simple linear regression model, the orthogonality between the state $\theta$ and the noise term $\varepsilon(i)$ implies that the coefficients $\varphi_0^*(i)$ and $\varphi_1^*(i)h(i)$ are identified by regressing $X(i)$ on $\theta$.
Specifically, conditional observability allows the econometrician to compute the conditional action mean $\hat{X}_t(i)=\E[X(i)\mid \theta=\hat{\theta}_t]$ for each $i$ and $t\in\{1,2\}$.
The coefficient $\varphi_1^*(i)h(i)$ is then recovered as the slope of the conditional action mean with respect to the state:
\begin{equation} \label{eq_beta}
\hat{\beta}(i)
\equiv \frac{\hat{X}_1(i) - \hat{X}_2(i)}{\hat{\theta}_1 - \hat{\theta}_2}
= \frac{\E[X(i) \mid \theta = \hat{\theta}_1] - \E[X(i) \mid \theta = \hat{\theta}_2]}{\hat{\theta}_1 - \hat{\theta}_2}.
\end{equation}
In words, $\hat{\beta}(i)$ represents how much agent $i$ changes her action when the state realization changes by one unit.
Crucially, the computation of $\hat{\beta}(i)$ requires only two conditioning state realizations, where we benefit from the linear-quadratic structure of the basic game.

To identify the correlation structure of idiosyncratic noise, the relevant statistic is the conditional action covariance given the state:
\begin{equation} \label{eq_gamma}
\hat{\gamma}(i,j) \equiv \Cov\qty[X(i), X(j) \mid \theta = \hat{\theta}_1]
= \E \qty[(X(i) - \hat{X}_1(i))(X(j) - \hat{X}_1(j)) \mid \theta = \hat{\theta}_1].
\end{equation}
This statistic $\hat{\gamma}(i,j)$ represents the covariance between agents $i$ and $j$'s actions conditional on the state realization, and hence captures variation coming solely from the idiosyncratic noise terms.
It is worth noting that $\hat{\gamma}(i,j)$ does not depend on the particular state realization, where we benefit from the property of Gaussian distributions that conditional covariances do not depend on the realized value of the conditioning variable.

Incidentally, $\hat{\beta}(i)$ and $\hat{\gamma}(i,j)$ correspond to the cross terms $\varphi_1^*(i)h(i)$ and $\varphi_1^*(i)\varphi_1^*(j)g(i,j)$ in \eqref{eq_id}.
Also, Assumption~\ref{asm_var} ensures that at least one of $\hat{\beta}(i)$ or $\hat{\gamma}(i,i)$ is nonzero, whereas Assumption~\ref{asm_beta} ensures that $\hat{\beta}(i) \neq 0$.
It is worth noting that these key statistics are recovered directly from the data based on the reduced-form model \eqref{eq_id}, rather than being determined through the structural model \eqref{int_bne}.
This explains why the econometrician can remain agnostic about the underlying payoff structure to recover these statistics.
The final step is to separate $h$ and $g$ from the identified cross terms by removing the equilibrium slope $\varphi_1^*$.
This is accomplished by using condition~\eqref{canonical_psd1} in Lemma~\ref{lem_canonical_psd}, which follows from the normalization of the canonical signal variance.

In the statement of the following theorem, we write $\sgn(r)$ for the sign of a real number $r$, i.e., $\sgn(r)=1$ if $r>0$, $\sgn(r)=-1$ if $r<0$, and $\sgn(r)=0$ if $r=0$.

\begin{theorem} \label{thm_id}
Given a Gaussian BCE $\calX$ under some unknown payoff structure, the following identification results hold under conditional observability.
\begin{enumerate}[\rm (i)]
\item \label{thm_id1}
The equilibrium intercept is identified as $\varphi_0^*(i) = \hat{X}_1(i) - \hat{\beta}(i)(\hat{\theta}_1 - \mu_\theta)$.

\item \label{thm_id2}
Under Assumption~\ref{asm_var}, $h(i)$ and $g(i,i)$ are identified as
\[
h(i) = \frac{|\hat{\beta}(i)|}{\sqrt{\sigma^2_\theta \hat{\beta}(i)^2 + \hat{\gamma}(i,i)}}, \quad
g(i,i) = \frac{\hat{\gamma}(i,i)}{\sigma^2_\theta \hat{\beta}(i)^2 + \hat{\gamma}(i,i)}.
\]
Moreover, the absolute values of $\varphi_1^*(i)$ and $g(i,j)$ for $i \neq j$ are identified, with formulae obtained by taking the absolute values of the corresponding expressions in Part~\eqref{thm_id3}.

\item \label{thm_id3}
Under Assumption~\ref{asm_beta}, $\varphi_1^*(i)$ and $g(i,j)$ for $i \neq j$ are identified as
\[
\varphi_1^*(i) = \sgn(\hat{\beta}(i))\sqrt{\sigma^2_\theta \hat{\beta}(i)^2 + \hat{\gamma}(i,i)}, \quad
g(i,j) = \frac{\sgn(\hat{\beta}(i)\hat{\beta}(j)) \cdot \hat{\gamma}(i,j)}{\sqrt{(\sigma^2_\theta \hat{\beta}(i)^2 + \hat{\gamma}(i,i))(\sigma^2_\theta \hat{\beta}(j)^2 + \hat{\gamma}(j,j))}}.
\]
\end{enumerate}
\end{theorem}

Together with the preceding result, this theorem implies that canonical information structures are parsimonious in two senses.
On the one hand, Theorem~\ref{thm_equiv} shows that this class can rationalize any Gaussian BCE that may arise under some information structure.
On the other hand, Theorem~\ref{thm_id} shows that, within this class, the information structure is uniquely determined under mild conditions on observables.
Specifically, the point identification of $(h,g,\varphi_0^*,\varphi_1^*)$ obtains under Assumption~\ref{asm_beta}, which requires each agent's action to be responsive to state realizations.

Though this assumption is arguably weak, Theorem~\ref{thm_id} still provides partial identification results even without it.
First, the equilibrium intercept $\varphi_0^*$ is fully identified without any assumption.
Indeed, noticing that $\varphi_0^*(i)$ represents the conditional mean action evaluated at the prior mean $\mu_\theta$, the identified formula has a simple interpretation:
$\varphi_0^*(i)$ is recovered by taking the conditional mean action $\hat{X}_1(i)$ at $\hat{\theta}_1$ and adjusting it back by the deviation $\hat{\theta}_1-\mu_\theta$, multiplied by the slope $\hat{\beta}(i)$. %that measures the change in the conditional mean action per unit change in the state.

When Assumption~\ref{asm_var} is violated, \eqref{eq_id} implies that $\varphi_1^*(i)=0$ under any canonical information structure consistent with the observed Gaussian BCE $\calX$.
In this case, identification of other parameters is hopeless, since agent $i$'s action is constant across signal realizations, and the data contain no information that separates the state component of the signal from the noise component.
By contrast, when Assumption~\ref{asm_var} holds, these two components can be separated using the variation in actions across state realizations, as captured by $\hat{\beta}(i)$, and the residual conditional variation orthogonal to the state, as captured by $\hat{\gamma}(i,i)$.
In particular, since both $h(i)$ and $g(i,i)$ are nonnegative by definition, their values are exactly identified.
The identified formulae suggest that $h(i)$ and $g(i,i)$ are increasing in $|\hat{\beta}(i)|$ and $\hat{\gamma}(i,i)$, respecitvely.

Assumption~\ref{asm_var} still leaves open the possibility that $h(i)=0$, in which case agent $i$'s signal is totally uninformative about the state.
Then any variation in agent $i$'s action must stem from correlations between her signal and others' signals.
In this case, $|g(i,j)|$ can still be identified from the conditional action covariance orthogonal to the state, as captured by $\hat{\gamma}(i,j)$, but its sign remains ambiguous because there is no hypothesis that disciplines the sign of $g(i,j)$.
Under Assumption~\ref{asm_beta}, however, we have $h(i)>0$, which allows us to recover the equilibrium slope $\varphi_1^*(i)$ from $\hat{\beta}(i)$, including its sign.
Moreover, once $\varphi_1^*(i)$ is identified for all agents, we know precisely how agents react to their signals, and the idiosyncratic kernel $g(i,j)$ is also fully identified from the conditional action covariance $\hat{\gamma}(i,j)$.

To sum up, the canonical information structure is identified by using the key statistics $\hat{\beta}$ and $\hat{\gamma}$.
It is worth emphasizing that computing these statistics does not fully require observing the exact state realizations: $\hat{\beta}(i)$ is obtained by dividing the change in conditional action means, $\hat{X}_1(i)-\hat{X}_2(i)$, by the change in state realizations, $\hat{\theta}_1-\hat{\theta}_2$, and hence requires only the state difference rather than the exact values of $\hat{\theta}_1$ and $\hat{\theta}_2$ themselves.
Moreover, $\hat{\gamma}(i,j)$ requires no information about the state realization, since conditional covariances of Gaussian random variables do not depend on the realized value of the conditioning variable.
Therefore, although conditional observability postulates that state realizations are observable, the identification of $h$, $g$, and $\varphi_1^*$ requires only the difference $\hat{\theta}_1-\hat{\theta}_2$.\footnote{The exception is the equilibrium intercept $\varphi_0^*$, whose formula employs the deviation of $\hat{\theta}_1$ from $\mu_\theta$.
Since $\mu_\theta$ is known to the econometrician, this requires knowing the exact value of $\hat{\theta}_1$.
However, since $\varphi_0^*$ does not depend on the information structure and is instead determined by the payoff structure through the first line of \eqref{int_bne}, if the econometrician is assumed to know the payoff structure, then $\varphi_0^*$ can be recovered even without any data.}

%%%%%%%%%%%%%%%%%%%%%%%%%%%%%%%%%%%%%%%%%%%%%%%%%%%%%%%%%%%%%%%%%%%%%%%%%%%%%%%%%%%%%%%%%%
%%%%%%%%%%%%%%%%%%%%%%%%%%%%%%%%%%%%%%%%%%%%%%%%%%%%%%%%%%%%%%%%%%%%%%%%%%%%%%%%%%%%%%%%%%

\subsection{(Partial) Identification of Variance Reduction} \label{sec_var}

Our positive identification result relies on the as-if assumption that the information structure $\calP^*$ rationalizing the observed outcome belongs to the canonical class.
A remaining issue is that the ``true'' information structure $\calP$ may differ from $\calP^*$, even though the two are observationally equivalent under a given payoff structure.
In light of this, we now address how the identified canonical information structure $\calP^*$ is related to a general information structure $\calP$ that is observationally equivalent to it.

This question is important from a counterfactual viewpoint.
As long as we are concerned only with the equilibrium outcome under a fixed payoff structure, the distinction between $\calP$ and $\calP^*$ is immaterial, since they induce the same outcome by definition.
However, observational equivalence depends on the payoff structure.
In other words, once the payoff structure changes, two previously observationally equivalent information structures need not continue to induce the same outcome.
For example, when the authority changes the tax rate in the market competition example, it effectively changes the payoff structure of the game.
Then an information structure $\calP$ and its observationally equivalent canonical counterpart $\calP^*$ under the initial tax rate need not remain observationally equivalent after the policy change.
To anticipate policy effects on equilibrium outcomes, therefore, it is useful to understand how the identified canonical information structure is related to a general information structure that may be the true one.

To that end, our starting point is a rather trivial observation.
Each agent $i$'s signal $S(i)$ under any information structure contains at least as much information as the agent's action $X(i)$, simply because $X(i)$ must be measurable with respect to $S(i)$.
As such, the canonical signal $S^*(i)$ contains the minimal amount of information needed to implement $X(i)$, provided that the relevant non-degeneracy condition is satisfied.
Specifically, under Assumption~\ref{asm_var}, the equilibrium slope $\varphi_1^*(i)$ is nonzero, and thus \eqref{eq_id} implies a one-to-one relationship between $S^*(i)$ and $X(i)$.
Consequently, when a general information structure $\calP$ and a canonical information structure $\calP^*$ are observationally equivalent, the canonical signal $S^*(i)$ cannot contain more information than the general signal $S(i)$.

To formalize this idea, we measure how informative agent $i$'s signal is about a random variable $Z$, such as the state $\theta$, based on variance reduction.
Letting $S(i)$ be agent $i$'s signal from an information structure $\calP$, we define the statistic $r_i(Z \mid \calP)$ as
\begin{equation} \label{eq_def_var}
r_i(Z \mid \calP) \coloneqq \frac{\Var\qty[Z] - \E\qty[\Var\qty[Z \mid S(i)]]}{\Var\qty[Z]}.
\end{equation}
In words, $r_i(Z \mid \calP)$ measures the fraction of the variance of $Z$ that can be reduced by conditioning on agent $i$'s signal generated from $\calP$.
It is a natural measure of informativeness, especially in the present Gaussian environment, where the conditional variance does not depend on the signal realization.
Moreover, by viewing each agent's signal as a statistical experiment about $Z$, the result of \cite{hansen1974} implies that the informativeness ranking induced by variance reduction coincides with the ranking of \cite{blackwell1951}, which provides further rationale for using variance reduction to measure the informativeness of agents' signals.

\begin{lemma} \label{lem_var}
Suppose that two LQG games $(w,b,c,\calP)$ and $(w,b,c,\calP^*)$, together with their respective affine equilibria, induce the same outcome $\calX$, where $\calP$ is an arbitrary information structure and $\calP^*$ is a canonical information structure.
If $\Var\qty[X(i)]>0$, then $r_i(Z\mid\calP) \ge r_i(Z\mid\calP^*)$ holds for any random variable $Z$ with finite variance.
\end{lemma}

This lemma follows from two observations.
First, as discussed above, each agent's action must be measurable with respect to the agent's signal, while the converse is also true when the signal is canonical and $\Var\qty[X(i)]>0$.
Second, the law of total variance implies that conditioning on a more informative random variable weakly reduces conditional variance.
Combining these, agent $i$'s signal under $\calP$ must induce a weakly larger variance reduction of any random variable $Z$ than the canonical signal under $\calP^*$.\footnote{Lemma~\ref{lem_var} can be generalized in two directions, although these are not substantial for our subsequent analysis. First, the two information structures in Lemma~\ref{lem_var} can be coupled with possibly different payoff structures, as long as the two LQG games induce the same outcome. Second, a similar comparison of variance reduction holds not only for a single agent but also for any finite set of agents $N \subseteq [0,1]$. Specifically, defining $r_N(Z \mid \calP)$ by using $\{S(i)\}_{i \in N}$ as the conditioning event in \eqref{eq_def_var}, we have $r_N(Z \mid \calP) \ge r_N(Z \mid \calP^*)$, provided that $\Var\qty[X(i)]>0$ for all $i \in N$.}

Coupled with the identification result for canonical information structures, Lemma~\ref{lem_var} allows us to identify a lower bound on the state variance reduction for each agent under a general information structure.
Specifically, taking the state $\theta$ in the position of $Z$, we have
\begin{align*}
r_i\qty(\theta \mid \calP)
&= \frac{\sigma^2_\theta - \Var\qty[\theta \mid S(i)]}{\sigma^2_\theta}
= \frac{\sigma^2_\theta - \qty(\sigma^2_\theta - K_\theta(i)^\top K(i,i)^{-1} K_\theta(i))}{\sigma^2_\theta} \\
&= \qty(\frac{K(i,i)^{-\frac{1}{2}}K_\theta(i)}{\sigma_\theta})^\top \qty(\frac{K(i,i)^{-\frac{1}{2}}K_\theta(i)}{\sigma_\theta})
= P_\theta(i)^\top P_\theta(i)
= \|P_\theta(i)\|^2.
\end{align*}
In particular, the last expression equals $\sigma^2_\theta h(i)^2$ for a canonical information structure $\calP^*$.
Then substituting the identified formula for $h(i)$ from Theorem~\ref{thm_id} yields a lower bound on $r_i(\theta \mid \calP)$ that holds across all information structures $\calP$ that are observationally equivalent to $\calP^*$.
The next proposition reports this bound and the analogous bound for $r_i(X(j) \mid \calP)$.\footnote{The proof of Proposition~\ref{prop_var} shows that $r_i(X(j) \mid \calP) = \frac{\|P(i,j)\varphi(j)\|^2}{\Var\qty[X(j)]}$ holds in an affine equilibrium under $\calP$.}

\begin{proposition} \label{prop_var}
Let $\calX$ be an induced outcome of an arbitrary LQG game $\calG=(w,b,c,\calP)$ and its affine equilibrium $\bar{\varphi}$.
Given $\hat{\beta}$ and $\hat{\gamma}$ identified as in \eqref{eq_beta} and \eqref{eq_gamma} under conditional observability, if $\Var[X(i)] > 0$, then
\begin{gather}
r_i\qty(\theta \mid \calP) = \|P_\theta(i)\|^2 \ge \frac{\sigma^2_\theta \hat{\beta}(i)^2}{\sigma^2_\theta \hat{\beta}(i)^2 + \hat{\gamma}(i,i)}, \label{id_var1} \\
r_i\qty(X(j) \mid \calP) \propto \|P(i,j)\varphi(j)\|^2 \ge \frac{\qty(\sigma^2_\theta \hat{\beta}(i) \hat{\beta}(j) + \hat{\gamma}(i,j))^2}{\sigma^2_\theta \hat{\beta}(i)^2 + \hat{\gamma}(i,i)}. \label{id_var2}
\end{gather}
In particular, these inequalities hold with equality if $\calP$ is one-dimensional, i.e., $\bar{d}=1$.
\end{proposition}

The first inequality \eqref{id_var1} delivers a lower bound on state variance reduction that holds across all information structures $\calP$ that rationalize the observed Gaussian BCE.
A useful feature of this inequality is that its left-hand side is given by the primitive $\|P_\theta(i)\|^2$, which describes the strength of the correlation between agent $i$'s signal and the state.
Thus, the inequality directly speaks to a property of the information structure itself by identifying the minimum amount of information about the state that each agent must possess.
As discussed in Section~\ref{sec_actvar}, this property is useful for robust counterfactual prediction of aggregate action volatility following the change in payoff structures.

The second inequality \eqref{id_var2} delivers a lower bound on how much agent $i$ would have predicted agent $j$'s action in the observed equilibrium.
Although this result is less directly portable across counterfactual payoff structures because its left-hand side still contains agent $j$'s equilibrium slope, it can have some implications when coupled with the continuity result in Theorem~\ref{thm_well}.
Namely, since $\varphi$ changes continuously in response to changes in the payoff structure, if the payoff perturbation is sufficiently small, then a lower bound close to the right-hand side of \eqref{id_var2} continues to bound how much agent $i$ can predict agent $j$'s action under the changed payoff structure.

The equality statement shows that the one-dimensionality of the canonical information structure is what allows the canonical class to characterize the lower bound on variance reduction.
More generally, the next proposition provides a necessary and sufficient condition for there to be no gap between state variance reduction under an arbitrary information structure $\calP$ and the identified lower bound in \eqref{id_var1}.

\begin{proposition} \label{prop_var_eq}
Let $\calX$ be an induced outcome of an arbitrary LQG game $\calG=(w,b,c,\calP)$ and its affine equilibrium $\varphi$.
Given that $\Var[X(i)] > 0$, \eqref{id_var1} holds with equality if and only if $\varphi(i)$ and $P_\theta(i)$ are linearly dependent.
In particular, this condition is satisfied if at least one of $b(j)$, $w(i,j)$, $P_\theta(j)$, or $P(i,j)$ is zero for almost every $j \in [0,1] \setminus \{i\}$.
\end{proposition}

The equality condition is that agent $i$'s equilibrium slope is proportional to the signal-state correlation vector.
This condition is automatically satisfied when $\bar{d}=1$, provided that $\varphi(i)\neq 0$.
For multi-dimensional information structures, however, a gap may arise between the identified lower bound and the actual state variance reduction under $\calP$.

The source of this gap can be attributed to strategic interactions, which make agents care not only about the fundamental state but also about others' actions.
As a result, equilibrium strategies combine each agent's estimate of the state with her estimate of strategically relevant uncertainty, which complicates the econometrician's identification problem.
This can be seen directly from the second-moment restriction in \eqref{int_bne}, where the signal-state covariance vector $P_\theta(i)$ appears as the constant term on the right-hand side, while there is also the first integral term capturing strategic interaction with others.
If this integral term vanishes, then the linear dependence between $\varphi(i)$ and $P_\theta(i)$ is automatically satisfied.
This happens, for example, when $w(i,\cdot)$ is identically zero.
In this case, strategic interaction is absent, and equilibrium actions fully reveal each agent's best estimate of the fundamental state.
Several other cases in which this integral term vanishes are also collected in Proposition~\ref{prop_var_eq}.

%To illustrate the confounding issue, suppose that the econometrician observes a high correlation between agent $i$'s action $X(i)$ and the state $\theta$. This correlation may reflect that agent $i$ has a highly accurate signal about $\theta$. But it may also reflect that agent $i$'s signal is highly predictive of another agent $j$'s signal and action, while agent $j$'s action is highly correlated with the state. In this case, the observed correlation between $X(i)$ and $\theta$ is generated through agent $i$'s strategic motive to respond to agent $j$, together with agent $j$'s fundamental motive to adapt to the state.

\subsection{Robust Prediction of Action Volatility}
\label{sec_actvar}

We now showcase how the identified lower bound on variance reduction can be translated into a bound on equilibrium action volatility.
Given an LQG game $(w,b,c,\calP)$ and an affine equilibrium $X(\cdot)$ represented by $\bar{\varphi}=(\varphi_0,\varphi)$, recall from Lemma~\ref{lem_affine} that the action variance of agent $i$ is given by $\Var\qty[X(i)]=\|\varphi(i)\|^2$.
The slope $\varphi(i)$ is determined by the equilibrium condition \eqref{int_bne}, and hence depends on both the payoff structure $(w,b,c)$ and the information structure $\calP$.
The question we ask is how the aggregate action volatility $\int_0^1 \Var\qty[X(i)] \d i$ varies across payoff structures, given an information structure $\calP$ that is unknown but whose traits are partially identified by the previous results.

Our focus on aggregate action volatility is motivated by its direct welfare interpretation in a variety of LQG games, as highlighted in the literature on the social value of public information \citep{ms2002, ap2007,ui2015}.
Specifically, consider the payoff function
\[
u_i\qty(x(i), y(i), \theta) = \qty(y(i)+b(i)\theta+c(i)) \cdot x(i) - \frac{x(i)^2}{2},
\]
which corresponds to the case of \eqref{eq_q} where only the first row and first column of $U(i)$ can be nonzero, so that payoff externalities enter only through terms relevant to agent $i$'s action $x(i)$.
The best response for this payoff is precisely given by \eqref{eq_l}.
Then, using the law of iterated expectations, the equilibrium expected payoff of agent $i$ is calculated as
\[
\E\qty[u_i(X(i),Y(i),\theta)]
= \E\biggl[\underbrace{\E_i \qty[Y(i)+b(i)\theta+c(i)]}_{=X(i) \text{ by \eqref{eq_l}}} \; \cdot \; X(i) - \frac{X(i)^2}{2}\biggr]
= \frac{1}{2}\E\qty|X(i)|^2.
\]
This implies that the expected payoff is proportional to $\Var\qty[X(i)]$, up to a constant component independent of information.

To obtain a transparent bound on action volatility, we employ the simplifying assumption that the LQG game is \emph{payoff-symmetric}, i.e., $w(i,j)\equiv a$ for almost every $(i,j) \in [0,1]^2$, where $a<1$ as in \cite{ap2007} and \cite{bm2013}.
Although the payoff weight function is symmetric, we still allow for other payoff parameters, $b(i)$ and $c(i)$, and the information structure $\calP$ to be heterogeneous across agents.
The next proposition derives lower bounds on aggregate action volatility in terms of aggregate state variance resuction, whose lower bound can in turn be identified from Proposition~\ref{prop_var}.

\begin{proposition} \label{prop_actvar}
Let $\bar{\varphi}$ be an affine equilibrium in a payoff-symmetric LQG game $(w,b,c,\calP)$ with $w(i,j) \equiv a<1$ for almost every $(i,j) \in [0,1]^2$.
Denote by $q(i) = \sigma_\theta b(i)P_\theta(i) \in \R^{\bar{d}}$ for each $i \in [0,1]$.
\begin{enumerate}[\rm (i)]
\item If $0\le a<1$, then
\begin{equation} \label{eq_actvar1}
\int_0^1 \|\varphi(i)\|^2 \d i \ge \int_0^1 \|q(i)\|^2 \d i.
\end{equation}
The equality holds if $a=0$ or $\int_0^1 P(i,j)q(j) \d j = 0$ for almost every $i \in [0,1]$.
\item If $a<0$, then
\begin{equation} \label{eq_actvar2}
\int_0^1 \|\varphi(i)\|^2 \d i \ge \qty(\frac{1}{1-a})^2 \int_0^1 \|q(i)\|^2 \d i.
\end{equation}
The equality holds if $\int_0^1 P(i,j)q(j) \d j = q(i)$ for almost every $i \in [0,1]$.
\end{enumerate}
\end{proposition}

The proposition shows that aggregate action volatility cannot be arbitrarily small once agents have informative signals about the state.
When the game exhibits strategic complementarity, i.e., $0\le a<1$, agents' direct responsiveness to the state already implies the lower bound in \eqref{eq_actvar1}.
By contrast, when the game exhibits strategic substitutability, i.e., $a<0$, the lower bound is attenuated by the factor $(1-a)^{-2}$, as in \eqref{eq_actvar2}, reflecting the dampening force of substitutability \citep{bramoulle2014}.
These bounds are obtained by exploiting some intrinsic properties of $P$, which represents normalized covariance matrices of agents' signals.\footnote{These properties are certain forms of boundedness, symmetry, and positive semidefiniteness, as formally reported in Lemma~\ref{lem_P}; see also the equation~\eqref{eq_bound_P} below and the discussion following it.
\cite{miyashita_ui_lqgd} make use of these properties to address information design in an LQG environment.
Within the context of information design, our exercise in this section can be understood as characterizing the minimized value of aggregate action volatility obtained by manipulating $P$, which governs correlations across different agents' signals, subject to the constraint that each agent's signal contains at least as much information about the state as that under the identified canonical information structure.}

Combining Proposition~\ref{prop_actvar} with Proposition~\ref{prop_var} yields a lower bound on aggregate action volatility across different payoff structures.
In the context of our running example, this implies that the authority can learn about guaranteed levels of PS and TR following a change in the tax rate.

\begin{example}[Continued]
Suppose that, under the initial tax rate $\tau_0$, the authority has access to data satisfying conditional observability.
The authority can then identify the canonical information structure rationalizing the observed market outcome under $\tau_0$.
By Proposition~\ref{prop_var}, this also allows the authority to identify a lower bound on each firm's state variance reduction.
Aggregating across firms gives
\[
\int_0^1 \|P_\theta(i)\|^2 \d i \ge \hat{\rho} \equiv \int_0^1 \frac{\hat{\beta}(i)^2}{\hat{\beta}(i)^2 + \hat{\gamma}(i,i)} \d i.
\]
Here, $\hat{\rho}$ represents the identified lower bound on average state variance reduction across firms.
The equality holds if the true information structure is canonical.

At the same time, the authority's potential objectives, such as PS and TR, can be represented as functions of aggregate action volatility.
Specifically, \eqref{eq_PS} and \eqref{eq_TR} express PS and TR in terms of aggregate action volatility, while Proposition~\ref{prop_actvar} bounds it from below by state variance reduction.
Combining these bounds with the identified lower bound $\hat{\rho}$ yields the following robust guarantees that hold for any $\tau$:
\[
{\rm PS} \ge \qty(\frac{1-\tau}{2-\tau})^2 \cdot \hat{\rho}
\quad \text{and} \quad
{\rm TR} \ge \frac{\tau(1-\tau)}{(2-\tau)^2} \cdot \hat{\rho}.
\]
These robust guarantees are useful for counterfactual analysis after a change in the tax rate.
As $\tau$ changes, PS and TR vary through the payoff structure, while the identified value of $\hat{\rho}$ serve as a constant fraction indicating how sensitively PS and TR would respond to changes in the tax rate at least.
\end{example}

Lastly, we discuss the equality conditions in Proposition~\ref{prop_actvar}.
The bound is tight when $a=0$, so that there is no strategic interaction and the game reduces to a prediction problem in which each agent is concerned only with matching her action to the state.\footnote{See, e.g., the application in Section~8.2 of \cite{smolinyamashita2026} for an example of such a prediction game in an LQG environment.}
Together with the equality condition in Proposition~\ref{prop_var_eq}, this degenerate case allows the econometrician to perfectly infer action volatility based on state variance reduction, which can be identified without any gap.
Therefore, the absence of strategic interaction substantially simplifies the identification exercise.

When strategic interaction is present, the equality condition requires a particular alignment between the signal correlation structure $P(i,j)$ and the signal-state component $q(i)=\sigma_\theta b(i)P_\theta(i)$.
This point can be illustrated from the fact that $P:[0,1]^2 \to \R^{\bar{d}\times \bar{d}}$ behaves like a correlation matrix for a continuum of random vectors.
Specifically, Lemma~\ref{lem_P} in Appendix~\ref{app_pf} implies that
\begin{equation} \label{eq_bound_P}
0 \le \int_0^1 \int_0^1 \psi(i)^\top P(i,j) \psi(j) \d j \d i \le \int_0^1 \|\psi(i)\|^2 \d i,
\end{equation}
for all square-integrable functions $\psi:[0,1]\to\R^{\bar d}$.\footnote{The first inequality says that $P$ is positive semidefinite, while the second says that the numerical radius of $P$ is bounded by one. In addition, we can show that $P$ is self-adjoint and its eigenvalues are contained in $[0,1]$. These properties parallel symmetry, positive semidefiniteness, and boundedness of a correlation matrix; see, e.g., Chapter~11 of \cite{brezis} for further reference.}

Each equality condition in Proposition~\ref{prop_actvar} describes the least favorable correlation structure for generating equilibrium action volatility among those satisfying \eqref{eq_bound_P}.
Specifically, in the complementarity case, equality in \eqref{eq_actvar1} requires the lower bound in \eqref{eq_bound_P} to be attained with $q$.
Intuitively, this means that agents' state-relevant information are uncorrelated through $P(i,j)$, which eliminates the channel through which strategic complementarity amplifies action volatility.
By contrast, in the substitutability case, equality in \eqref{eq_actvar2} instead requires the upper bound in \eqref{eq_bound_P} to be attained with $q$.
Intuitively, this means that agents' state-relevant information are maximally correlated through $P(i,j)$, which allows them to infer others' actions precisely.
As a result, the dampening force of strategic substitutability is strengthened, which in turn suppresses action volatility.

%%%%%%%%%%%%%%%%%%%%%%%%%%%%%%%%%%%%%%%%%%%%%%%%%%%%%%%%%%%%%%%%%%%%%%%%%%%%%%%%%%%%%%%%%%
%%%%%%%%%%%%%%%%%%%%%%%%%%%%%%%%%%%%%%%%%%%%%%%%%%%%%%%%%%%%%%%%%%%%%%%%%%%%%%%%%%%%%%%%%%

\section{Conclusion} \label{sec_conc}

This paper studies the identification of information structures from the distribution of equilibrium actions in a Bayesian game.
We offer a canonical class of information structures and show that this class is rich enough to rationalize any equilibrium outcome, yet parsimonious to admit point identification.
Moreover, a canonical information structure characterizes the minimal amount of information about the state that each agent possesses, as measured by variance reduction, across all observationally equivalent information structures.
The lower bound is tight when there are no strategic interactions among agents, but in general, there arises a strict gap because agents' strategic motives confound private information about fundamental and strategic uncertainty.
Still, the identified lower bound is useful for making predictions about equilibrium action volatility, which in turn shapes welfare measures in LQG games.
This provides a way to make counterfactual predictions about the effects of policy changes on economic outcomes.

%%%%%%%%%%%%%%%%%%%%%%%%%%%%%%%%%%%%%%%%%%%%%%%%%%%%%%%%%%%%%%%%%%%%%%%%%%%%%%%%%%%%%%%%%%
%%%%%%%%%%%%%%%%%%%%%%%%%%%%%%%%%%%%%%%%%%%%%%%%%%%%%%%%%%%%%%%%%%%%%%%%%%%%%%%%%%%%%%%%%%

\newpage

\appendix

\begin{center}
\LARGE{\bf Appendix}
\end{center}

%%%%%%%%%%%%%%%%%%%%%%%%%%%%%%%%%%%%%%%%%%%%%%%%%%%%%%%%%%%%%%%%%%%%%%%%%%%%%%%%%%%%%%%%%%
%%%%%%%%%%%%%%%%%%%%%%%%%%%%%%%%%%%%%%%%%%%%%%%%%%%%%%%%%%%%%%%%%%%%%%%%%%%%%%%%%%%%%%%%%%

\renewcommand{\thesection}{A}
\renewcommand{\theequation}{A\arabic{equation}}
\renewcommand{\thelemma}{A\arabic{lemma}}
\renewcommand{\theproposition}{A\arabic{proposition}}
\renewcommand{\thedefinition}{A\arabic{definition}}

\renewcommand{\theHlemma}{A\arabic{lemma}}
\renewcommand{\theHproposition}{A\arabic{proposition}}
\renewcommand{\theHequation}{A\arabic{equation}}

\setcounter{section}{0}
\setcounter{equation}{0}
\setcounter{lemma}{0}
\setcounter{proposition}{0}
\setcounter{definition}{0}

\section{Integration of Strategy Profiles}
\label{app_pettis}

In this Appendix~\ref{app_pettis}, the integral of a strategy profile is formally defined through the integral notion \`a la \cite{pettis1938}.
We are motivated to adopt this notion to sidestep the measurability issue (Remark~\ref{remark_int}); under our regularity conditions (Definition~\ref{def_st}), strategy profiles are indeed Pettis integrable.
Moreover, the properties of the Pettis integral enables us to characterize BNE through moment restrictions (Lemma~\ref{lem_bce}).
We also hint at how these properties can be leveraged to extend equilibrium uniqueness beyond affine strategy profiles.

\subsection{Pettis Integral}
\label{app_moment}

Consider the underlying probability space $(\Omega,\calF,\P)$ on which the state and signals are defined.
Let $\scrX$ be the Hilbert space of square-integrable random variables $X:\Omega\to\R$, with the inner product between $X,X' \in \scrX$ being $\E\qty[XX']$.
A \emph{(stochastic) process} refers to a mapping $X:[0,1]\to\scrX$ that assigns to each agent $i \in [0,1]$ a random variable $X(i) \in \scrX$.

\begin{definition}
A process $X:[0,1] \to \scrX$ is \emph{weakly measurable} if for any $Z \in \scrX$, the mapping $i \mapsto \E\qty[ZX(i)]$ is measurable.
Moreover, the process is \emph{Pettis integrable} if it is weakly measurable and for any measurable set $T \subseteq [0,1]$, there exists some random variable $Y_T \in \scrX$ such that
\begin{equation} \label{def_pettis}
\E\qty[ZY_T] = \int_T\E\qty[ZX(i)] \d i, \quad \forall Z \in \scrX.
\end{equation}
In this case, $Y_T$ is called the \emph{Pettis integral} of $X(\cdot)$ over $T$, which is often written as $\int_T X(i) \d i$.
\end{definition}

%Several properties of the Pettis integral follow directly from its definition. First, the Pettis integral of any process (over any $T$) is unique whenever it exists. Second, the Pettis integral satisfies \emph{linearity}, i.e., for two Pettis integrable processes $X(\cdot)$, $X'(\cdot)$ and any $a,b \in \R$, the equality $\int_0^1 (aX(i) + bX'(i)) \d i = a \int_0^1 X(i) \d i + b \int_0^1 X'(i) \d i$ holds.
At the heart of the definition of the Pettis integral is a Fubini-like exchange property.\footnote{\cite{sun2006} develops an alternative approach to integrating a continuum of random variables by extending the usual product space over $[0,1]\times\Omega$.
Specifically, he introduces a Fubini extension, a probability space that extends the product space while preserving the Fubini property, namely the interchangeability of integration over $[0,1]$ and expectation over $\Omega$.
His approach also takes the Fubini property as a minimal desirable property for working with a continuum of random variables, and in this regard, the Pettis integral shares a similar spirit from an applied point of view.}
Specifically, taking any constant random variable for $Z$ in \eqref{def_pettis} gives
\begin{equation} \label{pettis_E}
\E \qty[\int_0^1 X(i) \d i] = \int_0^1 \E \qty[X(i)] \d i,
\end{equation}
which justifies the interchange of expectation and integration.
Similarly, taking $\int_0^1 X(i) \d i$ for $Z$ and using \eqref{def_pettis} twice, we obtain
\begin{equation} \label{pettis_S}
\E \qty|\int_0^1 X(i) \d i|^2 = \int_0^1 \int_0^1 \E\qty[X(i)X(j)] \d j \d i.
\end{equation}
In particular, applying \eqref{pettis_S} to the centered process $\tilde{X}(i) = X(i)-\E\qty[X(i)]$, we obtain the following covariance formula:
\begin{equation} \label{pettis_V}
\Var \qty[\int_0^1 X(i) \d i] = \int_0^1 \int_0^1 \Cov\qty[X(i),X(j)] \d j \d i.
\end{equation}

The calculations of PS and TR (see \eqref{eq_PS} and \eqref{eq_TR} in the running example) can now be justified by the above properties.
In addition, these properties are used to prove Lemma~\ref{lem_bce}, as shown below.

\begin{proof}[Proof of Lemma~\ref{lem_bce}]
Let $X(\cdot)$ be any BNE, so that the best-response condition \eqref{eq_l} is satisfied for every $i \in [0,1]$.
Taking unconditional expectations on both sides of \eqref{eq_l}, the law of iterated expectations and \eqref{pettis_E} readily imply the first moment restriction \eqref{bce1}.
Moreover, subtracting \eqref{bce1} from \eqref{eq_l}, we have
\[
X(i)-\E\qty[X(i)] = \E_i\qty[\int_0^1 w(i,j) \qty(X(j) - \E\qty[X(j)]) \d j] + b(i) \E_i \qty[\theta - \mu_\theta].
\]
Since $X(i)$ is measurable with respect to $S(i)$, multiplying both sides of this equation by $X(i)-\E[X(i)]$ yields
\begin{align*}
&\qty|X(i)-\E[X(i)]|^2 \\
&= \E_i\qty[\qty(X(i)-\E\qty[X(i)])\int_0^1 w(i,j)\qty(X(j) - \E\qty[X(j)]) \d j]+ b(i) \E_i \qty[\qty(X(i)-\E\qty[X(i)]) \qty(\theta - \mu_\theta)].
\end{align*}
Taking unconditional expectations on both sides, the law of iterated expectations and the defining property \eqref{def_pettis} of the Pettis integral, applied with $Z = X(i) - \E[X(i)]$, imply the second moment restriction \eqref{bce2}.
\end{proof}

Note that the above proof of Lemma~\ref{lem_bce} does not rely on $\calX$ being a Gaussian process.
Hence, these moment restrictions are necessarily satisfied under any BNE and any information structure, as long as each agent $i$'s best-response condition is linear in her best estimates of $\theta$ and $Y(i)$, as in \eqref{eq_l}.

The next lemma shows that any strategy profile that satisfies the regularity conditions in Definition~\ref{def_st} is weakly measurable and Pettis integrable.\footnote{The proof of Lemma~\ref{lem_pettis} parallels mostly the arguments in Appendix~A.1 of \cite{alnajjar1995}, which establishes Pettis integrability for a continuum of i.i.d.~random variables, but a self-contained proof is provided here for the sake of completeness.}

\begin{lemma} \label{lem_pettis}
A process $X: [0,1] \to \scrX$ is weakly measurable if and only if the mapping $i \mapsto \E \qty[X(i)X(j)]$ is measurable for every $j \in [0,1]$.
If, in addition, the mapping $i \mapsto \E\qty|X(i)|^2$ is integrable, then $X(\cdot)$ is Pettis integrable.
\end{lemma}

\begin{proof}

It is evident that if $X$ is weakly measurable, then the mapping $i \mapsto \E \qty[X(i)X(j)]$ is measurable for every $j \in [0,1]$.

Conversely, suppose that $i \mapsto \E \qty[X(i)X(j)]$ is measurable for every $j \in [0,1]$.
Let $\scrW \subseteq \scrX$ be the closure of the finite linear span of $\{X(i)\}_{i \in [0,1]}$.
Also, let $\scrW^\bot$ be the orthogonal complement of $\scrW$, i.e., the set of all $Y \in \scrX$ such that $\E\qty[YW] = 0$ for all $W \in \scrW$.
By Theorem~3 on p.~55 of \cite{lax}, for any $Z \in \scrX$, there exists a unique orthogonal decomposition $(W,Y) \in \scrW \times \scrW^\bot$ such that $Z = W+Y$.
Since $\E\qty[ZX(i)] = \E\qty[WX(i)]$ for all $i \in [0,1]$, it is without loss of generality to assume that $Z \in \scrW$ to verify weak measurability.
By the construction of $\scrW$, there exists a sequence $\{W_n\}_{n \in \N}$ with $\lim_{n \to \infty}\E\qty|Z-W_n|^2 = 0$ such that each $W_n$ is represented as a finite linear combination
\[
W_n = \sum_{k=1}^{K_n} \alpha_{n,k} X(j_{n,k}),
\]
for some $K_n \in \N$, $\alpha_{n,1},\ldots,\alpha_{n,K_n} \in \R$, and $j_{n,1},\ldots,j_{n,K_n} \in [0,1]$.
For each $n \in \N$, define $f_n:[0,1] \to \R$ by
\[
f_n(i) = \sum_{k=1}^{K_n} \alpha_{n,k} \E\qty[X(j_{n,k}) X(i)], \quad \forall i \in [0,1].
\]
By assumption, each $f_n$ is a measurable function.
Moreover, by construction, we have $\E\qty[W_nX(i)] = f_n(i)$.
Since convergence in norm implies weak convergence, it follows that
\[
\E \qty[ZX(i)] = \lim_{n \to \infty} \E\qty[W_nX(i)] = \lim_{n \to \infty} f_n(i), \quad \forall i \in [0,1].
\]
This shows that $i \mapsto \E\qty[ZX(i)]$ is the pointwise limit of measurable functions, and is therefore measurable by Lemma~4.29 of \cite{ab2006}.
Hence, $X(\cdot)$ is weakly measurable.

Now suppose, in addition, that the mapping $i \mapsto \E\qty|X(i)|^2$ is integrable.
For any $Z \in \scrX$, the Cauchy--Schwarz inequality implies that
\[
\int_0^1 \qty|\E\qty[Z X(i)]| \d i 
\le \qty(\E\qty|Z|^2)^{\frac{1}{2}} \int_0^1 \qty(\E\qty|X(i)|^2)^{\frac{1}{2}} \d i
\le \qty(\E\qty|Z|^2)^{\frac{1}{2}} \qty(\int_0^1 \E\qty|X(i)|^2 \d i)^{\frac{1}{2}} < \infty,
\]
which shows that $i \mapsto \E\qty[Z X(i)]$ is integrable.
Hence, Theorem~11.52 of \cite{ab2006} implies that $X(\cdot)$ is Pettis integrable.
\end{proof}

%%%%%%%%%%%%%%%%%%%%%%%%%%%%%%%%%%%%%%%%%%%%%%%%%%%%%%%%%%%%%%%%%%%%%%%%%%%%%%%%%%%%%%%%%%
%%%%%%%%%%%%%%%%%%%%%%%%%%%%%%%%%%%%%%%%%%%%%%%%%%%%%%%%%%%%%%%%%%%%%%%%%%%%%%%%%%%%%%%%%%

\subsection{Uniqueness of BNE}
\label{app_unique}

Now, we hint at how the properties of Pettis integral can be used to extend equilibrium uniqueness beyond affine strategy profiles.
To this end, suppose that $X(\cdot)$ and $X'(\cdot)$ are two BNEs, which need not be affine.
Since both satisfy the best-response condition \eqref{eq_l}, taking the difference gives
\[
X(i)-X'(i) = \E_i \qty[Y(i) - Y'(i)] = \E_i \qty[\int_0^1 w(i,j) \qty(X(j) - X'(j)) \d j].
\]
Let $\xi(i) = X(i)-X'(i)$.
Because $\xi(i)$ must be measurable with respect to agent $i$’s information, multiplying both sides of the above equation by $\xi(i)$ yields
\begin{equation} \label{unique_pettis1}
\qty|\xi(i)|^2 = \E_i \qty[\xi(i)\int_0^1 w(i,j) \xi(j) \d j].
\end{equation}
Moreover, applying the law of iterated expectations yields
\[
\E\qty|\xi(i)|^2 = \E \qty[\xi(i) \int_0^1 w(i,j) \xi(j) \d j] = \int_0^1 w(i,j) \E\qty[\xi(i) \xi(j)] \d j,
\]
where the second equality uses the fact that $\int_0^1 w(i,j) \xi(j) \d j$ is defined as a Pettis integral and therefore its inner product with $\xi(i)$ enjoys the exchange property.
Then, integrating both sides over $i \in [0,1]$, we obtain
\begin{equation} \label{unique_pettis2}
\int_0^1 \E\qty|\xi(i)|^2 \d i = \int_0^1 \int_0^1 w(i,j) \E\qty[\xi(i)\xi(j)] \d j \d i.
\end{equation}

By construction, if $\E|\xi(i)|^2 = 0$ for all $i \in [0,1]$, then $\E\qty|\xi(i)|^2 = \E\qty|X(i)-X'(i)|^2 = 0$ for all $i \in [0,1]$, and thus the two BNEs $X(\cdot)$ and $X'(\cdot)$ coincide in the mean-square sense.
Since these BNEs are not assumed to be affine in this argument, this would imply that the affine equilibrium in Theorem~\ref{thm_well} is in fact the unique equilibrium, even among non-affine strategy profiles.
As such, the equation~\eqref{unique_pettis2} indicates when this stronger conclusion holds.
Specifically, since the left-hand side of \eqref{unique_pettis2} must be nonnegative, once the right-hand side is shown to be nonpositive, we obtain $\int_0^1 \E\qty|\xi(i)|^2 \d i = 0$.
This implies that $\E|\xi(i)|^2 = 0$ for all $i \in [0,1]$, and therefore the affine equilibrium is unique among all BNEs.\footnote{Although $\int_0^1 \E\qty|\xi(i)|^2 \d i = 0$ implies $\E\qty|\xi(i)|^2 = 0$ only for almost every $i$, since \eqref{unique_pettis1}---which follows from the best-response condition---must hold for every agent $i$, this in fact forces $\E|\xi(i)|^2 = 0$ for every $i$.}
The next results provides sufficient conditions on primitives under which this strong uniqueness holds.

\begin{proposition} \label{prop_unique_const}
If $w(i,j) \equiv a \le 0$ for almost every $(i,j) \in [0,1]^2$, then there exists an affine equilibrium, and moreover, it is the unique BNE, including non-affine strategy profiles.
\end{proposition}

\begin{proof}
Suppose that $w(i,j) \equiv a \le 0$ for almost every $(i,j) \in [0,1]^2$.
Then the left-hand side of \eqref{unique_pettis2} is nonnegative, whereas  \eqref{pettis_S} implies that the right-hand side is nonpositive.
Hence, the preceding discussion implies that the game has at most one BNE, including non-affine strategy profiles.
To establish existence, we invoke some results from the proof of Theorem~\ref{thm_well}.
When $w$ is constant and nonpositive almost everywhere, one can show directly, using Lemma~\ref{lem_P}~\eqref{lem_P3}, that all eigenvalues of the operator $T$ defined in \eqref{def_T} are nonpositive.
In particular, $1$ cannot be an eigenvalue of $T$, and hence Lemma~\ref{lem_alt} implies that a unique affine equilibrium exists.
\end{proof}

\begin{proposition} \label{prop_unique}
Given any $n$-step LQG game, let $W$ be an $n$-by-$n$ matrix whose $(k,k')$-entry is $w_{kk'}$.
If the matrix $\frac{W+W^\top}{2}$ is negative semidefinite, then the game has at most one BNE, including non-affine strategy profiles.
\end{proposition}

\begin{proof}
From \eqref{br_finite}, agents in the same cell $t_k$ adopt the same strategy in any BNE of an $n$-step LQG game.
This implies that $\xi_k \equiv \xi(i) = \xi(j)$ for any $i,j \in t_k$ and $1 \le k \le n$.
Let $V$ be an $n$-by-$n$ matrix whose $(k,k')$-entry is $\E\qty[\xi_k\xi_{k'}]$, and observe that $V$ is positive semidefinite.
Then, the right-hand side of \eqref{unique_pettis2} can be written as
\[
\int_0^1 \int_0^1 w(i,j) \E\qty[\xi(i)\xi(j)] \d j \d i
= \sum_{k=1}^n \int_{t_k} \qty(\sum_{k'=1}^n \int_{t_{k'}} w_{kk'}v_{kk'} \d j) \d i
= \frac{1}{n^2} \sum_{k=1}^n \sum_{k'=1}^n w_{kk'}v_{kk'}.
\]
Since $V$ is symmetric, the final expression is equivalently obtained by replacing $W$ with its symmetric part $\frac{W+W^\top}{2}$, which is  negative semidefinite by assumption.
Because $V$ is positive semidefinite, the Schur product theorem \citep[Theorem 7.5.3 of][]{horn} implies that the final expression is nonpositive.
Then, \eqref{unique_pettis2} yields $\int_0^1 \E|\xi(i)|^2 \d i = 0$, and the preceding argument implies uniqueness of BNE.
\end{proof}

Economically, the negative semidefiniteness of $W$ captures a form of strategic substitutability in the $n$-step game.
For instance, if $a \equiv w(i,j)$ is constant, as in the models of \cite{ap2007} and \cite{bm2013}, then the condition is satisfied when $a \le 0$.
More generally, one sufficient condition for $W$ to be negative semidefinite is that its diagonal entries are negative and satisfy a diagonal-dominance condition:
\[
w_{kk} \le 0 \quad \text{and} \quad \sum_{k' \neq k} |w_{kk'}| \le |w_{kk}|, \quad \forall k=1,\ldots,n.
\]
That is, each agent faces strategic substitutability against agents in the same cell, while strategic effects across cells are unrestricted in sign.
Thus, strategic substitutability and complementarity may coexist across cells, as long as those cross-cell effects remain sufficiently small in absolute value.

The key step in the proof of Proposition~\ref{prop_unique} is the representation of the right-hand side of \eqref{unique_pettis2} as the Hadamard product between a positive semidefinite matrix $V$ and a negative semidefinite matrix $W$, which, by the Schur product theorem, becomes nonpositive.
Beyond $n$-step LQG games, this intuition carries over by viewing $w(i,j)$ and $\E\qty[\xi(i)\xi(j)]$ as infinite-dimensional analogues of matrices.
This idea is formalized in follow-up work by \cite{miyashita_ui_large} based on an infinite-dimensional analogue of the Schur product theorem, where uniqueness of BNE is established under the condition that the integral kernel $w$ exhibits a certain form of negative semidefiniteness.

%%%%%%%%%%%%%%%%%%%%%%%%%%%%%%%%%%%%%%%%%%%%%%%%%%%%%%%%%%%%%%%%%%%%%%%%%%%%%%%%%%%%%%%%%%
%%%%%%%%%%%%%%%%%%%%%%%%%%%%%%%%%%%%%%%%%%%%%%%%%%%%%%%%%%%%%%%%%%%%%%%%%%%%%%%%%%%%%%%%%%

\renewcommand{\thesection}{B}
\renewcommand{\theequation}{B\arabic{equation}}
\renewcommand{\thelemma}{B\arabic{lemma}}
\renewcommand{\theproposition}{B\arabic{proposition}}

\renewcommand{\theHlemma}{B\arabic{lemma}}
\renewcommand{\theHproposition}{B\arabic{proposition}}
\renewcommand{\theHequation}{B\arabic{equation}}

\setcounter{section}{1}
\setcounter{equation}{0}
\setcounter{lemma}{0}
\setcounter{proposition}{0}

\section{Proofs of Main Results} \label{app_pf}
\setcounter{equation}{0}
\setcounter{definition}{0}
\setcounter{assumption}{0}
\setcounter{claim}{0}
\setcounter{lemma}{0}
\setcounter{proposition}{0}

In Appendix~\ref{app_pf}, we provide all proofs omitted from the main text.
We begin by summarizing the notation used throughout the appendix, especially in the proof of Theorem~\ref{thm_well}.
All mathematical definitions are standard and can be found in \cite{lax}, \cite{brezis}, or \cite{krees}, which are the main references for the functional analysis tools used below.

Denote by $\R^{n}$ the space of $n$-dimensional column vectors and by $\R^{n\times n}$ the space of $n\times n$ matrices.
We use the Euclidean norm on $\R^{n}$, denoted by $\|v\| \coloneqq \sqrt{v^\top v}$.
The induced operator norm on $\R^{n \times n}$ is denoted by $\vertd{M} \coloneqq \sup_{v \in \R^{n} \setminus \{0\}} \frac{\|Mv\|}{\|v\|}$.
For $n,m \in \N$ and a set $A \subseteq \R^m$, we denote by $\calL_2^n(A)$ the Hilbert space of vector-valued functions $f,g:A \to \R^n$, equipped with the inner product
\[
\langle f,g \rangle \coloneqq \int_A f(x)^\top g(x) \d x.
\]
The induced norm is denoted by $\|f\| \coloneqq \sqrt{\langle f,f \rangle} = (\int_A \|f(x)\|^2 \d x)^{\frac{1}{2}}$.
Extending this definition to a matrix-valued function $F:A \to \R^{n\times n}$, we define its norm by $\|F\| \coloneqq (\int_A \vertd{F(x)}^2 \d x)^{\frac{1}{2}}$.
For a linear operator $T: \calL_2^n(A) \to \calL_2^n(A)$, we denote the induced operator norm by $\vertd{T} \coloneqq \sup_{\|f\|>0} \frac{\|Tf\|}{\|f\|}$.
Let $\Lambda(T)$ denote the set of real eigenvalues of $T$.
The identity operator is denoted by $I$.

In accordance with the above notation, the norms of the payoff parameters are given by
\[
\|b\| = \qty(\int_0^1 |b(i)|^2 \d i)^{\frac{1}{2}}, \quad
\|c\| = \qty(\int_0^1 |c(i)|^2 \d i)^{\frac{1}{2}}, \quad
\|w\| = \qty(\int_0^1 \int_0^1 |w(i,j)|^2 \d j \d i)^{\frac{1}{2}},
\]
all of which are assumed to be finite.
Also, the norms of the informational parameters are
\[
\|P\| = \qty(\int_0^1 \int_0^1 \vertd{P(i,j)}^2 \d j \d i)^{\frac{1}{2}}, \qquad
\|P_\theta\| = \qty(\int_0^1 \|P_\theta(i)\|^2 \d i)^{\frac{1}{2}},
\]
which are finite by construction, as shown in Lemma~\ref{lem_P} below.
Putting these objects together, we measure the distance between two LQG games with the same signal dimension $\bar{d}$ as follows:
\begin{equation} \label{norm_lqg}
\left\|\calG-\calG'\right\| \coloneqq \left\|b-b'\right\| + \left\|c-c'\right\| + \left\|w-w'\right\| + \left\|P-P'\right\| + \left\|P_\theta-P'_\theta\right\|.
\end{equation}

In Theorem~\ref{thm_well}, convergence of LQG games is understood with respect to the norm \eqref{norm_lqg}, and genericity refers to the corresponding notions of norm-openness and norm-denseness.
With this formalization in place, we can now formally state Theorem~\ref{thm_well} as follows.
%the next restatement slightly strengthens the ``denseness'' of the unique existence of an affine equilibrium so that the perturbation is given only to the payoff weight function $w$.

\setcounter{theorem}{0}
\begin{theorem}[Restatement] \label{thm_well_re}
The following statements hold:
\begin{itemize}
\item {\rm (Denseness).} For any LQG game $\calG = (w,b,c,\calP)$ that does not have a unique affine equilibrium and every $\epsilon > 0$, there exists a payoff weight function $w'$ with $\|w-w'\| < \epsilon$ such that the LQG game $\calG' = (w',b,c,\calP)$ has a unique affine equilibrium.
\item {\rm (Openness and Convergence).} For any LQG game $\calG$ that has a unique affine equilibrium $\bar{\varphi}$, if $\{\calG_n\}_{n \in \N}$ is a sequence of LQG games such that $\lim_{n \to \infty} \|\calG - \calG_n\| = 0$, then $\calG_n$ has a unique affine equilibrium $\bar{\varphi}_n$ for every sufficiently large $n \in \N$. Moreover, we have $\lim_{n \to \infty} \|\bar{\varphi} - \bar{\varphi}_n\| = 0$.
\end{itemize}
\end{theorem}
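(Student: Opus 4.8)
The plan is to work entirely through the Fredholm characterization (\ref{int_bne}). Viewing $\bar\varphi = (\varphi_0,\varphi)$ as an element of the Hilbert space $\calL_2^{\bar d+1}[0,1]$, equation (\ref{int_bne}) reads $(I-T)\bar\varphi = f$, where $T$ is the block-diagonal integral operator sending $\bar\varphi$ to the pair $\bigl(\int_0^1 w(i,j)\varphi_0(j)\,\d j,\ \int_0^1 w(i,j)P(i,j)\varphi(j)\,\d j\bigr)$ and $f(i) = (\mu_\theta b(i)+c(i),\ \sigma_\theta b(i)P_\theta(i))$. By Lemma \ref{lem_affine}, an affine profile is an equilibrium exactly when it is an $\calL_2$ solution of this equation. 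The correlation matrices obey the uniform bound $\vertd{P(i,j)} \le 1$ (a consequence of positive semidefiniteness of the joint covariance, via Cauchy--Schwarz applied to the standardized signals), so both kernels $w(i,j)$ and $w(i,j)P(i,j)$ lie in $\calL_2([0,1]^2)$; hence $T$ is Hilbert--Schmidt, in particular compact. The Riesz--Fredholm alternative then delivers the key equivalence: $\calG$ has a unique affine equilibrium $\iff$ $(I-T)$ is invertible $\iff$ $1 \notin \Lambda(T)$. When $1 \in \Lambda(T)$, $(I-T)$ is neither injective nor surjective, which is precisely the non-existence/continuum dichotomy recorded in the text.

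For \textbf{denseness}, I would exploit that $T$ is linear in $w$: replacing $w$ by $sw$ replaces $T$ by $sT$, whose real eigenvalues are exactly $\{s\lambda : \lambda \in \Lambda(T)\}$. If $\calG$ lacks a unique equilibrium then $1 \in \Lambda(T)$, which forces $w \ne 0$ (otherwise $T=0$ and $I-T=I$ is invertible). Since $T$ is compact, $\Lambda(T)$ is at most countable and accumulates only at $0$, so the excluded set $\{1/\lambda : \lambda \in \Lambda(T),\ \lambda \ne 0\}$ has no finite accumulation point. Hence I can choose $s$ arbitrarily close to $1$ with $s\lambda \ne 1$ for every $\lambda \in \Lambda(T)$; then $1 \notin \Lambda(sT)$, so $\calG' = (b,c,sw,\calP)$ has a unique affine equilibrium, while $\|w - sw\| = |1-s|\,\|w\|$ can be made smaller than any $\epsilon > 0$.

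For \textbf{openness and convergence}, the crux is to upgrade the parameter convergence $\|\calG-\calG_n\| \to 0$ into operator-norm convergence $\vertd{T_n - T} \to 0$ together with $\|f_n - f\| \to 0$. Because the operator norm is dominated by the Hilbert--Schmidt norm, it suffices that the kernels converge in $\calL_2([0,1]^2)$, i.e.\ $w_n \to w$ and $w_n P_n \to w P$; the first holds by hypothesis. The second is the main obstacle, since a product of merely $\calL_2$-convergent factors need not converge in $\calL_2$. Here I would use the uniform bound $\vertd{P_n(i,j)}, \vertd{P(i,j)} \le 1$: splitting $w_n P_n - wP = (w_n - w)P + w_n(P_n - P)$, the first summand has $\calL_2$-norm $\le \|w_n - w\| \to 0$, while for the second I dominate $|w_n|^2 \le 2|w_n-w|^2 + 2|w|^2$, bounding the $|w_n-w|^2$ contribution by $\vertd{P_n - P} \le 2$ and handling the $|w|^2$ contribution by dominated convergence (integrand $\le 4|w|^2$, with $\vertd{P_n - P}^2 \to 0$ in $\calL_1$). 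The identical estimate gives $b_n (P_\theta)_n \to b P_\theta$, hence $f_n \to f$. With $\vertd{T_n - T} \to 0$ and $(I-T)^{-1}$ bounded, a Neumann-series argument shows $(I-T_n)$ is invertible for all large $n$ — yielding openness and the unique equilibria $\bar\varphi_n = (I-T_n)^{-1} f_n$ — and that $\vertd{(I-T_n)^{-1} - (I-T)^{-1}} \to 0$ via the resolvent identity; combining this with $f_n \to f$ then gives $\|\bar\varphi_n - \bar\varphi\| \to 0$.
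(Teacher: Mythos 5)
Your proposal is correct and rests on the same Fredholm-equation backbone as the paper ($(I-T)\bar{\varphi}=f$, the bound $\vertd{P(i,j)}\le 1$, the equivalence ``unique affine equilibrium $\iff 1\notin\Lambda(T)$,'' and the eigenvalue-scaling argument for denseness, which is essentially identical to the paper's $(1-\epsilon)w$ perturbation). Where you genuinely diverge is in two technical steps. First, you obtain compactness of $T$ by observing that the matrix kernel $w\bar{P}$ is square integrable, so $T$ is Hilbert--Schmidt; the paper instead proves compactness through an $L_2$-equicontinuity (Fr\'echet--Riesz--Kolmogorov) argument (Lemmas \ref{lem_compact}, \ref{lem_M_cts}, \ref{lem_FK}). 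Second, for openness and convergence you use the standard perturbation route: $\vertd{T_n-T}\to 0$ plus boundedness of $(I-T)^{-1}$ gives invertibility of $I-T_n$ for large $n$ via a Neumann series, a uniform bound on $\vertd{(I-T_n)^{-1}}$, and convergence of inverses by the resolvent identity. The paper instead argues by contradiction, extracting unit eigenfunctions $\bar{\psi}^n$ of $T_n$ and using a uniform pre-compactness lemma for the whole family $\{T_n\}$ (Lemma \ref{lem_psi_conv}) to produce an eigenfunction of $T$ with eigenvalue $1$; your route makes that lemma, and hence the entire equicontinuity apparatus of Appendix \ref{app_misc}, unnecessary, which is a real simplification. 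What the paper's machinery buys is a self-contained compactness criterion that applies uniformly across a convergent family of kernels, but in this setting your Hilbert--Schmidt shortcut suffices. The only point to tighten is the dominated-convergence step for $\int\!\!\int |w|^2\vertd{P_n-P}^2$: $L_2$-norm convergence of $P_n$ gives a.e.\ convergence only along subsequences, so one should either pass to subsequences (every subsequence has a further subsequence along which the integral vanishes, hence the full limit is $0$) or argue via convergence in measure; the paper's own Lemma \ref{lem_cts_op} glosses over the same point, so this is a shared, easily repaired informality rather than a gap in your argument.
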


Let us record some properties of information structures, viewed as a linear operator, that follow from the definitions of $P(i,j)$ and $P_\theta(i)$.
Since $P(i,j)$ and $P_\theta(i)$ are defined as standardized covariance matrices between different agents' signals and between the state and signals, respectively, they satisfy properties analogous to those of correlation coefficients.
In particular, their sizes are bounded by $1$, and they satisfy suitable symmetry and positive semidefiniteness properties.

\begin{lemma} \label{lem_P}
Consider any information structure $\calP = (P,P_\theta)$.
\begin{enumerate}[\rm (i)]
\item \label{lem_P1}
For any $i \in [0,1]$, we have $\|P_\theta(i)\| \le 1$.
\item \label{lem_P2}
For any $i,j \in [0,1]$, we have $\vertd{P(i,j)} \le 1$.
\item \label{lem_P3}
When $P$ is viewed as a linear operator $\varphi \mapsto \int_0^1 P(\cdot, j) \varphi(j) \d j$ over $\calL_2^{\bar{d}}[0,1]$, it is self-adjoint and positive semidefinite in the sense that $\langle \psi, P\varphi \rangle = \langle P\psi, \varphi \rangle$ and $\langle \varphi, P \varphi \rangle \ge 0$ hold for any $\psi,\varphi \in \calL_2^{\bar{d}}[0,1]$.
In particular, it holds that
\begin{equation} \label{eq_normP}
\|\varphi\|^2 \ge \langle \varphi, P\varphi \rangle \ge \|P\varphi\|^2 \ge 0.
\end{equation}
\end{enumerate}
\end{lemma}

\begin{proof}
First, since $K(i,i)$ and $\qty[\begin{smallmatrix}K(i,i) & K_\theta (i) \\ K_\theta(i)^\top & \sigma^2_\theta\end{smallmatrix}]$ are positive semidefinite by definition, the Schur complement $\sigma_\theta^2 - K_\theta(i)^\top K(i,i)^{-1} K_\theta(i)$ must be nonnegative.
Together with the definition of $P_\theta(i)$, this implies
\[
\|P_\theta(i)\|^2 \le \frac{K_\theta(i)^\top K(i,i)^{-1} K_\theta(i)}{\sigma^2_\theta} \le 1.
\]

Second, observe that $\qty[\begin{smallmatrix}I & P(i,j) \\ P(j,i) & I\end{smallmatrix}]$ arises as the covariance matrix of the standardized signals $K(i,i)^{-\frac{1}{2}}S(i)$ and $K(j,j)^{-\frac{1}{2}}S(j)$, and thus it must be positive semidefinite.
Taking the Schur complement, we see that $I-P(i,j)P(j,i)$ is positive semidefinite.
Moreover, since $P(j,i)=P(i,j)^\top$, this implies that all singular values of $P(i,j)$ are weakly less than $1$.
Hence, since the induced Euclidean matrix norm coincides with the largest singular value of a matrix \citep[see, e.g., p.\ 346 of][]{horn}, we have $\vertd{P(i,j)} \le 1$.

Third, we prove the claims about $P$ viewed as a linear operator on $\calL_2^{\bar{d}}[0,1]$.
Since $P(i,j)=P(j,i)^\top$, it holds for any $\varphi,\psi \in \calL_2^d[0,1]$ that
\begin{align*}
\langle \psi, P\varphi \rangle 
= \int_0^1 \psi(i)^\top \qty(\int_0^1 P(i,j)\varphi(j) \d j) \d i 
= \int_0^1 \qty(\int_0^1 P(j,i) \psi(i))^\top \varphi(j) \d j
= \langle P\psi, \varphi \rangle,
\end{align*}
from which $P$ is self-adjoint.
That $P$ is positive semidefinite follows directly from \eqref{eq_normP}.
To prove these inequalities, given any $\varphi \in \calL_2^d[0,1]$, we consider random variables $X(i) = \varphi(i)^\top K(i,i)^{-\frac{1}{2}}S(i)$.
Since $\Var\qty[X(i)] = \|\varphi(i)\|^2$ holds by Lemma~\ref{lem_affine}, we have
\[
\|\varphi\|^2 = \int_0^1 \|\varphi(i)\|^2 \d i = \int_0^1 \Var\qty[X(i)] \d i.
\]
Moreover, since $\Cov\qty[X(i), X(j)] = \varphi(i)^\top P(i,j) \varphi(j)$ again by Lemma~\ref{lem_affine}, we have
\[
\langle \varphi, P\varphi \rangle = \int_0^1 \int_0^1 \varphi(i)^\top  P(i,j) \varphi(j) \d j d i = \int_0^1 \int_0^1 \Cov\qty[X(i), X(j)] \d j \d i = \Var\qty[\int_0^1 X(i) \d i].
\]
where we use \eqref{pettis_V} for the third equality.
In particular, since $\int_0^1 \Var\qty[X(i)] \d i \ge \Var[\int_0^1 X(i) \d i]$ holds by the Cauchy--Schwarz inequality, it follows that $\|\varphi\|^2 \ge \langle \varphi, P\varphi \rangle$.
Lastly, to prove $\langle \varphi, P\varphi \rangle \ge \|P\varphi\|^2$, we shall show that
\begin{equation} \label{pf_P1}
\left\| \int_0^1 P(i,j) \varphi(j) \d j \right\|^2 \le \langle \varphi, P\varphi \rangle, \quad \forall i \in [0,1].
\end{equation}
Clearly, this inequality is satisfied if $\|\int_0^1 P(i,j) \varphi(j) \d j\| = 0$.
Otherwise, define a unit vector $\psi(i) = \frac{\int_0^1 P(i,j) \varphi(j) \d j}{\|\int_0^1 P(i,j) \varphi(j) \d j\|} \in \R^{\bar{d}}$ and a random variable $Z(i) = \psi(i)^\top K(i,i)^{-\frac{1}{2}} S(i)$.
By using the definition of the Pettis integral, we have
\begin{align}
&\Cov\qty[Z(i), \int_0^1 X(j) \d j] = \int_0^1 \Cov\qty[Z(i), X(j)] \d j \nonumber \\
&= \int_0^1 \psi(i)^\top K(i,i)^{-\frac{1}{2}} \Cov\qty[S(i), S(j)] K(j,j)^{-\frac{1}{2}} \varphi(j) \d j
= \int_0^1 \psi(i)^\top P(i,j) \varphi(j) \d j \nonumber \\
&= \psi(i)^\top \qty(\int_0^1 P(i,j) \varphi(j) \d j)
= \left\| \int_0^1 P(i,j) \varphi(j) \d j \right\|, \label{pf_P2}
\end{align}
where the last equality holds by the construction of $\psi(i)$.
On the other hand, noticing that $\Var[Z(i)] = \|\psi(i)\|^2 = 1$, the Cauchy--Schwartz inequality yields
\begin{equation} \label{pf_P3}
\Cov\qty[Z(i), \int_0^1 X(j) \d j]^2 \le \Var\qty[\int_0^1 X(j) \d j] = \langle \varphi, P\varphi \rangle.
\end{equation}
Combining \eqref{pf_P2} and \eqref{pf_P3}, we obtain \eqref{pf_P1}.
Hence, it follows that
\[
\|P\varphi\|^2 = \int_0^1 \left\| \int_0^1 P(i,j) \varphi(j) \d j \right\|^2 \d i \le \int_0^1 \langle \varphi, P\varphi \rangle \d i = \langle \varphi, P\varphi \rangle,
\]
as desired.
\end{proof}

%%%%%%%%%%%%%%%%%%%%%%%%%%%%%%%%%%%%%%%%%%%%%%%%%%%%%%%%%%%%%%%%%%%%%%%%%%%%%%%%%%%%%%%%%%
%%%%%%%%%%%%%%%%%%%%%%%%%%%%%%%%%%%%%%%%%%%%%%%%%%%%%%%%%%%%%%%%%%%%%%%%%%%%%%%%%%%%%%%%%%

\subsection{Proof of Theorem~\ref{thm_well}} \label{pf_well}

By Lemma~\ref{lem_affine}, the space of affine strategy profiles can be identified with the Hilbert space $\calL_2^{\bar{d}+1}[0,1]$, and $\bar{\varphi} \in \calL_2^{\bar{d}+1}[0,1]$ constitutes an affine equilibrium if and only if it solves the Fredholm equation \eqref{int_bne}.
It will be convenient to rewrite this equation as
\begin{equation} \label{eq_fredholm}
(I-T)(\bar{\varphi}) = f,
\end{equation}
where $T$ is the linear operator acting on $\calL_2^{\bar{d}+1}[0,1]$ defined by
\begin{equation} \label{def_T}
T\bar{\varphi} = \int_0^1 w(\cdot,j)\bar{P}(\cdot,j)\bar{\varphi}(j) \d j \quad \text{with} \quad \bar{P} (\cdot, j) \coloneqq \mqty[1 & {\bm 0} \\ {\bm 0}^\top & P(\cdot, j)] \in \R^{(\bar{d}+1) \times (\bar{d}+1)},
\end{equation}
where $\bm{0} \in \R^{\bar{d}}$ denotes the column vector of zeros.
Also, $f$ is the function on $[0,1]$ defined by
\begin{equation} \label{def_f}
f(\cdot) = \mqty[\mu_\theta b(\cdot) + c(\cdot) \\ \sigma_\theta b(\cdot) P_\theta(\cdot)] \in \R^{\bar{d}+1}.
\end{equation}
The equation~\eqref{eq_fredholm} suggests that a unique affine equilibrium $\bar{\varphi}$ exists when the operator $I-T$ is invertible.
This idea is formalized by the Fredholm alternative, and to apply it, we first establish several lemmas on basic properties of the objects appearing in \eqref{eq_fredholm}.

\begin{lemma} \label{lem_cts_f}
$f \in \calL_2^{\bar{d}+1}[0,1]$.
\end{lemma}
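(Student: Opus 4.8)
The plan is to bound each block of $f$ separately in the $\calL_2$-norm. Recall from Section~\ref{sec_payoff} that $b,c \in \calL_2[0,1]$ by the standing square-integrability assumption on the payoff structure, and that $\mu_\theta,\sigma_\theta$ are fixed scalars. The first coordinate of $f$ is $\mu_\theta b(\cdot)+c(\cdot)$, which lies in $\calL_2[0,1]$ immediately by the triangle inequality, since $\|\mu_\theta b + c\| \le |\mu_\theta|\,\|b\| + \|c\| < \infty$. It therefore remains only to control the lower $\bar{d}$ coordinates $\sigma_\theta b(\cdot) P_\theta(\cdot)$, and for this the crucial ingredient is a pointwise bound on $\|P_\theta(i)\|$.

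The key observation is that $P_\theta(i)$ is a vector of correlation coefficients and hence satisfies $\|P_\theta(i)\| \le 1$ for every $i$. To see this, write the standardized signal $\tilde{S}(i) \coloneqq K(i,i)^{-\frac{1}{2}}(S(i)-m(i))$ and the standardized state $\tilde\theta \coloneqq (\theta-\mu_\theta)/\sigma_\theta$. By construction $\tilde S(i)$ has covariance matrix equal to the identity, $\tilde\theta$ has unit variance, and $P_\theta(i) = \Cov(\tilde S(i),\tilde\theta)$. The covariance matrix of the stacked vector $(\tilde S(i),\tilde\theta)$ is then
\begin{align*}
\mqty[I & P_\theta(i) \\ P_\theta(i)^\top & 1],
\end{align*}
which is positive semidefinite. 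Taking the Schur complement of the upper-left identity block yields $1 - P_\theta(i)^\top P_\theta(i) \ge 0$, i.e.\ $\|P_\theta(i)\|^2 \le 1$, as claimed.

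With this uniform bound in hand the conclusion is immediate: using $\|\sigma_\theta b(i) P_\theta(i)\|^2 = \sigma_\theta^2\, b(i)^2\, \|P_\theta(i)\|^2 \le \sigma_\theta^2\, b(i)^2$, I would integrate to obtain $\int_0^1 \|\sigma_\theta b(i) P_\theta(i)\|^2 \d i \le \sigma_\theta^2 \|b\|^2 < \infty$. Combining the two blocks shows $f \in \calL_2^{\bar{d}+1}[0,1]$. The only step requiring any thought is the pointwise correlation bound $\|P_\theta(i)\| \le 1$; once that is established, everything reduces to the triangle inequality and the standing square-integrability of $b$ and $c$. As a byproduct, the same pointwise bound gives $\|P_\theta\|^2 = \int_0^1 \|P_\theta(i)\|^2 \d i \le 1$, so $P_\theta$ is itself square-integrable, which will also be convenient for the subsequent operator-theoretic estimates in the proof of Theorem~\ref{thm_well}.
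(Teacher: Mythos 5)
Your proof is correct and follows essentially the same route as the paper: both establish the uniform pointwise bound $\|P_\theta(i)\|\le 1$ via a Schur-complement argument on a positive semidefinite covariance matrix and then integrate against $b^2$. The only cosmetic difference is that you standardize the signal and state first, whereas the paper applies the Schur complement to the raw covariance block $\mqty[K(i,i) & K_\theta(i) \\ K_\theta(i)^\top & \sigma^2_\theta]$ and divides by $\sigma_\theta^2$ afterward.
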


\begin{proof}
By the definition of $f$, we have
\[
\|f\|^2 = \|\mu_\theta b + c\|^2 + \sigma_\theta^2 \int_0^1 b(i)^2 P_\theta(i)^\top P_\theta(i) \d i \le \|\mu_\theta b + c\|^2 + \sigma_\theta^2 \|b\|^2 < \infty,
\]
where the weak inequality follows from Lemma~\ref{lem_P}~\eqref{lem_P1}.
\end{proof}

\begin{lemma} \label{lem_matrix_CS}
$\|\bar{P}(i,j)\bar{\varphi}(j)\| \le \|\bar{\varphi}(j)\|$ for all $\bar{\varphi} \in \calL_2^{\bar{d}+1}[0,1]$ and $i,j \in [0,1]$.
\end{lemma}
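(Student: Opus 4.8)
The plan is to reduce the claim to the single scalar inequality $\vertd{P(i,j)} \le 1$ for the operator norm of the block $P(i,j)$, and then to establish that bound by recognizing $P(i,j)$ as a cross-correlation matrix of whitened signals—mirroring the Schur-complement argument already used for $P_\theta$ in the proof of Lemma \ref{lem_cts_f}.

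First I would exploit the block structure of $\bar{P}(i,j)$ from (\ref{def_T}). Writing $\bar{\varphi}(j) = (\varphi_0(j), \varphi(j))$ with $\varphi_0(j) \in \R$ and $\varphi(j) \in \R^{\bar{d}}$, one has $\bar{P}(i,j)\bar{\varphi}(j) = (\varphi_0(j),\, P(i,j)\varphi(j))$, so that
\[
\|\bar{P}(i,j)\bar{\varphi}(j)\|^2 = \varphi_0(j)^2 + \|P(i,j)\varphi(j)\|^2, \qquad \|\bar{\varphi}(j)\|^2 = \varphi_0(j)^2 + \|\varphi(j)\|^2 .
\]
Hence the asserted inequality is equivalent to $\|P(i,j)\varphi(j)\| \le \|\varphi(j)\|$ for every $\varphi(j) \in \R^{\bar{d}}$, i.e.\ to $\vertd{P(i,j)} \le 1$.

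Second, to bound this operator norm I would observe that $P(i,j) = \Cov\!\left(K(i,i)^{-\frac{1}{2}}S(i),\, K(j,j)^{-\frac{1}{2}}S(j)\right)$ is the cross-covariance of the whitened signals, each of which has identity covariance by construction. Consequently the $2\bar{d}\times 2\bar{d}$ matrix
\[
\mqty[I & P(i,j) \\ P(i,j)^\top & I]
\]
is the covariance matrix of the stacked whitened vector $\left(K(i,i)^{-\frac{1}{2}}S(i),\, K(j,j)^{-\frac{1}{2}}S(j)\right)$, and is therefore symmetric and positive semidefinite; here I use the identity $P(j,i) = P(i,j)^\top$, which follows from the symmetry of $K(i,i)^{-\frac{1}{2}}$. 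With the top-left block equal to the invertible identity, the Schur-complement criterion for positive semidefiniteness then yields $I - P(i,j)^\top P(i,j) \succeq 0$, exactly as the Schur complement delivered $\sigma_\theta^2 - K_\theta(i)^\top K(i,i)^{-1}K_\theta(i) \ge 0$ in Lemma \ref{lem_cts_f}. This says every eigenvalue of $P(i,j)^\top P(i,j)$ is at most one, so $\vertd{P(i,j)}^2 = \lambda_{\max}\!\left(P(i,j)^\top P(i,j)\right) \le 1$, which combined with the first step completes the argument.

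I do not expect a genuine obstacle here, as the lemma is essentially a restatement that standardized cross-covariances are contractions. The only point requiring care is the bookkeeping that identifies the standardized block as the covariance of a whitened vector—so that positive semidefiniteness, and hence the Schur-complement bound, is legitimately available—together with the verification that $P(j,i) = P(i,j)^\top$ makes the displayed block matrix symmetric.
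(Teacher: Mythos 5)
Your argument is correct, and it reaches the key bound $\vertd{P(i,j)} \le 1$ by a route that differs in its mechanics from the paper's. The first step is identical: both you and the paper reduce the claim, via the block structure of $\bar{P}(i,j)$, to showing $\|P(i,j)\varphi(j)\| \le \|\varphi(j)\|$. For the second step, the paper works with the unstandardized covariances: it writes the operator norm as the largest singular value of $P(i,j)$, uses a similarity transformation to pass to $\lambda_{\max}\bigl(K(j,j)^{-1}K(j,i)K(i,i)^{-1}K(i,j)\bigr)$, invokes the matrix-valued Cauchy--Schwarz inequality of \cite{tripathi1999} to get positive semidefiniteness of $K(j,j) - K(j,i)K(i,i)^{-1}K(i,j)$, and then cites Theorem 7.7.3(a) of \cite{horn} to conclude the eigenvalue bound. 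You instead whiten first, observing that $P(i,j)$ is the cross-covariance of the standardized vectors $K(i,i)^{-\frac{1}{2}}S(i)$ and $K(j,j)^{-\frac{1}{2}}S(j)$, each with identity covariance, so that the stacked covariance matrix with identity diagonal blocks is positive semidefinite by construction, and a single Schur-complement step yields that $I - P(i,j)^\top P(i,j)$ is positive semidefinite, hence $\lambda_{\max}\bigl(P(i,j)^\top P(i,j)\bigr) \le 1$. Your version is more self-contained: it avoids the citations to the matrix Cauchy--Schwarz inequality and to the comparison theorem for $\lambda_{\max}$, and it parallels the Schur-complement argument already used for $P_\theta$ in Lemma \ref{lem_cts_f}, at the cost of needing the (trivial) identity $P(j,i) = P(i,j)^\top$, which requires both $K(i,j)^\top = K(j,i)$ and the symmetry of the matrix roots, not the latter alone as you state. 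Both proofs ultimately rest on the same fact that standardized cross-covariances are contractions; the paper's detour through singular values and similar matrices buys nothing extra here, so your streamlined argument is a perfectly good substitute.
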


\begin{proof}
By the definition of $\bar{P}$, we have
\[
\|\bar{P}(i,j)\bar{\varphi}(j)\|^2 = |\varphi_0(j)|^2 + \|P(i,j)\varphi(j)\|^2 \le |\varphi_0(j)|^2 + \|\varphi(j)\|^2 = \|\bar{\varphi}(j)\|^2,
\]
where the inequality follows from Lemma~\ref{lem_P}~\eqref{lem_P2}.
\end{proof}

\begin{lemma} \label{lem_HS_norm}
$T:\calL_2^{\bar{d}+1}[0,1] \to \calL_2^{\bar{d}+1}[0,1]$ is a bounded linear operator with $\vertd{T} \le \|w\|$.
\end{lemma}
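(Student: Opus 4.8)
The plan is to verify linearity (which is immediate) and then establish the operator-norm bound $\vertd{T} \le \|w\|$ by a pointwise-then-integrate estimate, relying on the preceding Lemma \ref{lem_matrix_CS} to control the matrix action of $\bar{P}(i,j)$ and on the Cauchy--Schwarz inequality to extract the $\calL_2$-norm of $w$. Linearity of $T$ follows directly from the linearity of the integral in \eqref{def_T}, so the whole content of the lemma is the norm estimate, which simultaneously shows that $T$ maps $\calL_2^{\bar{d}+1}[0,1]$ into itself.

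First I would fix $\bar{\varphi} \in \calL_2^{\bar{d}+1}[0,1]$ and estimate the integrand of $T\bar{\varphi}$ pointwise in $i$. Using the triangle inequality for the (vector-valued) integral, I would write
\begin{align*}
\|(T\bar{\varphi})(i)\| = \left\| \int_0^1 w(i,j)\bar{P}(i,j)\bar{\varphi}(j) \d j \right\| \le \int_0^1 |w(i,j)| \, \|\bar{P}(i,j)\bar{\varphi}(j)\| \d j.
\end{align*}
Then I would invoke Lemma \ref{lem_matrix_CS}, which gives $\|\bar{P}(i,j)\bar{\varphi}(j)\| \le \|\bar{\varphi}(j)\|$ for all $i,j$, to obtain the cleaner bound $\|(T\bar{\varphi})(i)\| \le \int_0^1 |w(i,j)| \, \|\bar{\varphi}(j)\| \d j$.

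Next I would apply the Cauchy--Schwarz inequality in the variable $j$ to the right-hand side, yielding
\begin{align*}
\|(T\bar{\varphi})(i)\|^2 \le \left( \int_0^1 |w(i,j)|^2 \d j \right) \left( \int_0^1 \|\bar{\varphi}(j)\|^2 \d j \right) = \left( \int_0^1 |w(i,j)|^2 \d j \right) \|\bar{\varphi}\|^2.
\end{align*}
Integrating this inequality over $i \in [0,1]$ and using Fubini (the integrand is nonnegative, so Tonelli applies unconditionally) gives
\begin{align*}
\|T\bar{\varphi}\|^2 = \int_0^1 \|(T\bar{\varphi})(i)\|^2 \d i \le \left( \int_0^1 \int_0^1 |w(i,j)|^2 \d j \d i \right) \|\bar{\varphi}\|^2 = \|w\|^2 \, \|\bar{\varphi}\|^2.
\end{align*}
Taking square roots yields $\|T\bar{\varphi}\| \le \|w\| \, \|\bar{\varphi}\|$, which shows both that $T\bar{\varphi} \in \calL_2^{\bar{d}+1}[0,1]$ (finiteness of the norm) and, upon dividing by $\|\bar{\varphi}\|$ and taking the supremum, that $\vertd{T} \le \|w\|$.

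I do not anticipate a genuine obstacle here, since every inequality is a routine application of results already established in the excerpt; the only points requiring a word of care are the (standard) measurability of $i \mapsto (T\bar{\varphi})(i)$, needed for the final integration, and the legitimacy of the vector-valued triangle inequality under the integral sign. Both are standard and can be dispatched briefly by noting that the integrand is jointly measurable (as $w$, $P$, and $\bar{\varphi}$ are measurable) and that $\|w\| < \infty$ by the standing square-integrability assumption on the payoff structure in Section \ref{sec_payoff}.
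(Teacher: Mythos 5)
Your proposal is correct and follows essentially the same route as the paper's proof: a pointwise bound on $\|(T\bar{\varphi})(i)\|$ using the triangle inequality under the integral sign, followed by Lemma \ref{lem_matrix_CS} to discard $\bar{P}(i,j)$, the Cauchy--Schwarz inequality in $j$, and integration over $i$ to extract $\|w\|$. The only cosmetic difference is that the paper justifies the first inequality via convexity of the norm and Jensen's inequality rather than by naming it the vector-valued triangle inequality, which is the same estimate.
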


\begin{proof}
Linearity of $T$ on $\calL_2^{\bar{d}+1}[0,1]$ is immediate.
For boundedness, take any $\bar{\varphi} \in \calL_2^{\bar{d}+1}[0,1]$.
For any $i \in [0,1]$, we have
\begin{align}
\|T\bar{\varphi}(i)\|
&\le \int_0^1 |w(i,j)| \cdot \|\bar{P}(i,j)\bar{\varphi}(j)\| \d j
\le \int_0^1 |w(i,j)| \cdot \|\bar{\varphi}(j)\| \d j \nonumber \\
&\le \qty(\int_0^1 |w(i,j)|^2 \d j)^{\frac{1}{2}} \cdot \qty(\int_0^1 \|\bar{\varphi}(j)\|^2 \d j)^{\frac{1}{2}}
= \qty(\int_0^1 |w(i,j)|^2 \d j)^{\frac{1}{2}} \cdot \|\bar{\varphi}\|, \label{eq_norm_bound_i}
\end{align}
where the first inequality follows from the triangle inequality, the second from Lemma~\ref{lem_matrix_CS}, and the third from the Cauchy--Schwarz inequality.
Squaring both sides and integrating over $i \in [0,1]$, we obtain
\[
\|T\bar{\varphi}\|^2
= \int_0^1 \|T\bar{\varphi}(i)\|^2 \d i
\le \qty(\int_0^1 \int_0^1 |w(i,j)|^2 \d j \d i) \cdot \|\bar{\varphi}\|^2
= \|w\|^2 \cdot \|\bar{\varphi}\|^2.
\]
Thus, $\|T\bar{\varphi}\| \le \|w\| \cdot \|\bar{\varphi}\|$.
Since $\bar{\varphi}$ is arbitrary, we conclude that $\vertd{T} \le \|w\|$.
\end{proof}

The next key lemma strengthens Lemma~\ref{lem_HS_norm}.
In addition to being bounded, the operator $T$ is compact, i.e., it maps any bounded sequence in $\calL_2^{\bar{d}+1}[0,1]$ into a pre-compact subset of $\calL_2^{\bar{d}+1}[0,1]$.
Since its verification requires an additional technical tool, we relegate the proof of Lemma~\ref{lem_compact} to Appendix~\ref{app_misc}.

\begin{lemma} \label{lem_compact}
$T: \calL_2^{\bar{d}+1}[0,1] \to \calL_2^{\bar{d}+1}[0,1]$ is a compact linear operator.
\end{lemma}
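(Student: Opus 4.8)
The plan is to prove that $T$ is a Hilbert--Schmidt operator, since every Hilbert--Schmidt operator on a Hilbert space is compact (see, e.g., \cite{krees}). The operator $T$ in \eqref{def_T} is an integral operator on $\calL_2^{\bar{d}+1}[0,1]$ with matrix-valued kernel $K(i,j) \coloneqq w(i,j)\bar{P}(i,j) \in \R^{(\bar{d}+1)\times(\bar{d}+1)}$, so the whole task reduces to checking that this kernel is square-integrable over $[0,1]^2$.

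The key estimate I would extract is an entrywise bound on $\bar{P}$. In the proof of Lemma \ref{lem_matrix_CS} it was already established that the largest singular value $\sigma_{ij}$ of $P(i,j)$ satisfies $\sigma_{ij} \le 1$, i.e.\ $\vertd{P(i,j)} \le 1$. Because $\bar{P}(i,j)$ is block-diagonal with diagonal blocks $1$ and $P(i,j)$, its singular values are $1$ together with those of $P(i,j)$, so $\vertd{\bar{P}(i,j)} \le 1$. Hence each entry satisfies $|\bar{P}_{kl}(i,j)| \le \vertd{\bar{P}(i,j)} \le 1$, and every scalar kernel $w(i,j)\bar{P}_{kl}(i,j)$ obeys
\begin{align*}
\int_0^1 \int_0^1 \big| w(i,j)\bar{P}_{kl}(i,j) \big|^2 \d j \d i \le \int_0^1 \int_0^1 |w(i,j)|^2 \d j \d i = \|w\|^2 < \infty,
\end{align*}
where finiteness is exactly the standing square-integrability assumption on $w$ from Section \ref{sec_payoff}.

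To finish, I would bridge to the classical scalar Hilbert--Schmidt theorem by decomposing $T$ coordinatewise: writing $\bar{\varphi} = (\varphi_0,\ldots,\varphi_{\bar{d}})$, the $k$-th coordinate of $T\bar{\varphi}$ equals $\sum_{l=0}^{\bar{d}} \int_0^1 w(i,j)\bar{P}_{kl}(i,j)\varphi_l(j)\d j$, a finite sum of scalar integral operators, each with an $\calL_2([0,1]^2)$ kernel by the display above. Each such scalar operator is Hilbert--Schmidt, hence compact, and a finite linear combination of compact operators is compact; assembling the $\bar{d}+1$ coordinates then yields compactness of $T$ on $\calL_2^{\bar{d}+1}[0,1]$. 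I do not expect a genuine obstacle here: the only point needing attention is passing from the matrix-valued kernel to the classical scalar statement, which the coordinate decomposition handles cleanly, while all the analytic content is already supplied by Lemma \ref{lem_matrix_CS} together with the integrability of $w$.
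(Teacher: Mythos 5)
Your proof is correct, but it takes a genuinely different route from the paper. The paper does not invoke the Hilbert--Schmidt criterion at all: it proves compactness via Lemma \ref{lem_FK}, i.e., it approximates the kernel $w\bar{P}$ in $\calL_2$ by compactly supported continuous kernels, deduces a uniform $L_2$-continuity of translates $\|T\psi(\cdot+\delta)-T\psi(\cdot)\|$ over bounded sets of $\psi$, and then applies the Fr\'echet--Riesz--Kolmogorov theorem to conclude pre-compactness of the image of the unit ball. Your argument is shorter and more standard for this particular statement: the entrywise bound $|\bar{P}_{kl}(i,j)|\le \vertd{\bar{P}(i,j)}\le 1$ (which indeed follows from the singular-value estimate in Lemma \ref{lem_matrix_CS} and the block-diagonal structure of $\bar{P}$) together with $w\in\calL_2$ puts each scalar kernel $w\bar{P}_{kl}$ in $\calL_2([0,1]^2)$, each coordinate operator is Hilbert--Schmidt hence compact, and the finite assembly is compact; the passage from the matrix-valued kernel to scalar blocks via the coordinate decomposition is exactly the right bookkeeping. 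What the paper's heavier machinery buys is reuse: Lemma \ref{lem_FK} is stated uniformly over a convergent sequence of kernels $w_nP_n\to wP$ precisely so that it also delivers Lemma \ref{lem_psi_conv}, where pre-compactness is needed for $\{T_n\bar{\psi}^n\}$ with $n$-dependent operators. Your route can recover that too, but only with an extra (easy) step—write $\bar{\psi}^n=T\bar{\psi}^n+(T_n-T)\bar{\psi}^n$ and use $\vertd{T-T_n}\to 0$ from Lemma \ref{lem_cts_op} together with compactness of the fixed operator $T$—so as a proof of Lemma \ref{lem_compact} alone your approach is cleaner, while the paper's is organized to serve both lemmas at once.
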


Having established that $T$ is a compact operator on the Hilbert space $\calL_2^{\bar{d}+1}[0,1]$, Theorem 3.4 of \cite{krees} implies that equation~\eqref{eq_fredholm} has a unique solution $\bar{\varphi}$ in $\calL_2^{\bar{d}+1}[0,1]$ if and only if the operator $I-T$ is injective.
At the same time, $I-T$ is injective if and only if there exists no nonzero element $\bar{\psi} \in \calL_2^{\bar{d}+1}[0,1]$ such that $\bar{\psi}=T\bar{\psi}$, which is equivalent to $1$ not being an eigenvalue of $T$.
These observations lead to the next lemma, where we write $T$ as $T_{w,P}$ to emphasize its dependence on primitives.

\begin{lemma} \label{lem_alt}
An LQG game $\calG = (w,b,c,\calP)$ has a unique affine equilibrium if and only if $I-T_{w,P}$ has a bounded inverse, which is further equivalent to $1 \notin \Lambda (T_{w,P})$.
\end{lemma}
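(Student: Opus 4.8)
The plan is to read the lemma off the Fredholm alternative, once we recognize that ``having a unique affine equilibrium'' is exactly the statement that the operator equation \eqref{eq_fredholm}, namely $(I-T)\bar\varphi = f$, has a unique solution in $\calL_2^{\bar d+1}[0,1]$ for the particular right-hand side $f$ supplied by the game's primitives. That reduction is already in hand from the discussion following Theorem \ref{thm_well}: an affine strategy profile is an equilibrium if and only if its coefficients solve \eqref{int_bne}, equivalently \eqref{eq_fredholm}. Moreover, $f \in \calL_2^{\bar d+1}[0,1]$ by Lemma \ref{lem_cts_f}, so the right-hand side is a legitimate element of the ambient Hilbert space.

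First I would invoke Lemma \ref{lem_compact}, which tells us that $T$ is a compact operator on $\calL_2^{\bar d+1}[0,1]$, so that the Riesz--Fredholm theory applies to $I-T$. The Fredholm alternative (Theorem 3.4 of \cite{krees}) then delivers the dichotomy: either $\Ker(I-T) = \{0\}$, in which case $I-T$ is bijective and its inverse $(I-T)^{-1}$ is bounded; or $\Ker(I-T) \neq \{0\}$, in which case $I-T$ is neither injective nor surjective. The essential fact this theory buys us is that, for a compact perturbation of the identity, injectivity already forces bijectivity with a bounded inverse.

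The crux of the argument --- and the only place requiring care --- is to bridge the gap between uniqueness of the solution for the \emph{specific} $f$ and invertibility of $I-T$ for \emph{all} right-hand sides. I would argue both implications. If $I-T$ is injective, then by the alternative it is bijective with bounded inverse, so $(I-T)\bar\varphi = f$ has exactly one solution, giving a unique affine equilibrium. Conversely, if $I-T$ fails to be injective, then $\Ker(I-T)$ is nontrivial, and for the game's $f$ we land in one of two cases: either $f \notin \Ran(I-T)$, so no affine equilibrium exists, or $f \in \Ran(I-T)$, so adding any nonzero kernel element yields a second solution and the equilibrium is not unique. In either case uniqueness fails, which is the contrapositive.

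Finally, the equivalence with the spectral condition is essentially definitional: $I-T$ is injective if and only if the homogeneous equation $T\bar\psi = \bar\psi$ admits only $\bar\psi = 0$, which is precisely the statement that $1$ is not an eigenvalue of $T$, i.e.\ $1 \notin \Lambda(T_{w,\calP})$. I do not anticipate any genuine obstacle beyond being careful with the two-case analysis in the contrapositive; the heavy lifting has already been done in establishing compactness of $T$ (Lemma \ref{lem_compact}), which is exactly what makes the Fredholm alternative --- and hence the automatic passage from injectivity to bounded invertibility --- available in the first place.
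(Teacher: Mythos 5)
Your proposal is correct and follows essentially the same route as the paper: the paper likewise reduces unique existence of an affine equilibrium to unique solvability of $(I-T)\bar{\varphi}=f$ in $\calL_2^{\bar{d}+1}[0,1]$, invokes compactness of $T$ (Lemma \ref{lem_compact}) so that the Fredholm alternative yields the equivalence with injectivity (hence bounded invertibility) of $I-T$, and notes that injectivity is by definition the condition $1 \notin \Lambda(T_{w,\calP})$. Your explicit two-case analysis in the converse direction (no solution versus a continuum of solutions obtained by adding kernel elements) is exactly the dichotomy the paper records in its remark on knife-edge cases, so nothing is missing.
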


Any compact operator has the important property that its set of eigenvalues has no accumulation point, except possibly $0$ \citep[Theorem 3.9 of][]{krees}.
Hence, if $1 \in \Lambda (T_{w,P})$, then we can take $\delta > 0$ such that $[1,1+\delta] \cap \Lambda (T_{w,P}) = \{1\}$.
Given any small $\epsilon > 0$, multiplying both sides by $(1-\epsilon)$ gives
\[
\qty[1-\epsilon,\, (1-\epsilon)(1+\delta)] \cap (1-\epsilon)\Lambda (T_{w,P}) = \{1-\epsilon\},
\]
while the construction of $T_{w,P}$ and the homogeneity of eigenvalues imply
\[
(1-\epsilon) \Lambda \qty(T_{w,P}) = \Lambda \qty((1-\epsilon)T_{w,P}) = \Lambda (T_{(1-\epsilon)w,\calP}).
\]
Moreover, $1 \in \qty[1-\epsilon,\, (1-\epsilon)(1+\delta)]$ whenever $\epsilon \le \delta/(1+\delta)$.
Thus, the LQG game $\calG_\epsilon$ obtained from $\calG$ by replacing $w$ with $(1-\epsilon)w$ has a unique affine equilibrium whenever $\epsilon \in (0,\delta/(1+\delta)]$.
Since the distance between $\calG$ and $\calG_\epsilon$ can be made arbitrarily small, this shows that unique existence is a ``dense'' property.

Next, we show that unique existence is an ``open'' property.
To this end, consider any sequence of LQG games $\calG_n = (w_n,b_n,c_n,\calP_n)$ that converges to the limit LQG game $\calG = (w,b,c,\calP)$, and assume that $\calG$ has a unique affine equilibrium.
Let $T_n$ and $f_n$ be defined as in \eqref{def_T} and \eqref{def_f}, using the primitives of $\calG_n$.
We use $T$ and $f$ for the corresponding objects associated with $\calG$.
By Lemma~\ref{lem_alt}, we have $1 \notin \Lambda (T)$.

\begin{lemma} \label{lem_cts_op}
$\|f-f_n\| \to 0$ and $\vertd{T - T_n} \to 0$ as $n \to \infty$.
\end{lemma}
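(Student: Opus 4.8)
The plan is to prove both limits by the same two-part recipe: peel off the "easy" differences with the triangle inequality, and control the single genuinely delicate cross term by a truncation (uniform-integrability) argument that leverages the pointwise bounds $\|P_\theta(i)\| \le 1$ and $\vertd{P(i,j)} \le 1$ already recorded inside the proofs of Lemmas \ref{lem_cts_f} and \ref{lem_matrix_CS}. Throughout I use that convergence in the norm (\ref{norm_lqg}) means $\|b-b_n\|$, $\|c-c_n\|$, $\|w-w_n\|$, $\|P-P_n\|$, and $\|P_\theta-P_{\theta,n}\|$ each tend to $0$, where I write $\calP_n = (P_n,P_{\theta,n})$ for the information structure of $\calG_n$.

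For $\|f-f_n\| \to 0$: by (\ref{def_f}), the intercept block of $f-f_n$ equals $\mu_\theta(b-b_n)+(c-c_n)$, whose norm is at most $|\mu_\theta|\,\|b-b_n\|+\|c-c_n\| \to 0$. For the slope block, which equals $\sigma_\theta(bP_\theta - b_nP_{\theta,n})$, I would write $bP_\theta - b_nP_{\theta,n} = (b-b_n)P_{\theta,n} + b(P_\theta - P_{\theta,n})$. The first summand is controlled by the bound (\ref{eq_bound_theta}), which holds verbatim for every information structure and gives $\|P_{\theta,n}(i)\| \le 1$ pointwise; hence its $\calL_2$-norm is at most $\|b-b_n\| \to 0$. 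The second summand contributes $\int_0^1 |b(i)|^2 \|P_\theta(i)-P_{\theta,n}(i)\|^2 \d i$, which is where the real work lies.

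The main obstacle, and the crux of the lemma, is precisely this last integral: $b$ is only square-integrable, so $|b|^2$ is merely $L_1$, while $P_\theta - P_{\theta,n} \to 0$ only in $\calL_2$ and not uniformly, so Hölder does not close the gap directly. I would resolve it by truncation. Set $g_n(i) \coloneqq \|P_\theta(i)-P_{\theta,n}(i)\|^2$; by (\ref{eq_bound_theta}) one has $0 \le g_n \le 4$ pointwise, while $\int_0^1 g_n = \|P_\theta - P_{\theta,n}\|^2 \to 0$. Given $\epsilon > 0$, choose $M$ so large that $4\int_{\{|b|^2 > M\}} |b(i)|^2 \d i < \epsilon/2$ (possible since $|b|^2$ is integrable, so its tail integral vanishes as $M \to \infty$), and then $n$ large enough that $M\int_0^1 g_n < \epsilon/2$; splitting the domain according to whether $|b|^2 > M$ yields $\int_0^1 |b|^2 g_n < \epsilon$. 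Hence the slope block vanishes in the limit and $\|f-f_n\| \to 0$.

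For $\vertd{T-T_n} \to 0$: I would reuse the Hilbert--Schmidt estimate from Lemma \ref{lem_HS_norm}, giving $\vertd{T-T_n}^2 \le \int_0^1\int_0^1 \vertd{D_n(i,j)}^2 \d j \d i$ with $D_n(i,j) \coloneqq w(i,j)\bar{P}(i,j) - w_n(i,j)\bar{P}_n(i,j)$. Decompose $D_n = w\,(\bar{P}-\bar{P}_n) + (w-w_n)\bar{P}_n$; since $\bar{P}-\bar{P}_n$ differs from zero only in the lower-right block, $\vertd{\bar{P}(i,j)-\bar{P}_n(i,j)} = \vertd{P(i,j)-P_n(i,j)}$, and the singular-value estimate inside the proof of Lemma \ref{lem_matrix_CS} gives $\vertd{P_n(i,j)} \le 1$, whence $\vertd{\bar{P}_n(i,j)} \le 1$. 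Thus $\vertd{D_n(i,j)}^2 \le 2|w(i,j)|^2\vertd{P(i,j)-P_n(i,j)}^2 + 2|w(i,j)-w_n(i,j)|^2$. Integrating, the second term contributes $2\|w-w_n\|^2 \to 0$, and the first, $2\int_0^1\int_0^1 |w|^2\vertd{P-P_n}^2$, is handled by exactly the truncation argument above, now with $h = |w|^2 \in L_1([0,1]^2)$ and $g_n = \vertd{P-P_n}^2 \le 4$ satisfying $\int\int g_n = \|P-P_n\|^2 \to 0$. This gives $\vertd{T-T_n} \to 0$ and completes the proof.
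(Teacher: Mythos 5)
Your proposal is correct, and its skeleton coincides with the paper's: the same splitting $bP_\theta-b_nP_{\theta,n}=b(P_\theta-P_{\theta,n})+(b-b_n)P_{\theta,n}$ with the pointwise bound (\ref{eq_bound_theta}) for the second piece, and the same Hilbert--Schmidt-type estimate $\vertd{T-T_n}^2\le\int_0^1\int_0^1\vertd{w\bar{P}-w_n\bar{P}_n}^2$ followed by a decomposition into a $\|w-w_n\|$ part and a $|w|\cdot\vertd{\bar{P}-\bar{P}_n}$ part. Where you genuinely differ is in how the one delicate term ($\int|b|^2\|P_\theta-P_{\theta,n}\|^2$, resp.\ $\int\int|w|^2\vertd{P-P_n}^2$) is killed: the paper invokes the dominated convergence theorem with dominating functions $2|b|^2$ and $4|w|^2$, passing the limit inside the integral, whereas you use a truncation/uniform-integrability argument on the level set $\{|b|^2>M\}$ that needs only $\int g_n\to 0$ and $g_n\le 4$. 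Your route is slightly longer but self-contained on a point the paper glosses over: $L_2$-convergence of $P_{\theta,n}$ (or $P_n$) gives pointwise a.e.\ convergence only along subsequences, so the paper's DCT step, read literally, needs a subsequence-of-subsequences remedy, while your argument works directly from norm convergence. Your use of $(x+y)^2\le 2x^2+2y^2$ in place of the paper's expansion into three terms (with the cross term handled by Cauchy--Schwarz) is a further minor simplification; both yield the same conclusion.
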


\begin{proof}
We first show that $b_nP_{\theta,n}:[0,1] \to \R^{\bar{d}}$ converges to $bP_\theta:[0,1] \to \R^{\bar{d}}$ in $L_2$.
The triangle inequality yields
\[
\|bP_\theta - b_nP_{\theta,n}\| \le \|b(P_\theta - P_{\theta,n})\| + \|(b-b_n)P_{\theta,n}\|.
\]
By Lemma~\ref{lem_P}~\eqref{lem_P1}, the second term is bounded by $\|b-b_n\|$, which converges to $0$ as $n \to \infty$.
Moreover, since $\|P_\theta(i)-P_{\theta,n}(i)\|$ is uniformly bounded by $2$ by Lemma~\ref{lem_P}~\eqref{lem_P1}, the dominated convergence theorem implies
\[
\lim_{n \to \infty} \|b(P_\theta - P_{\theta,n})\|^2 = \int_0^1 |b(i)|^2 \cdot \qty(\lim_{n \to \infty} \|P_\theta(i) - P_{\theta,n}(i)\|^2) \d i = 0.
\]
Thus, $\|bP_\theta-b_nP_{\theta,n}\| \to 0$ as $n \to \infty$, which in turn implies that
\[
\lim_{n\to \infty} \|f-f_n\|^2 = \lim_{n\to \infty} \qty(\|\mu_\theta (b-b_n) + (c-c_n)\|^2 + \sigma^2_\theta \|bP_\theta - b_n P_{\theta,n} \|^2) = 0.
\]

To show that $\vertd{T - T_n} \to 0$, take any $\bar{\varphi} \in \calL_2^{\bar{d}+1}[0,1]$.
For any $i \in [0,1]$, calculations analogous to \eqref{eq_norm_bound_i} give
\[
\left\|(T-T_n)(\bar{\varphi})[i] \right\|
\le \qty(\int_0^1 \vertb{w(i,j)\bar{P}(i,j) - w_n(i,j)\bar{P}_n(i,j)}^2 \d j)^{\frac{1}{2}} \cdot \|\bar{\varphi}\|.
\]
Since $\bar{\varphi}$ is arbitrary, squaring both sides and integrating over $i \in [0,1]$, we obtain the following upper bound for the operator norm:
\begin{equation} \label{eq_T_conv1}
\vertd{T-T_n}^2 \le \int_0^1 \int_0^1 \vertb{w(i,j)\bar{P}(i,j) - w_n(i,j)\bar{P}_n(i,j)}^2 \d j \d i.
\end{equation}
By Lemma~\ref{lem_matrix_CS}, $\vertt{\bar{P}(i,j)} \le 1$ and $\vertt{\bar{P}_n(i,j)} \le 1$ for all $i,j \in [0,1]$ and $n \in \N$.
Together with the triangle inequality, it holds for all $i,j \in [0,1]$ that
\[
\vertb{w(i,j)\bar{P}(i,j) - w_n(i,j)\bar{P}_n(i,j)} \le |w(i,j)| \cdot \vertb{\bar{P}(i,j) - \bar{P}_n(i,j)} + \qty|w(i,j)-w_n(i,j)|.
\]
Squaring both sides and substituting into \eqref{eq_T_conv1}, it follows that
\begin{align*}
\vertd{T-T_n}^2 \le
& \underbrace{\int_0^1 \int_0^1 \qty|w(i,j) - w_n(i,j)|^2 \d j \d i}_{\rm (a)} \\
& \quad + \underbrace{\int_0^1 \int_0^1 \qty|w(i,j)|^2 \cdot \vertb{\bar{P}(i,j) - \bar{P}_n(i,j)}^2 \d j \d i}_{\rm (b)} \\
& \quad + \underbrace{2 \int_0^1 \int_0^1 \qty|w(i,j)| \cdot \qty|w(i,j)-w_n(i,j)| \cdot \vertb{\bar{P}(i,j) - \bar{P}_n(i,j)} \d j \d i}_{\rm (c)}.
\end{align*}
By assumption, ${\rm (a)}$ converges to $0$ as $n \to \infty$.
Also, Lemma~\ref{lem_matrix_CS} and the Cauchy--Schwarz inequality imply that
\begin{align*}
{\rm (c)} 
&\le 4 \int_0^1 \int_0^1 \qty|w(i,j)| \cdot \qty|w(i,j)-w_n(i,j)| \d j \d i \\
&\le 4 \qty(\int_0^1 \int_0^1 \qty|w(i,j)|^2 \d j \d i)^\frac{1}{2} \cdot \qty(\int_0^1 \int_0^1 \qty|w(i,j)-w_n(i,j)|^2 \d j \d i)^\frac{1}{2}
= 4 \|w\| \cdot \|w-w_n\|,
\end{align*}
where the right-hand side converges to $0$ as $n \to \infty$.
As for ${\rm (b)}$, since $P_n \to P$ in $L_2$ and $\vertb{\bar{P}(i,j)-\bar{P}_n(i,j)}$ is uniformly bounded by $2$, the dominated convergence theorem yields
\begin{align*}
\lim_{n\to \infty} &\int_0^1 \int_0^1 \qty|w(i,j)|^2 \cdot \vertb{\bar{P}(i,j) - \bar{P}_n(i,j)}^2 \d j \d i \\
= &\int_0^1 \int_0^1 \qty|w(i,j)|^2 \cdot \qty( \lim_{n \to \infty} \vertb{\bar{P}(i,j) - \bar{P}_n(i,j)}^2) \d j \d i = 0.
\end{align*}
Putting these estimates together, we conclude that $\vertd{T-T_n} \to 0$ as $n \to \infty$.
\end{proof}

Now, suppose by contradiction that there is no unique affine equilibrium in any of $\calG_n$.
Then, by Lemma~\ref{lem_alt}, we have $1 \in \Lambda (T_n)$ for all $n \in \N$, while $1 \notin \Lambda (T)$ by assumption.
Thus, for each $n$, there exists a nonzero element $\bar{\psi}^n \in \calL_2^{\bar{d}+1}[0,1]$ such that $(I-T_n)(\bar{\psi}^n) = 0$.
By the linearity of eigenvectors, we can normalize each $\bar{\psi}^n$ so that $\|\bar{\psi}^n\| = 1$, yielding a bounded sequence $\{\bar{\psi}^n\}_{n \in \N}$.
The next lemma shows that this sequence has a convergent subsequence.
Its proof is handled together with the proof of Lemma~\ref{lem_compact}, and is therefore relegated to Appendix~\ref{app_misc}.

\begin{lemma} \label{lem_psi_conv}
$\{\bar{\psi}^n\}_{n \in \N}$ contains a convergent subsequence with limit $\bar{\psi} \in \calL_2^{\bar{d}+1}[0,1]$.
\end{lemma}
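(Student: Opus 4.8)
The plan is to exploit the eigen-relation supplied by the hypothesis, namely that $(I-T_n)(\bar{\psi}^n)=0$ is equivalent to $\bar{\psi}^n = T_n\bar{\psi}^n$, and to combine it with the two facts already established: that the fixed limit operator $T$ is compact (Lemma \ref{lem_compact}) and that $T_n \to T$ in operator norm (Lemma \ref{lem_cts_op}). The idea is to split the right-hand side into a piece governed by $T$ alone plus a vanishing perturbation. Concretely, I would write
\begin{align*}
\bar{\psi}^n = T_n\bar{\psi}^n = T\bar{\psi}^n + (T_n - T)\bar{\psi}^n.
\end{align*}

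Since $T$ is compact and the sequence is bounded with $\|\bar{\psi}^n\| = 1$, the image sequence $\{T\bar{\psi}^n\}_{n \in \N}$ is pre-compact in $\calL_2^{\bar{d}+1}[0,1]$, so I can extract a subsequence along which $T\bar{\psi}^{n_k}$ converges to some limit $\bar{\psi} \in \calL_2^{\bar{d}+1}[0,1]$. Meanwhile, the perturbation term is controlled by the operator norm: $\|(T_{n_k} - T)\bar{\psi}^{n_k}\| \le \vertd{T - T_{n_k}} \cdot \|\bar{\psi}^{n_k}\| = \vertd{T - T_{n_k}} \to 0$ by Lemma \ref{lem_cts_op}. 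Adding the two pieces along this common subsequence then yields $\bar{\psi}^{n_k} = T\bar{\psi}^{n_k} + (T_{n_k} - T)\bar{\psi}^{n_k} \to \bar{\psi} + 0 = \bar{\psi}$, which is exactly the asserted convergent subsequence.

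I do not expect any serious obstacle here, as this is the standard perturbation argument for compact operators; the one point requiring care is to apply compactness to the \emph{fixed} limit operator $T$ rather than to the varying $T_n$. Using $T$ produces a single pre-compact image set $\{T\bar{\psi}^n\}$ from which one extraction suffices, whereas invoking compactness of each $T_n$ separately would change the operator at every index and would not directly deliver a convergent subsequence of the $\bar{\psi}^n$ themselves. As a by-product, the limit inherits $T\bar{\psi} = \bar{\psi}$ and $\|\bar{\psi}\| = 1$, so that $1 \in \Lambda(T)$, contradicting the running assumption $1 \notin \Lambda(T)$; this final contradiction, however, belongs to the main proof of the openness statement rather than to the lemma itself.
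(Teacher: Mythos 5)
Your proof is correct, but it takes a genuinely different route from the paper's. The paper does not decompose $\bar{\psi}^n$ at all: it observes that $\bar{\psi}^n = T_n\bar{\psi}^n$ places the whole sequence inside the family $\{T_{M_n}\psi\}_{\psi \in \Psi,\, n \in \N}$ for the varying kernels $M_n = w_n\bar{P}_n$ converging in $L_2$ to $w\bar{P}$, and then invokes the uniform pre-compactness conclusion of Lemma \ref{lem_FK} (the Fr\'echet--Riesz--Kolmogorov argument) for that family; this is why Lemmas \ref{lem_compact} and \ref{lem_psi_conv} are proved side by side in Appendix \ref{app_misc} as twin corollaries of Lemma \ref{lem_FK}, and why the paper's proof explicitly notes that it does not rely on Lemma \ref{lem_compact}. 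You instead use the classical compact-perturbation decomposition $\bar{\psi}^n = T\bar{\psi}^n + (T_n - T)\bar{\psi}^n$, extract a convergent subsequence from the pre-compact image $\{T\bar{\psi}^n\}$ of the fixed compact limit operator (Lemma \ref{lem_compact}), and kill the remainder via $\vertd{T - T_n} \to 0$ (Lemma \ref{lem_cts_op}); both inputs are legitimately available, since neither of their proofs uses Lemma \ref{lem_psi_conv}, so there is no circularity. What each approach buys: yours is shorter and more textbook-standard, reusing results already on record and isolating exactly why operator-norm convergence to a compact limit suffices; the paper's avoids compactness of $T$ as an input and delivers both lemmas in one stroke from the single uniform estimate of Lemma \ref{lem_FK}, at the cost of carrying that heavier machinery. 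Your closing remark is also placed correctly: the conclusions $\bar{\psi} = T\bar{\psi}$, $\|\bar{\psi}\| = 1$, and the contradiction with $1 \notin \Lambda(T)$ belong to the openness argument in the main proof of Theorem \ref{thm_well}, not to the lemma itself.
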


Abusing notation, let $\{\bar{\psi}^n\}_{n \in \N}$ denote the convergent subsequence found in Lemma~\ref{lem_psi_conv}, and let $\bar{\psi} \in \calL_2^{\bar{d}+1}[0,1]$ be its limit.
Since $\|\bar{\psi}^{n}\| = 1$ for all $n$, we have $\|\bar{\psi}\| = 1$, so $\bar{\psi}$ is nonzero.
Moreover, the triangle inequality yields
\begin{align*}
&\|(I-T)(\bar{\psi}) - (I-T_n)(\bar{\psi}^n) \| \nonumber \\
& \quad \le \| \bar{\psi} - \bar{\psi}^n \| + \| T\bar{\psi} - T_n\bar{\psi}^n \| \nonumber \\
& \quad \le \| \bar{\psi} - \bar{\psi}^n \| + \| T (\bar{\psi} - \bar{\psi}^n) \| + \| \qty(T - T_n) \bar{\psi}^n \| \nonumber \\
& \quad \le \| \bar{\psi} - \bar{\psi}^n \| + \vertd{T} \cdot \|\bar{\psi} - \bar{\psi}^n \| + \vertd{T - T_n} \cdot \|\bar{\psi}^n \|,
\end{align*}
where the last line converges to $0$ as $n \to \infty$ by Lemma~\ref{lem_cts_op}.
Hence, it follows that
\begin{align*}
\|(I-T)(\bar{\psi})\| \le \underbrace{\|(I-T)(\bar{\psi}) - (I-T_n)(\bar{\psi}^n) \|}_{\to \hspace{1pt} 0 \text{ as } n \hspace{1pt} \to \hspace{1pt} \infty} + \underbrace{\|(I-T_n)(\bar{\psi}^n) \|}_{= \hspace{1pt} 0 \text{ by assumption}},
\end{align*}
which implies that $\bar{\psi} = T(\bar{\psi})$.
However, since $\bar{\psi}$ is nonzero, this implies $1 \in \Lambda (T)$, a contradiction.
Therefore, we have shown that whenever $\calG_n \to \calG$ and $\calG$ has a unique affine equilibrium, $\calG_n$ also has a unique affine equilibrium for every sufficiently large $n$.

Lastly, we show that affine equilibria converge whenever the corresponding sequence of primitives converges.
Let $\bar{\varphi}$ be the unique affine equilibrium of $\calG$, and let $\bar{\varphi}^n$ be the unique affine equilibrium of $\calG_n$, where $\calG_n \to \calG$.
By Lemma~\ref{lem_alt}, $I-T_n$ has a bounded inverse for every $n \in \N$.
The next lemma shows that these inverse operators are uniformly bounded.

\begin{lemma} \label{lem_cts_inv}
It holds that
\begin{equation} \label{bound_inv}
\sup_{n \in \N} \vertb{(I-T_n)^{-1}} < \infty.
\end{equation}
\end{lemma}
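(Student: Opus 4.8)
The plan is to exploit the operator-norm convergence $\vertd{T - T_n} \to 0$ established in Lemma~\ref{lem_cts_op}, together with the invertibility of $I-T$, via a standard Neumann-series perturbation argument. Since $\calG$ has a unique affine equilibrium, Lemma~\ref{lem_alt} guarantees that $I-T$ has a bounded inverse; write $C \coloneqq \vertd{(I-T)^{-1}} < \infty$.

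First I would factor, for each $n$,
\[
I - T_n = (I-T) - (T_n - T) = (I-T)\bigl[I - (I-T)^{-1}(T_n - T)\bigr].
\]
Setting $A_n \coloneqq (I-T)^{-1}(T_n - T)$, sub-multiplicativity of the operator norm gives $\vertd{A_n} \le C \cdot \vertd{T_n - T}$. By Lemma~\ref{lem_cts_op} we may choose $N_0 \in \N$ so large that $\vertd{T_n - T} \le 1/(2C)$ whenever $n \ge N_0$, whence $\vertd{A_n} \le 1/2$. For such $n$, the Neumann series $\sum_{k=0}^\infty A_n^k$ converges and furnishes a bounded inverse of $I - A_n$ with $\vertd{(I - A_n)^{-1}} \le (1 - \vertd{A_n})^{-1} \le 2$. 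Consequently $I - T_n$ is invertible and
\[
(I-T_n)^{-1} = (I-A_n)^{-1}(I-T)^{-1}, \qquad \vertd{(I-T_n)^{-1}} \le 2C,
\]
for every $n \ge N_0$.

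It remains to handle the finitely many indices $n < N_0$. Under the standing assumption (made without loss just before the lemma) that each $\calG_n$ possesses a unique affine equilibrium, Lemma~\ref{lem_alt} ensures that each $I - T_n$ already has a bounded inverse, so $\max_{n < N_0} \vertd{(I-T_n)^{-1}}$ is a finite maximum of finitely many finite quantities. Combining the two regimes yields
\[
\sup_{n \in \N} \vertd{(I-T_n)^{-1}} \le \max\Bigl\{\, 2C,\ \max_{n < N_0} \vertd{(I-T_n)^{-1}} \,\Bigr\} < \infty,
\]
which is the desired bound. I do not anticipate any genuine obstacle here: the whole argument is the classical fact that inversion is continuous on the open set of invertible operators, and the only point requiring care is to separate the uniform tail bound for $n \ge N_0$ (controlled by the Neumann series) from the finite initial segment (controlled by finiteness of each individual inverse).
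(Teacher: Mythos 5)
Your proof is correct, and it follows the same overall skeleton as the paper's: both invoke Lemma \ref{lem_cts_op} to make $\vertd{T-T_n}$ smaller than $1/(2\vertd{(I-T)^{-1}})$ beyond some finite index, both extract the uniform tail bound $2\vertd{(I-T)^{-1}}$, and both dispose of the finitely many remaining indices by noting that each $I-T_n$ individually has a bounded inverse under the standing assumption that every $\calG_n$ has a unique affine equilibrium. The one place where you diverge is the mechanism for the tail bound: the paper applies the resolvent identity $(I-T_n)^{-1}-(I-T)^{-1}=(I-T_n)^{-1}(T-T_n)(I-T)^{-1}$ and then rearranges the resulting self-referential inequality for $\vertb{(I-T_n)^{-1}}$ --- a step that presupposes the boundedness of $(I-T_n)^{-1}$, which is exactly what the standing assumption supplies --- whereas you factor $I-T_n=(I-T)\bigl[I-(I-T)^{-1}(T_n-T)\bigr]$ and invert the bracket by a Neumann series. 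Your route has the small advantage of re-establishing invertibility of $I-T_n$ for all large $n$ rather than importing it, so the assumption on $\calG_n$ is used only where it is genuinely needed, on the finite initial segment; the paper's route is marginally shorter because invertibility is already in hand. Either way the threshold, the bound $2\vertd{(I-T)^{-1}}$, and the concluding finite maximum are identical, and there is no gap in your argument.
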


\begin{proof}
Since $I-T$ has a bounded inverse, we have $\vertt{(I-T)^{-1}} < \infty$.
Moreover, since $T_n$ converges to $T$ by Lemma~\ref{lem_cts_op}, there exists $n_0 \in \N$ such that 
\begin{equation} \label{eq_large_n_T}
\vertd{T-T_n} < \frac{1}{2\vertd{(I-T)^{-1}}}, \quad \forall n \ge n_0.
\end{equation}
For an arbitrary $n \in \N$, the triangle inequality yields
\begin{align*}
\vertb{(I-T_n)^{-1}}
&\le \vertb{(I-T_n)^{-1} - (I-T)^{-1}} + \vertb{(I-T)^{-1}} \\
&= \vertb{(I-T_n)^{-1}(T-T_n)(I-T)^{-1}} + \vertb{(I-T)^{-1}} \\
&\le \vertb{(I-T_n)^{-1}} \cdot \vertb{T-T_n} \cdot \vertb{(I-T)^{-1}} + \vertb{(I-T)^{-1}},
\end{align*}
where the second line follows from the resolvent identity \citep[see, e.g., equation (24) on p.\ 198 of][]{lax}, and the third line follows from the definition of the operator norm.
Now, if $n \ge n_0$ so that \eqref{eq_large_n_T} holds, we have
\[
\vertb{(I-T_n)^{-1}} \le \frac{\vertd{(I-T)^{-1}}}{1-\vertd{(I-T)^{-1}} \cdot \vertd{T-T_n}} < \frac{\vertd{(I-T)^{-1}}}{1-\frac{1}{2}} = 2 \vertb{(I-T)^{-1}},
\]
which gives a uniform bound for $\vertt{(I-T_n)^{-1}}$ for all $n \ge n_0$.
Since $n_0$ is finite and $I-T_n$ has a bounded inverse for each $n < n_0$, it follows that $\sup_{n \in \N} \vertt{(I-T_n)^{-1}} < \infty$.
\end{proof}

By \eqref{eq_fredholm}, the unique affine equilibria are given by $\bar{\varphi} = (I-T)^{-1}(f)$ and $\bar{\varphi}^n = (I-T_n)^{-1}(f_n)$ for each $n \in \N$.
Then,
\begin{align}
&\|\bar{\varphi} - \bar{\varphi}^n\|
= \bigl\|(I-T)^{-1}(f) - (I-T_n)^{-1}(f_n) \bigr\| \nonumber \\
& \quad \le \bigl\|(I-T)^{-1}(f-f_n) \bigr\| + \bigl\| \bigl[(I-T)^{-1}-(I-T_n)^{-1} \bigr](f_n) \bigr\| \nonumber \\
& \quad = \bigl\|(I-T)^{-1}(f-f_n) \bigr\| + \bigl\| \bigl[ (I-T)^{-1}(T-T_n)(I-T_n)^{-1} \bigr](f_n) \bigr\| \nonumber \\
& \quad \le \vertb{(I-T)^{-1}} \cdot \|f-f_n\| + \vertb{(I-T)^{-1}} \cdot \vertb{T-T_n} \cdot \vertb{(I-T_n)^{-1}} \cdot \|f_n\| \nonumber \\
& \quad \le \vertb{(I-T)^{-1}} \cdot \biggl(\|f-f_n\| + \underbrace{\biggl(\sup_{n \in \N}\vertb{(I-T_n)^{-1}} \cdot \sup_{n \in \N} \|f_n\| \biggr)}_{(*)} \cdot \vertb{T-T_n}  \biggr), \label{eq_affine_conv}
\end{align}
where the second line follows from the triangle inequality, the third from the resolvent identity, and the fourth from the definition of the operator norm.
In the last line of \eqref{eq_affine_conv}, the term $(*)$ is finite because $f_n$ is convergent and Lemma~\ref{lem_cts_inv} gives the uniform boundedness of $\vertb{(I-T_n)^{-1}}$.
Lemma~\ref{lem_cts_op} then implies that $\|f-f_n\|$ and $\vertt{T-T_n}$ converge to $0$ as $n \to \infty$.
Therefore, $\|\bar{\varphi} - \bar{\varphi}^n\| \to 0$ as $n \to \infty$.
\hfill {\it Q.E.D.}

%%%%%%%%%%%%%%%%%%%%%%%%%%%%%%%%%%%%%%%%%%%%%%%%%%%%%%%%%%%%%%%%%%%%%%%%%%%%%%%%%%%%%%%%%%
%%%%%%%%%%%%%%%%%%%%%%%%%%%%%%%%%%%%%%%%%%%%%%%%%%%%%%%%%%%%%%%%%%%%%%%%%%%%%%%%%%%%%%%%%%

\subsection{Proof of Theorem \ref{thm_equiv}}

\begin{proof}[Proof of Lemma~\ref{lem_canonical_psd}]
Part~\eqref{canonical_psd1} is equivalent to the normalization condition $\Var\qty[S^*(i)] = 1$.
Moreover, as per \eqref{P_cano}, the covariance matrix \eqref{cov_matrix_join} for any canonical information structure is given as
\[
\mqty[{\bm K}(N) & {\bm K}_\theta(N) \\ {\bm K}_\theta(N)^\top & \sigma^2_\theta] 
= \mqty[\sigma^2_\theta h(i_1)^2 + g(i_1,i_1) & \cdots & \sigma^2_\theta h(i_1)h(i_n) + g(i_1,i_n) & \sigma^2_\theta h(i_1) \\
\vdots & \ddots & \vdots & \vdots \\
\sigma^2_\theta h(i_n)h(i_1) + g(i_1, i_n) & \cdots & \sigma^2_\theta h(i_n)^2 + g(i_n,i_n) & \sigma^2_\theta h(i_n) \\
\sigma^2_\theta h(i_1) & \cdots & \sigma^2_\theta h(i_n) & \sigma^2_\theta].
\]
Clearly, this matrix is symmetric if and only if Part~\eqref{canonical_psd2} is satisfied.
Moreover, since $\sigma^2_\theta > 0$, the matrix is positive semidefinite if and only if the following Schur complement is positive semidefinite:
\begin{align*}
\mqty[\sigma^2_\theta h(i_1)^2 + g(i_1,i_1) & \cdots & \sigma^2_\theta h(i_1)h(i_n) + g(i_1,i_n) \\
\vdots & \ddots & \vdots \\
\sigma^2_\theta h(i_n)h(i_1) + g(i_1, i_n) & \cdots & \sigma^2_\theta h(i_n)^2 + g(i_n,i_n)]
- \frac{1}{\sigma^2_\theta} \mqty[\sigma^2_\theta h(i_1) \\ \vdots \\\sigma^2_\theta h(i_1)] \mqty[\sigma^2_\theta h(i_1) \\ \vdots \\\sigma^2_\theta h(i_1)]^\top,
\end{align*}
which reduces to \eqref{g_psd}, and thus the condition is precisely Part~\eqref{canonical_psd3}.
\end{proof}

Let us prove Theorem~\ref{thm_equiv}.
Let $\calX$ be a Gaussian BCE under a payoff structure $(w,b,c)$, so that its first and second moments satisfy \eqref{bce1} and \eqref{bce2}.
Define $f: [0,1] \to \{+1,0,-1\}$ by $f(i) = \sgn(\Cov[X(i), \theta])$.
Also, let $I=\{i \in [0,1]: \Var\qty[X(i)] > 0\}$ be the set of agents whose BCE strategies are not deterministic.
Note that the correlation coefficients
\[
\Corr\qty[X(i), \theta]=\frac{\Cov\qty[X(i),\theta]}{\sigma_\theta \Var\qty[X(i)]^{\frac{1}{2}}}, \quad \Corr\qty[X(i), X(j)]=\frac{\Cov\qty[X(i),X(j)]}{\Var\qty[X(i)]^{\frac{1}{2}}\Var\qty[X(j)]^{\frac{1}{2}}}
\]
are well-defined for any $i,j \in I$.

Now, let $\calP^*=(h,g)$ be a canonical information structure constructed as follows:
\begin{itemize}
\item for $i \notin I$, set $g(i,i)=1$ and $h(i)=g(i,j)=0$ for all $j \neq i$;
\item for $i \in I$, set $h(i) = f(i)\Corr\qty[X(i), \theta] / \sigma_\theta$ and
\[
g(i,j) = \begin{cases}
f(i)f(j)\qty(\Corr\qty[X(i),X(j)] - \Corr\qty[X(i), \theta] \Corr\qty[X(j),\theta]) &\text{for } j \in I, \\
0 &\text{for } j \notin I.
\end{cases}
\]
\end{itemize}
Note that $h(i) \ge 0$ for all $i \in [0,1]$ by construction.
Let us show that $\calP^*$ is well-defined in light of Lemma~\ref{lem_canonical_psd}.
Noticing that $f(i)^2=1$ for any $i \in I$, the properties \eqref{canonical_psd1} and \eqref{canonical_psd2} follow directly from construction.
To verify \eqref{canonical_psd3}, let $G$ denote the matrix given as in \eqref{g_psd} for $\{i_1,\ldots,i_n\}$.
It is without loss of generality to assume $\{i_1,\ldots,i_n\} \subseteq I$; otherwise, if $\Var\qty[X(i_k)] = 0$ for some $i_k$, the matrix \eqref{g_psd} is positive semidefinite if and only if its principal minor obtained by removing the $i_k$-th row and column is positive semidefinite.
Given this nonzero variance condition, consider the normalized variables $Z_0 = \frac{\theta - \mu_\theta}{\sigma_\theta}$ and $Z_k = \frac{X(i_k)-\E\qty[X(i_k)]}{\Var\qty[X(i_k)]^{1/2}}$ for each $i_k$.
In addition, define random variables
\[
\tilde{Z}_k = Z_k - \E\qty[Z_0Z_k] Z_0, \quad \forall k=1,\ldots,n.
\]
Since $\E|Z_0|^2 = 1$, it follows that
\begin{align*}
\E\qty[\tilde{Z}_k \tilde{Z}_l]
&= \E\qty[Z_kZ_l]-\E\qty[Z_0Z_k]\E\qty[Z_0Z_l] \\
&= \Corr\qty[X(i),X(j)] - \Corr\qty[X(i), \theta] \Corr\qty[X(j),\theta]
= f(i_k)f(i_l)g(i_k,i_l).
\end{align*}
This implies that the matrix
\[
\tilde{G} = \mqty[f(i_1)^2 g(i_1,i_1) & \cdots & f(i_1)f(i_n)g(i_1,i_n) \\
\vdots & \ddots & \vdots \\
f(i_n)f(i_1) g(i_n,i_1) & \cdots & f(i_n)^2g(i_n,i_n)]
\]
 is obtained as the covariance matrix of $(\tilde{Z}_1,\ldots,\tilde{Z}_n)$, and is therefore positive semidefinite.
 Moreover, since $\tilde{G} = FGF$ holds with $F=\Diag(f(i_1),\ldots,f(i_n))$ being a nonsingular diagonal matrix, it follows that $G$ is positive semidefinite.

Next, we show that an affine equilibrium $(\varphi^*_0,\varphi_1^*)$ in the game $\calG^* = (w,b,c,\calP^*)$ is given by
\[
\varphi_0^*(i)=\E\qty[X(i)] \quad \text{and} \quad
\varphi_1^*(i)=f(i)\Var\qty[X(i)]^{\frac{1}{2}}, \quad \forall i \in [0,1].
\]
Substituting $\varphi_0^*(i)=\E\qty[X(i)]$ into \eqref{bce1} directly verifies that $\varphi_0^*$ solves the first line of \eqref{int_bne}.
Moreover, since \eqref{P_cano} implies that $P(i,j)=P_\theta(i)=0$ for $i \notin I$ and $j \neq i$, the second line of \eqref{int_bne} is satisfied for any $i \notin I$.
For any $i \in I$, again by \eqref{P_cano}, the right-hand side of the second line of \eqref{int_bne} can be arranged as
\begin{align*}
&\int_0^1 w(i,j)P(i,j)\varphi_1^*(j) \d j+\sigma_\theta b(i) P_\theta(i) \\
&= \int_0^1 w(i,j)\qty(\sigma^2_\theta h(i)h(j) + g(i,j))\varphi_1^*(j) \d j+\sigma^2_\theta b(i) h(i)  \\
&= \int_I w(i,j) \cdot f(i)f(j) \Corr\qty[X(i),X(j)] \cdot f(j)\Var\qty[X(j)]^{\frac{1}{2}} \d j + \sigma_\theta b(i) f(i) \Corr\qty[X(i),\theta] \\
&= \frac{f(i)}{\Var\qty[X(i)]^{\frac{1}{2}}} \cdot \underbrace{\qty(\int_I w(i,j) \Cov \qty[X(i), X(j)] \d j + b(i) \Cov \qty[X(i), \theta])}_{= \Var\qty[X(i)] \text{ by \eqref{bce2}}}
= f(i)\Var\qty[X(i)]^{\frac{1}{2}}
= \varphi_1^*(i).
\end{align*}
Thus, $\varphi_1^*$ solves the second line of \eqref{int_bne}, and so $(\varphi^*_0,\varphi_1^*)$ forms an affine equilibrium in $\calG^*$.

Lastly, let us show that the induced outcome $\calX^*$ of $(\bar{\varphi}^*, \calG^*)$ coincides with $\calX$.
As for the first moments, by Lemma~\ref{lem_affine}, $\E\qty[X^*(i)] = \varphi_0^*(i) = \E\qty[X(i)]$.
As for the second moments, if $i \notin I$, then $\Var\qty[X(i)]=0$ and $\varphi_1^*(i)=0$.
Thus, $\Cov\qty[X^*(i), \theta] = \Cov\qty[X(i), \theta] = 0$ and $\Cov\qty[X^*(i), X^*(j)] = \Cov\qty[X(i), X(j)] = 0$ for any $j \neq i$.
Moreover, for $i \in I$, since $f(i)^2 = 1$,
\begin{align*}
\Cov\qty[X^*(i), \theta] &= \sigma_\theta \varphi^*_1(i) P_\theta(i) = \sigma^2_\theta h(i) \varphi^*_1(i) \\
&= \sigma^2_\theta \cdot \frac{f(i)\Corr\qty[X(i), \theta]}{\sigma_\theta} \cdot f(i)\Var\qty[X(i)]^{\frac{1}{2}} = \Cov\qty[X(i), \theta].
\end{align*}
Also, for $i,j \in I$,
\begin{align*}
&\Cov\qty[X^*(i), X^*(j)] = \varphi^*_1(i) \varphi^*_1(j) P(i,j)
= \qty(\sigma^2_\theta h(i)h(j) + g(i,j)) \varphi^*_1(i) \varphi^*_1(j) \\
&\quad = f(i)f(j)\Corr\qty[X(i),X(j)] \cdot f(i)\Var\qty[X(i)]^{\frac{1}{2}} \cdot f(j)\Var\qty[X(j)]^{\frac{1}{2}}
= \Cov\qty[X(i), X(j)].
\end{align*}
Therefore, $\calX^*$ and $\calX$ have the same first and second moments.
Since both are Gaussian processes, they have the same distribution.
\hfill {\it Q.E.D.}

%%%%%%%%%%%%%%%%%%%%%%%%%%%%%%%%%%%%%%%%%%%%%%%%%%%%%%%%%%%%%%%%%%%%%%%%%%%%%%%%%%%%%%%%%%
%%%%%%%%%%%%%%%%%%%%%%%%%%%%%%%%%%%%%%%%%%%%%%%%%%%%%%%%%%%%%%%%%%%%%%%%%%%%%%%%%%%%%%%%%%

\subsection{Proof of Theorem \ref{thm_id}}

Let $\calX$ be a given Gaussian BCE under some unknown payoff structure $(w,b,c)$.
Theorem~\ref{thm_equiv} guarantees that there exists some canonical information structure $\calP^* = (h,g)$ and an affine equilibrium $(\varphi^*_0,\varphi_1^*)$ in the game $\calG^* = (w,b,c,\calP^*)$ such that $\calX$ satisfies \eqref{eq_id}.
We want to show that the conditional action distributions $X(\cdot)|_{\theta = \hat{\theta}_t}$ for two distinct values $\hat{\theta}_1, \hat{\theta}_2 \in \R$ uniquely pin down $(h,g,\varphi_0^*,\varphi_1^*)$.

First, take the conditional expectation of $X(i)$ given each of $\hat{\theta}_1$ and $\hat{\theta}_2$.
Since $\varepsilon(i)$ is independent of $\theta$, we have
\begin{align*}
\hat{X}_1(i) = \E\qty[X(i) \mid \theta = \hat{\theta}_1] = \varphi_0^*(i) + \varphi_1^*(i)h(i) \qty(\hat{\theta}_1 - \mu_\theta), \\
\hat{X}_2(i) = \E\qty[X(i) \mid \theta = \hat{\theta}_2] = \varphi_0^*(i) + \varphi_1^*(i)h(i) \qty(\hat{\theta}_2 - \mu_\theta).
\end{align*}
Taking the difference between these equations yields
\begin{equation} \label{pf_beta_id}
\varphi_1^*(i)h(i) = \hat{\beta}(i) \equiv \frac{\hat{X}_1(i) - \hat{X}_2(i)}{\hat{\theta}_1 - \hat{\theta}_2}.
\end{equation}
Substituting this back into the expression for $\hat{X}_1(i)$, or equivalently for $\hat{X}_2(i)$, identifies $\varphi_0^*(i)$ as $\varphi_0^*(i) = \hat{X}_1(i) - \hat{\beta}(i)(\hat{\theta}_1 - \mu_\theta)$.

In what follows, suppose that Assumption~\ref{asm_var} is satisfied.
Taking the conditional covariance between $X(i)$ and $X(j)$ given $\theta = \hat{\theta}_1$, we have
\begin{align}
\hat{\gamma}(i,j)
&= \Cov\qty[X(i), X(j) \mid \theta = \hat{\theta}_1] 
= \E \qty[\qty(X(i)-\hat{X}_1(i))\qty(X(j)-\hat{X}_1(j)) \mid \theta = \hat{\theta}_1] \nonumber \\
&= \varphi_1^*(i) \varphi_1^*(j) \E \qty[\varepsilon(i)\varepsilon(j) \mid \theta = \hat{\theta}_1] = \varphi_1^*(i) \varphi_1^*(j) g(i,j). \label{pf_gamma_id1}
\end{align}
In particular, the conditional action variance of $X(i)$ yields
\begin{equation} \label{pf_gamma_id2}
\hat{\gamma}(i,i) = \Var\qty[X(i) \mid \theta = \hat{\theta}_1] = \varphi_1^*(i)^2 g(i,i).
\end{equation}
In addition, by the normalization of canonical signal variance, Lemma~\ref{lem_canonical_psd} implies that $h$ and $g$ satisfy
\begin{equation} \label{pf_norm}
\sigma^2_\theta h(i)^2 + g(i,i) = 1.
\end{equation}
Multiplying both sides of \eqref{pf_norm} by $\varphi_1^*(i)^2$ and substituting \eqref{pf_beta_id} and \eqref{pf_gamma_id2}, we get
\[
\varphi_1^*(i)^2 = \sigma^2_\theta \qty(\varphi_1^*(i)h(i))^2 + \varphi_1^*(i)^2 g(i,i)
= \sigma^2_\theta \hat{\beta}(i)^2 + \hat{\gamma}(i,i),
\]
from which $|\varphi_1^*(i)| = \sqrt{\sigma^2_\theta \hat{\beta}(i)^2 + \hat{\gamma}(i,i)}$ is identified.
Note that Assumption~\ref{asm_var} ensures that $\sigma^2_\theta \hat{\beta}(i) + \hat{\gamma}(i,i) > 0$ holds by the law of total variance \eqref{law_var}, so $\varphi_1^*(i) \neq 0$.
Then, substituting this into the square of \eqref{pf_beta_id}, and using the nonnegativity $h(i)\ge 0$, identifies $h(i)$ as
\[
h(i) = \frac{|\hat{\beta}(i)|}{\sqrt{\sigma^2_\theta \hat{\beta}(i)^2 + \hat{\gamma}(i,i)}}.
\]
This identified value of $h(i)$ in turn determines $g(i,i)$ through \eqref{pf_norm} as
\[
g(i,i) = 1-\sigma^2_\theta h(i)^2 = \frac{\hat{\gamma}(i,i)}{\sigma^2_\theta \hat{\beta}(i)^2 + \hat{\gamma}(i,i)}.
\]
In addition, the absolute value of $g(i,j)$ for $i \neq j$ is identified from \eqref{pf_gamma_id1} as
\[
|g(i,j)| = \frac{|\hat{\gamma}(i,j)|}{|\varphi_1^*(i)| \cdot |\varphi_1^*(j)|} = \frac{\hat{\gamma}(i,j)}{\sqrt{(\sigma^2_\theta \hat{\beta}(i)^2 + \hat{\gamma}(i,i))(\sigma^2_\theta \hat{\beta}(j)^2 + \hat{\gamma}(j,j))}}.
\]

Lastly, suppose that Assumption~\ref{asm_beta} is satisfied.
This implies that $\hat{\beta}(i) \neq 0$, and thus $h(i) > 0$.
Hence, $\varphi_1^*(i)$ is identified from \eqref{pf_beta_id} as
\[
\varphi_1^*(i) = \frac{\hat{\beta}(i)}{|\hat{\beta}(i)|} \sqrt{\sigma^2_\theta \hat{\beta}(i)^2 + \hat{\gamma}(i,i)}
= \sgn(\hat{\beta}(i)) \sqrt{\sigma^2_\theta \hat{\beta}(i)^2 + \hat{\gamma}(i,i)}.
\]
Then, noticing that $1/\sgn(\hat{\beta}(i)) = \sgn(\hat{\beta}(i))$ and $\sgn(\hat{\beta}(i)) \sgn(\hat{\beta}(j)) = \sgn(\hat{\beta}(i)\hat{\beta}(j))$, we can identify $g(i,j)$ from \eqref{pf_gamma_id1} as
\[
g(i,j) = \frac{\hat{\gamma}(i,j)}{\varphi_1^*(i) \varphi_1^*(j)} = \frac{\sgn(\hat{\beta}(i)\hat{\beta}(j))\hat{\gamma}(i,j)}{\sqrt{(\sigma^2_\theta \hat{\beta}(i)^2 + \hat{\gamma}(i,i))(\sigma^2_\theta \hat{\beta}(j)^2 + \hat{\gamma}(j,j))}},
\]
which concludes the proof.
\hfill {\it Q.E.D.}

%%%%%%%%%%%%%%%%%%%%%%%%%%%%%%%%%%%%%%%%%%%%%%%%%%%%%%%%%%%%%%%%%%%%%%%%%%%%%%%%%%%%%%%%%%
%%%%%%%%%%%%%%%%%%%%%%%%%%%%%%%%%%%%%%%%%%%%%%%%%%%%%%%%%%%%%%%%%%%%%%%%%%%%%%%%%%%%%%%%%%

\subsection{Proofs of Propositions~\ref{prop_var}--\ref{prop_actvar}}

\begin{proof}[Proof of Lemma~\ref{lem_var}]
Let $\calX$ be a common induced outcome that arises in both $(w,b,c,\calP)$ and $(w,b,c,\calP^*)$, and let $Z$ be any random variable with $\Var\qty[Z] < \infty$.
Suppose that $\Var\qty[X(i)] > 0$.
Let $\hat{Z} = \E\qty[Z \mid S(i)]$ be the best estimate of $Z$ given agent $i$'s signal $S(i)$ under $\calP$.
Applying the law of total variance to $\hat{Z}$, with $X(i)$ as the conditioning random variable, gives
\[
\Var\qty[\hat{Z}] =\Var\qty[\E\qty[\hat{Z} \mid X(i)]] + \E\qty[\Var\qty[\hat{Z} \mid X(i)]]
\]
The second term is nonnegative, and $\E[\hat{Z} \mid X(i)] = \E[Z \mid X(i)]$ since $X(i)$ is measurable with respect to $S(i)$.
Hence, it follows that
\begin{equation} \label{pf_var_gap}
\Var\qty[\E\qty[Z \mid S(i)]] \ge \Var\qty[\E[Z \mid X(i)]].
\end{equation}
Since $\Var\qty[X(i)] > 0$, we must have $\varphi_1^*(i) > 0$ in the affine equilibrium of $(w,b,c,\calP^*)$, which in turn implies that $S^*(i)$ and $X(i)$ are mutually measurable by \eqref{eq_id}.
Therefore, $\Var\qty[\E[Z \mid X(i)]] = \Var\qty[\E[Z \mid S^*(i)]]$, and hence \eqref{pf_var_gap} yields $\Var\qty[\E\qty[Z \mid S(i)]] \ge \Var\qty[\E[Z \mid S^*(i)]]$.
By the law of total variance, this implies $r_i(Z \mid \calP) \ge r_i(Z \mid \calP^*)$.
\end{proof}

%%%%%%%%%%%%%%%%%%%%%%%%%%%%%%%%%%%%%%%%%%%%%%%%%%%%%%%%%%%%%%%%%%%%%%%%%%%%%%%%%%%%%%%%%%
%%%%%%%%%%%%%%%%%%%%%%%%%%%%%%%%%%%%%%%%%%%%%%%%%%%%%%%%%%%%%%%%%%%%%%%%%%%%%%%%%%%%%%%%%%

\begin{proof}[Proof of Proposition~\ref{prop_var}]
Let $\calX$ be an induced outcome of $(\calG,\bar{\varphi})$.
As demonstrated in the main text, we have $r_i(\theta \mid \calP) = \|P_\theta(i)\|^2$, while under the identified canonical information structure this expression reduces to $\sigma^2_\theta h(i)^2$.
Thus, Lemma~\ref{lem_var} implies $\|P_\theta(i)\|^2 \ge \sigma^2_\theta h(i)^2$, and substituting the identified formula for $h(i)$ from Theorem~\ref{thm_id} yields \eqref{id_var1}.

As for \eqref{id_var2}, since $X(j) = \varphi_0(j) + \varphi(j)^\top K(j,j)^{-\frac{1}{2}} (S(j) - m(j))$, the conditional Gaussian formula implies
\begin{align*}
&\E\qty[X(j) \mid S(i)] - \varphi_0(j) = \varphi(j)^\top K(j,j)^{-\frac{1}{2}} \qty(\E\qty[S(j) \mid S(i)] - m(j)) \\
&\quad = \varphi(j)^\top K(j,j)^{-\frac{1}{2}} K(j,i) K(i,i)^{-1} \qty(S(i) - m(i))
= \varphi(j)^\top P(j,i) K(i,i)^{-\frac{1}{2}}\qty(S(i)-m(i)).
\end{align*}
Taking the variance of this best estimate of agent $j$'s action, we obtain
\begin{align*}
r_i(X(j) \mid S(i))
&= \frac{\Var\qty[X(j)] - \Var\qty[X(j) \mid S(i)]}{\Var\qty[X(j)]}
= \frac{\Var\qty[\E\qty[X(j) \mid S(i)]]}{\Var\qty[X(j)]} \\
&= \frac{\varphi(j)^\top P(j,i) K(i,i)^{-\frac{1}{2}} K(i,i) K(i,i)^{-\frac{1}{2}}P(i,j) \varphi(j)}{\Var\qty[X(j)]}
= \frac{\|P(i,j)\varphi(j)\|^2}{\Var\qty[X(j)]}.
\end{align*}
In particular, recalling that $P(i,j)$ corresponds to $\sigma^2_\theta h(i)h(j) + g(i,j)$ under the canonical information structure, Lemma~\ref{lem_var} implies
\[
\|P(i,j)\varphi(j)\|^2 \ge \qty(\varphi_1^*(j) \cdot \qty(\sigma^2_\theta h(i)h(j) + g(i,j)))^2.
\]
The right-hand side can be written in terms of the identified objects from Theorem~\ref{thm_id} as
\begin{align*}
\qty(\sigma^2_\theta \hat{\beta}(j)^2 + \hat{\gamma}(j,j)) \cdot \qty(\frac{\sigma^2_\theta |\hat{\beta}(i)||\hat{\beta}(j)| + \sgn(\hat{\beta}(i) \hat{\beta} (j)) \hat{\gamma}(i,j)}{\sqrt{\sigma^2_\theta \hat{\beta}(i)^2 + \hat{\gamma}(i,i)} \sqrt{\sigma^2_\theta \hat{\beta}(j)^2 + \hat{\gamma}(j,j)}})^2
= \frac{\qty(\sigma^2_\theta \hat{\beta}(i) \hat{\beta}(j) + \hat{\gamma}(i,j))^2}{\sigma^2_\theta \hat{\beta}(i)^2 + \hat{\gamma}(i,i)}.
\end{align*}
Substituting this lower bound into the preceding inequality yields \eqref{id_var2}.
\end{proof}

%%%%%%%%%%%%%%%%%%%%%%%%%%%%%%%%%%%%%%%%%%%%%%%%%%%%%%%%%%%%%%%%%%%%%%%%%%%%%%%%%%%%%%%%%%
%%%%%%%%%%%%%%%%%%%%%%%%%%%%%%%%%%%%%%%%%%%%%%%%%%%%%%%%%%%%%%%%%%%%%%%%%%%%%%%%%%%%%%%%%%

\begin{proof}[Proof of Proposition~\ref{prop_var_eq}]
Let $\calX$ be an induced outcome of $(\calG,\bar{\varphi})$.
The inequality \eqref{id_var1} holds with equality if and only if conditioning on $S(i)$ and conditioning on $X(i) = \varphi_0(i) +\varphi^\top(i)K(i,i)^{-\frac{1}{2}}(S(i) - m(i))$ reduce the same amount of state variance.
The proof of Lemma~\ref{lem_var} reveals that this is the case if and only if
\begin{equation} \label{pf_var_nogap}
\Var\qty[ \E\qty[\theta \mid S(i)] \mid X(i)] = 0.
\end{equation}
To evaluate this condition, we calculate the relevant covariances.
First, Lemma~\ref{lem_affine} directly implies $\Var\qty[X(i)] = \varphi(i)^\top \varphi(i)$.
Second, using the conditional Gaussian formula $\E\qty[\theta \mid S(i)] = \mu_\theta + K_\theta(i)^\top K(i,i)^{-1}(S(i)-m(i))$, the covariance between the equilibrium action and the state estimate given the signal is
\[
\Cov\qty[X(i), \E\qty[\theta \mid S(i)]] = \varphi(i)^\top K(i,i)^{-\frac{1}{2}} \Var\qty[S(i)] K(i,i)^{-1} K_\theta(i)
= \sigma_\theta \varphi(i)^\top P_\theta(i).
\]
Third, the variance of the state estimate given the signal is
\[
\Var\qty[\E\qty[\theta \mid S(i)]] = K_\theta(i)^\top K(i,i)^{-\frac{1}{2}} K_\theta(i) = \sigma^2_\theta P_\theta(i)^\top P_\theta(i).
\]
Putting these together, again by the conditional Gaussian formula, we obtain
\begin{align*}
&\Var\qty[ \E\qty[\theta \mid S(i)] \mid X(i)] \\
&= \Var\qty[\E\qty[\theta \mid S(i)]] - \frac{\Cov\qty[X(i), \E\qty[\theta \mid S(i)]]^2}{\Var\qty[X(i)]} = \sigma^2_\theta \qty(P_\theta(i)^\top P_\theta(i) - \frac{\qty(\varphi(i)^\top P_\theta(i))^2}{\varphi(i)^\top \varphi(i)}).
\end{align*}
This implies that \eqref{pf_var_nogap} is satisfied if and only if $\|P_\theta(i)\| \cdot \|\varphi(i)\| = |\varphi(i)^\top P_\theta(i)|$, which is precisely the equality version of the Cauchy--Schwarz inequality.
Therefore, equality holds in \eqref{id_var1} if and only if $\varphi(i)$ and $P_\theta(i)$ are linearly dependent.

Moreover, by \eqref{int_bne}, the equilibrium slope $\varphi(i)$ satisfies
\[
\varphi(i) = \int_0^1 w(i,j)P(i,j) \varphi(j) \d j + \sigma_\theta b(i) P_\theta(i).
\]
It is now clear that if either $w(i,j)$ or $P(i,j)$ is zero for almost every $j \in [0,1]$, then $\varphi(i)$ and $P_\theta(i)$ are linearly dependent.
Alternatively, if either $b(j)$ or $P_\theta(j)$ is zero for almost every $j \in [0,1]$, then \eqref{int_bne} is satisfied with $\varphi(j)=0$ for all such $j$ and $\varphi(j') = \sigma_\theta b(j') P_\theta(j')$ for the rest of agents.\footnote{Note that $b(i)$ and $P_\theta(i)$ are nonzero in this case since $\Var[X(i)] > 0$ is assumed in Proposition~\ref{prop_var_eq}.}
Consequently, the required linear dependence again holds for $i$.
\end{proof}

%%%%%%%%%%%%%%%%%%%%%%%%%%%%%%%%%%%%%%%%%%%%%%%%%%%%%%%%%%%%%%%%%%%%%%%%%%%%%%%%%%%%%%%%%%
%%%%%%%%%%%%%%%%%%%%%%%%%%%%%%%%%%%%%%%%%%%%%%%%%%%%%%%%%%%%%%%%%%%%%%%%%%%%%%%%%%%%%%%%%%

\begin{proof}[Proof of Proposition~\ref{prop_actvar}]
In an LQG game exhibiting payoff-symmetry, i.e., $w(i,j) \equiv a < 1$, the slope $\varphi$ of an affine equilibrium is characterized as
\[
\varphi(i) = a \int_0^1 P(i,j) \varphi(j) \d j + \sigma_\theta b(i) P_\theta(i), \quad \forall i \in [0,1].
\]
By viewing $P$ as a linear operator as in Lemma~\ref{lem_P} and denoting $q(i) = \sigma_\theta b(i) P_\theta(i)$, this equation is rewritten as $\varphi - a P\varphi = q$.
Taking the norm of each side of this equation yields
\begin{equation} \label{pf_q_phi}
\|q\|^2 = \|\varphi-aP\varphi\|^2 = \|\varphi\|^2 - 2a\langle \varphi,P\varphi\rangle + a^2\|P\varphi\|^2.
\end{equation}
If $a=0$, then $\|q\|^2 = \|\varphi\|^2$, which verifies \eqref{eq_actvar1} with equality.

Suppose that $0<a<1$.
Since Lemma~\ref{lem_P} implies $0 \le \|P\varphi\|^2\le \langle \varphi,P\varphi\rangle$, \eqref{pf_q_phi} yields
\[
\|q\|^2 \le \|\varphi\|^2-a(2-a)\langle \varphi,P\varphi\rangle \le \|\varphi\|^2,
\]
which verifies \eqref{eq_actvar1}.
Moreover, if $Pq=0$, then applying $P$ to $\varphi-aP\varphi=q$ gives $P\varphi=aP^2\varphi$, which in turn implies $\|P\varphi\| = a \|P^2 \varphi\|$.
Moreover, since Lemma~\ref{lem_P} implies $\|P^2\varphi\| \le \|P\varphi\|$ by applying \eqref{eq_normP} for $P\varphi$, it follows that $\|P\varphi\| \le a\|P\varphi\|$, and hence $P\varphi=0$ by $a \in (0,1)$.
It follows that $\varphi=q$, so $\|q\|^2=\|\varphi\|^2$.
Thus, \eqref{eq_actvar1} holds with equality if $Pq=0$.

Suppose that $a<0$.
Since Lemma~\ref{lem_P} implies that $\langle \varphi,P\varphi\rangle\le \|\varphi\|^2$ and $\|P\varphi\|^2\le \|\varphi\|^2$, \eqref{pf_q_phi} yields
\[
\|q\|^2 \le (1-2a+a^2)\|\varphi\|^2 = (1-a)^2\|\varphi\|^2,
\]
which verifies \eqref{eq_actvar2}.
Moreover, if $Pq=q$, then applying $P$ to $\varphi-aP\varphi=q$ gives $P\varphi-aP^2\varphi=q$.
Subtracting this equation from $\varphi-aP\varphi=q$ gives $(I-aP)(I-P)\varphi=0$.
Since $a<0$ and $P$ is positive semidefinite, all eigenvalues of $I-aP$ are strictly greater than $1$.
In particular, this implies that $(I-aP)$ is invertible, and hence $(I-P)\varphi = 0$, or $P\varphi=\varphi$.
It follows that $q=(1-a)\varphi$, so $\|q\|^2=(1-a)^2\|\varphi\|^2$.
Thus, \eqref{eq_actvar2} holds with equality if $Pq=q$.
\end{proof}

%%%%%%%%%%%%%%%%%%%%%%%%%%%%%%%%%%%%%%%%%%%%%%%%%%%%%%%%%%%%%%%%%%%%%%%%%%%%%%%%%%%%%%%%%%
%%%%%%%%%%%%%%%%%%%%%%%%%%%%%%%%%%%%%%%%%%%%%%%%%%%%%%%%%%%%%%%%%%%%%%%%%%%%%%%%%%%%%%%%%%

\subsection{Omitted Mathematical Details}
\label{app_misc}

Appendix~\ref{app_misc} completes the proof of Theorem~\ref{thm_well} by verifying Lemmas~\ref{lem_compact} and~\ref{lem_psi_conv}.
The key mathematical result is Lemma~\ref{lem_FK}, which offers a condition for square-integrable functions to satisfy a version of equicontinuity.
This condition allows us to apply the Fr\'echet--Riesz--Kolmogorov theorem, which is also known as the $L_p$ analogue of the Arzel\`a--Ascoli theorem.

To this end, it is useful to extend the domains of $w$ and $P$ from $[0,1]^2$ to $\R^2$ by setting them equal to zero outside $[0,1]^2$.
For any matrix-valued function $M \in \calL_2^{n \times n}(\R^2)$, define the linear operator
\[
T_M \psi \coloneqq \int_\R M(\cdot,j) \psi(j) \d j, \quad \forall \psi \in \calL_2^n(\R).
\]
By mimicking the proof of Lemma~\ref{lem_HS_norm}, it is straightforward to show that is a bounded operator with
\begin{equation} \label{eq_M_norm}
\|T_M\psi\| \le \|M\|\cdot \|\psi\|, \quad \forall \psi \in \calL^n_2(\R).
\end{equation}
For $M \in \calL_2^{n \times n}(\R^2)$ and $\delta > 0$, define $M^\delta \in \calL_2^{n \times n}(\R^2)$ by
\[
M^\delta (i,j) \coloneqq M(i+\delta,j), \quad \forall i,j \in \R.
\]
Clearly, $\|M\| = \|M^\delta\|$ and $\vertt{T_M} = \vertt{T_{M^\delta}}$.

\begin{lemma} \label{lem_M_cts}
For any continuous function $M:\R^2 \to \R^{n \times n}$ supported on a compact subset of $\R^2$, $M^\delta$ converges to $M$ in $L_2$ as $\delta \to 0$.
\end{lemma}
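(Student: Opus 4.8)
The plan is to reduce the statement to the classical fact that translation is continuous in $L_2$, exploiting that a continuous function with compact support is uniformly continuous. Throughout I use $\vertd{\cdot}$ for the operator norm on $\R^{n\times n}$, so that $\|F\|^2 = \int_{\R^2}\vertd{F(x)}^2\,\d x$ for $F \in \calL_2^{n\times n}(\R^2)$.

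First I would record the regularity of $M$. Since $M$ is continuous and $\Supp(M)$ is compact, $M$ is uniformly continuous on all of $\R^2$ (a continuous function on a compact set is uniformly continuous there, and $M$ vanishes off that set). Hence, given any $\epsilon > 0$, there exists $\eta \in (0,1)$ such that $\vertd{M(x)-M(y)} < \epsilon$ whenever $\|x-y\| < \eta$. Applying this with $x = (i+\delta,j)$ and $y = (i,j)$, for every $|\delta| < \eta$ we obtain the pointwise bound $\vertd{M^\delta(i,j) - M(i,j)} = \vertd{M(i+\delta,j) - M(i,j)} < \epsilon$ for all $(i,j) \in \R^2$.

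Second, I would confine the integration to a fixed compact region so that the $L_2$ norm is controlled by the pointwise bound. Choose $R > 0$ with $\Supp(M) \subseteq [-R,R]^2$. Then for $|\delta| \le 1$ both $M$ and $M^\delta$ vanish outside the fixed compact set $A \coloneqq [-R-1,R+1] \times [-R,R]$, whose Lebesgue measure $|A|$ is finite and independent of $\delta$. Consequently, for $|\delta| < \eta$,
\begin{align*}
\|M^\delta - M\|^2 = \int_A \vertd{M^\delta(i,j) - M(i,j)}^2 \,\d i\,\d j \le \epsilon^2 \, |A|.
\end{align*}
Since $\epsilon > 0$ was arbitrary, this shows $\|M^\delta - M\| \to 0$ as $\delta \to 0$, which is the claim.

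I do not expect any genuine obstacle here: the result is the standard continuity of translations in $L_2$, and the only bookkeeping needed is to keep the domain of integration compact of $\delta$-independent measure so that the uniform pointwise bound integrates to a small quantity. The one point meriting a word of care is that the uniform-continuity estimate is stated and used in the matrix operator norm $\vertd{\cdot}$; this is harmless because $\vertd{\cdot}$ is a genuine norm on $\R^{n\times n}$, so uniform continuity of $M$ in that norm transfers verbatim to the pointwise bound on $\vertd{M^\delta - M}$ that the $L_2$ estimate requires.
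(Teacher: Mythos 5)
Your proof is correct, and it takes a slightly different (more quantitative) route than the paper's. The paper fixes the same kind of enlarged compact set $A$ containing the supports of all $M^\delta$ for small $\delta$, notes that $(i,j) \mapsto \vertd{M(i+\delta,j)-M(i,j)}^2$ converges pointwise to zero by continuity and is dominated by a constant multiple of $\1_A$, and concludes by the dominated convergence theorem. You instead upgrade continuity to \emph{uniform} continuity (valid because $M$ is continuous with compact support), which gives a bound $\vertd{M^\delta - M} < \epsilon$ uniform in $(i,j)$ for $|\delta| < \eta$, and then integrate this over the fixed set $A$ to get the explicit estimate $\|M^\delta - M\|^2 \le \epsilon^2 |A|$. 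What your version buys is a quantitative modulus of convergence and no appeal to DCT; what the paper's version buys is that it only needs pointwise continuity plus uniform boundedness, sidestepping the one point in your argument that is stated a bit tersely: the assertion that compact support plus continuity yields uniform continuity on all of $\R^2$ does require a short boundary argument (e.g., work on the closed $1$-neighborhood of $\Supp(M)$, which is compact; if both points lie outside $\Supp(M)$ the difference is zero, and if one lies in $\Supp(M)$ and the other is within distance $\eta<1$ of it, both lie in the enlarged compact set, where uniform continuity applies). With that small point spelled out, your argument is complete and fully compatible with how the lemma is used later in Lemma \ref{lem_FK}.
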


\begin{proof}
Fix any $\bar{\delta} > 0$.
Since $M$ is supported on a compact subset of $\R^2$, we can take a compact set $A \subseteq \R^2$ that contains the support of $M^\delta$ for all $\delta \in [0,\bar{\delta})$.
Moreover, since $M$ is continuous and compactly supported, the family $\{M^\delta\}_{\delta \in [0,\bar{\delta})}$ is uniformly bounded.
Thus, there exists a constant $c>0$ such that $\vertd{M^\delta(i,j)} \le c$ for all $(i,j) \in \R^2$ and all $\delta \in [0,\bar{\delta})$.
Then, the mapping $(i,j) \mapsto \|M(i+\delta,j)-M(i,j)\|^2$ is uniformly bounded by the integrable function $4c^2 \times \1_A(\cdot)$.
Moreover, this mapping converges pointwise to $0$ as $\delta \to 0$ by the continuity of $M$.
Hence, the dominated convergence theorem yields
\[
\lim_{\delta \to 0} \int_\R \int_\R \|M(i+\delta,j)-M(i,j)\|^2 \d j \d i = 0,
\]
from which we obtain $\|M-M^\delta\| \to 0$ as $\delta \to 0$.
\end{proof}

\begin{lemma} \label{lem_FK}
Let $\Psi$ be a bounded subset of $\calL_2^n(\R)$, and let $\{M_n\}_{n \in \N}$ be a convergent sequence in $\calL_2^{n \times n}(\R^2)$ with limit $M \in \calL_2^{n \times n}(\R^2)$.
Then, for any $\epsilon > 0$, there exists $\bar{\delta} > 0$ such that
\[
\sup_{\psi \in \Psi,\, n \in \N,\, \delta \in (0,\bar{\delta})}
\left\|(T_{M_n} \psi)(\cdot + \delta) - (T_{M_n} \psi)(\cdot) \right\| < \epsilon.
\]
Moreover, if the supports of all $M_n$ are contained in a common compact set, then the collection $\{T_{M_n}\psi\}_{\psi \in \Psi,\, n \in \N}$ is pre-compact in $\calL_2^n(\R)$.
\end{lemma}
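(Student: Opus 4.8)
The plan is to reduce the shift-difference of $T_{M_n}\psi$ to the operator $T_{M_n^\delta - M_n}$ applied to $\psi$, thereby transferring the whole problem to the $L_2$-continuity of translation on the kernels $M_n$, made uniform in $n$ by exploiting the convergence $M_n \to M$.

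First I would record the identity that shifting the output of $T_{M_n}$ is the same as shifting its kernel: since $(T_{M_n}\psi)(i) = \int_\R M_n(i,j)\psi(j)\,\d j$, a change of the free variable gives $(T_{M_n}\psi)(\cdot+\delta) = T_{M_n^\delta}\psi$, and hence by linearity
\begin{align*}
T_{M_n}\psi(\cdot+\delta) - T_{M_n}\psi(\cdot) = T_{M_n^\delta - M_n}\psi.
\end{align*}
Writing $C \coloneqq \sup_{\psi\in\Psi}\|\psi\|$, which is finite since $\Psi$ is bounded (the lemma being trivial if $C=0$, so I assume $C>0$), and applying the operator bound \eqref{eq_M_norm} to the kernel $M_n^\delta - M_n$, I obtain
\begin{align*}
\sup_{\psi\in\Psi}\left\|T_{M_n}\psi(\cdot+\delta) - T_{M_n}\psi(\cdot)\right\| \le C\,\|M_n^\delta - M_n\|.
\end{align*}
Thus the first assertion is reduced to proving that $\sup_{n\in\N}\|M_n^\delta - M_n\| \to 0$ as $\delta \to 0$.

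For a single kernel $N \in \calL_2^{n\times n}(\R^2)$, I would establish $\|N^\delta - N\|\to 0$ by approximating $N$ in $L_2$ with a continuous, compactly supported $\tilde N$, invoking Lemma \ref{lem_M_cts} for $\tilde N$, and using that $N \mapsto N^\delta$ is a norm isometry to absorb the approximation error. To make this uniform in $n$, I would split the sequence using
\begin{align*}
\|M_n^\delta - M_n\| \le \|M_n^\delta - M^\delta\| + \|M^\delta - M\| + \|M - M_n\| = 2\|M_n - M\| + \|M^\delta - M\|,
\end{align*}
where the equality uses the isometry $\|M_n^\delta - M^\delta\| = \|(M_n-M)^\delta\| = \|M_n - M\|$. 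Given $\epsilon>0$, convergence $M_n\to M$ furnishes $N_0$ with $\|M_n - M\| < \epsilon/(6C)$ for all $n\ge N_0$, so once $\bar\delta$ is small enough that $\|M^\delta - M\| < \epsilon/(3C)$, every index $n \ge N_0$ satisfies $\|M_n^\delta - M_n\| < \epsilon/C$ simultaneously; the finitely many indices $n < N_0$ are each handled by the single-kernel continuity above, and I take $\bar\delta$ to be the (strictly positive) minimum of these finitely many thresholds, which yields the claimed bound.

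For the second assertion I would verify the three hypotheses of the Fr\'echet--Riesz--Kolmogorov theorem for the family $\calF \coloneqq \{T_{M_n}\psi : \psi\in\Psi,\ n\in\N\}$. Boundedness in $\calL_2^n(\R)$ follows from $\|T_{M_n}\psi\| \le \|M_n\|\,\|\psi\| \le C\sup_n\|M_n\|$, which is finite because a convergent sequence is bounded; the uniform $L_2$-equicontinuity of translations is exactly the first assertion; and, when all $M_n$ are supported in a common compact set $A$, each $T_{M_n}\psi$ vanishes outside the compact projection of $A$ onto the first coordinate, so the tails $\int_{|i|>R}\|(T_{M_n}\psi)(i)\|^2\,\d i$ vanish identically for large $R$, giving uniform decay at infinity for free. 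The theorem then delivers pre-compactness of $\calF$. The main obstacle I anticipate is precisely the uniformity in $n$ of the translation estimate: translation continuity is immediate for each fixed $M_n$, and the crux is the tail/head splitting above, which reduces the infinite family to the single limit $M$ plus a finite set of exceptional indices.
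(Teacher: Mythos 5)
Your proposal is correct and follows essentially the same route as the paper's proof: reduce the shift-difference to the operator bound \eqref{eq_M_norm} applied to the kernel $M_n^\delta - M_n$, obtain uniformity in $n$ by splitting off finitely many exceptional indices and handling the tail through the limit kernel $M$ (via density of continuous compactly supported kernels and Lemma \ref{lem_M_cts}), and conclude pre-compactness from the Fr\'echet--Riesz--Kolmogorov theorem. Your use of the translation isometry to collapse the telescoping to $2\|M_n - M\| + \|M^\delta - M\|$, and your explicit verification of the boundedness and compact-support (tail) hypotheses of that theorem, are only cosmetic streamlinings of the paper's argument.
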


\begin{proof}
By definition,
\[
(T_{M_n} \psi)(\cdot + \delta) - (T_{M_n} \psi)(\cdot)
= \int_\R \qty(M_n^\delta(\cdot,j) - M_n(\cdot, j)) \psi (j) \d j.
\]
Hence, by applying \eqref{eq_M_norm} to $M_n^\delta - M_n$, we obtain
\[
\left\|(T_{M_n} \psi)(\cdot + \delta) - (T_{M_n} \psi)(\cdot) \right\|
\le \|M^\delta_n - M_n\| \cdot \|\psi\|.
\]
Since $\Psi$ is bounded, it suffices to show that $\|M^\delta_n - M_n\|$ can be made arbitrarily small by taking $\delta$ small, uniformly over all $n \in \N$.

Fix any $\epsilon > 0$.
Since $\|M_n - M\| \to 0$, we can take $n_0 \in \N$ such that $\|M-M_n\| < \epsilon/9$ for all $n > n_0$.
By definition, for any $n > n_0$, we also have $\|M_n^\delta - M^\delta\| < \epsilon/9$ for all $\delta > 0$.
In addition, by Theorem~4.12 of \cite{brezis}, there exists a compactly supported continuous function $\tilde{M}:\R^2 \to \R^{n \times n}$ such that $\|M-\tilde{M}\| < 2\epsilon/9$.
Then $\|M^\delta-\tilde{M}^\delta\| < 2\epsilon/9$ holds as well.
Lastly, by Lemma~\ref{lem_M_cts}, there exists $\delta_0 > 0$ such that $\|\tilde{M}-\tilde{M}^\delta\| < \epsilon/3$ for all $\delta \in (0,\delta_0)$.
Putting these estimates together, for any $n > n_0$ and $\delta \in (0,\delta_0)$, we have
\begin{align*}
\|M^\delta_n - M_n\|
\le\underbrace{\|M^\delta_n - M^\delta\|}_{<\epsilon/9}
+ \underbrace{\|M^\delta - \tilde{M}^\delta\|}_{<2\epsilon/9}
+ \underbrace{\|\tilde{M}^\delta - \tilde{M}\|}_{<\epsilon/3}
+ \underbrace{\|\tilde{M} - M\|}_{<2\epsilon/9}
+ \underbrace{\|M-M_n\|}_{<\epsilon/9}
< \epsilon.
\end{align*}
Notice that $\delta_0$ is chosen independently of $n$.

Next, consider any $n \le n_0$.
As before, let $\tilde{M}_n: \R^2 \to \R^{n \times n}$ be a compactly supported continuous function such that $\|M_n - \tilde{M}_n\| < \epsilon/4$.
Then $\|M_n^\delta - \tilde{M}_n^\delta\| < \epsilon/4$ holds as well.
Also, Lemma~\ref{lem_M_cts} implies that there exists $\delta_n > 0$ such that $\|\tilde{M}_n - \tilde{M}^\delta_n\| < \epsilon/2$ for all $\delta \in (0,\delta_n)$.
Then, for any $\delta \in (0,\delta_n)$, it follows that
\begin{align*}
\|M^\delta_n - M_n\|
\le\underbrace{\|M^\delta_n - \tilde{M}_n^\delta\|}_{<\epsilon/4}
+ \underbrace{\|\tilde{M}^\delta_n - \tilde{M}_n\|}_{<\epsilon/2}
+ \underbrace{\|\tilde{M}_n - M_n\|}_{<\epsilon/4} < \epsilon.
\end{align*}
Since such $\delta_n$ need be chosen only for finitely many $n \le n_0$, letting $\bar{\delta} = \min \{\delta_0,\delta_1,\ldots,\delta_{n_0}\} > 0$ gives the desired uniform bound for $\|M_n^\delta-M_n\|$ that holds across all $n \in \N$ and $\delta \in (0,\bar{\delta})$.

Having established the uniform boundedness, the asserted pre-compactness is a direct consequence of the Fr\'echet--Riesz--Kolmogorov theorem \citep[Theorem~4.26 of][]{brezis}.
\end{proof}

Now, the previous lemmas in Appendix \ref{app_pf} follow from Lemma \ref{lem_FK}.

\begin{proof}[Proof of Lemma \ref{lem_compact}]
Take any bounded sequence $\{\bar{\psi}^n\}_{n \in \N}$ in $\calL_2^{\bar{d}+1}[0,1]$, and let $M=M_n=wP$ for all $n \in \N$; then Lemma \ref{lem_FK} implies $\{T\bar{\psi}^n\}_{n \in \N}$ is pre-compact in $\calL_2^{\bar{d}+1}[0,1]$, which shows that $T$ is a compact linear operator.
\end{proof}

\begin{proof}[Proof of Lemma \ref{lem_psi_conv}]
Let $\{\bar{\psi}^n\}_{n \in \N}$ be given as in Lemma \ref{lem_psi_conv}, i.e., $\|\bar{\psi}^n\| = 1$ and $\bar{\psi}^n = T_n \bar{\psi}^n$ for all $n \in \N$.
Trivially, $\{\bar{\psi}^n\}_{n \in \N}$ is bounded.
In addition, as shown in the proof of Lemma \ref{lem_cts_op}, it holds that $\|wP-w_nP_n\| \to 0$ as $n \to \infty$ (note that the verification of this claim depends neither on Lemma \ref{lem_compact} nor \ref{lem_psi_conv}).
Hence, Lemma \ref{lem_FK} implies $\{T\bar{\psi}^n\}_{n \in \N} = \{\bar{\psi}^n\}_{n \in \N}$ is pre-compact in $\calL_2^{\bar{d}+1}[0,1]$, and thus, a convergent subsequence can be taken from it.
\end{proof}

%%%%%%%%%%%%%%%%%%%%%%%%%%%%%%%%%%%%%%%%%%%%%%%%%%%%%%%%%%%%%%%%%%%%%%%%%%%%%%%%%%%%%%%%%%
%%%%%%%%%%%%%%%%%%%%%%%%%%%%%%%%%%%%%%%%%%%%%%%%%%%%%%%%%%%%%%%%%%%%%%%%%%%%%%%%%%%%%%%%%%

\newpage
\bibliography{reference}

@article{smolinyamashita2026,
	author = {Alex Smolin and Takuro Yamashita},
	journal = {Working Paper},
	title = {{Information Design in Smooth Games}},
	year = {2026}}

@article{sun2006,
	author = {Sun, Yeneng},
	date-added = {2026-05-11 19:22:48 +0800},
	date-modified = {2026-05-11 19:22:48 +0800},
	journal = {Journal of Economic Theory},
	number = {1},
	pages = {31--69},
	publisher = {Elsevier},
	title = {{The Exact Law of Large Numbers via Fubini Extension and Characterization of Insurable Risks}},
	volume = {126},
	year = {2006}}

@article{miyashita_ui_large,
	author = {Masaki Miyashita and Takashi Ui},
	journal = {Working Paper},
	title = {{On the Pettis Integral Approach to Large Population Games}},
	year = {2024}}

@book{vives1999,
	author = {Vives, Xavier},
	publisher = {MIT press},
	title = {{Oligopoly Pricing:\ Old Ideas and New Tools}},
	year = {1999}}

@article{karni2009,
	author = {Karni, Edi},
	journal = {Econometrica},
	number = {2},
	pages = {603--606},
	title = {{A Mechanism for Eliciting Probabilities}},
	volume = {77},
	year = {2009}}

@article{bhm2021,
	author = {Bergemann, Dirk and Heumann, Tibor and Morris, Stephen},
	journal = {The RAND Journal of Economics},
	number = {1},
	pages = {125--150},
	title = {{Information, Market Power, and Price Volatility}},
	volume = {52},
	year = {2021}}

@article{ui2016,
	author = {Ui, Takashi},
	journal = {Journal of Mathematical Economics},
	pages = {139--146},
	title = {{Bayesian Nash Equilibrium and Variational Inequalities}},
	volume = {63},
	year = {2016}}

@article{blume2015,
	author = {Blume, Lawrence E and Brock, William A and Durlauf, Steven N and Jayaraman, Rajshri},
	journal = {Journal of Political Economy},
	number = {2},
	pages = {444--496},
	title = {{Linear Social Interactions Models}},
	volume = {123},
	year = {2015}}

@article{bhm2015,
	author = {Bergemann, Dirk and Heumann, Tibor and Morris, Stephen},
	date-added = {2022-09-16 19:03:00 -0400},
	date-modified = {2022-09-16 19:04:07 -0400},
	journal = {Journal of Economic Theory},
	pages = {427--465},
	title = {{Information and Volatility}},
	volume = {158},
	year = {2015}}

@article{lu2016,
	author = {Lu, Jay},
	journal = {Econometrica},
	number = {6},
	pages = {1983--2027},
	title = {{Random Choice and Private Information}},
	volume = {84},
	year = {2016}}

@article{arieli2017,
	author = {Arieli, Itai and Mueller-Frank, Manuel},
	journal = {Games and Economic Behavior},
	pages = {455--461},
	title = {{Inferring Beliefs from Actions}},
	volume = {102},
	year = {2017}}

@article{libgober2025,
	author = {Libgober, Jonathan},
	date-added = {2022-08-07 18:54:47 -0400},
	date-modified = {2026-05-26 15:44:12 +0800},
	journal = {Journal of Political Economy Microeconomics},
	number = {4},
	pages = {798--826},
	title = {{Identifying wisdom (of the crowd): A regression approach}},
	volume = {3},
	year = {2025}}

@article{savage1971,
	author = {Savage, Leonard J},
	journal = {Journal of the American Statistical Association},
	number = {336},
	pages = {783--801},
	title = {{Elicitation of Personal Probabilities and Expectations}},
	volume = {66},
	year = {1971}}

@article{lu2019,
	author = {Lu, Jay},
	date-modified = {2022-09-15 15:22:24 -0400},
	journal = {American Economic Review},
	number = {9},
	pages = {3192--3228},
	title = {{Bayesian Identification:\ A Theory for State-dependent Utilities}},
	volume = {109},
	year = {2019}}

@article{blackwell1951,
	author = {Blackwell, David},
	date-added = {2022-08-05 21:37:25 -0400},
	date-modified = {2022-08-28 20:39:28 -0400},
	journal = {Proceedings of the Second Berkeley Symposium on Mathematical Statistics and Probability},
	pages = {93--102},
	title = {{Comparison of Experiments}},
	volume = {1},
	year = {1951}}

@article{hansen1974,
	author = {Hansen, Ole Havard and Torgersen, Erik N},
	date-modified = {2022-08-28 20:37:30 -0400},
	journal = {The Annals of Statistics},
	number = {2},
	pages = {367--373},
	title = {{Comparison of Linear Normal Experiments}},
	volume = {2},
	year = {1974}}

@article{pettis1938,
	author = {Pettis, Billy James},
	journal = {Transactions of the American Mathematical Society},
	number = {2},
	pages = {277--304},
	title = {{On Integration in Vector Spaces}},
	volume = {44},
	year = {1938}}

@article{manski1993,
	author = {Manski, Charles F},
	journal = {The Review of Economic Studies},
	number = {3},
	pages = {531--542},
	title = {{Identification of Endogenous Social Effects:\ The Reflection Problem}},
	volume = {60},
	year = {1993}}

@article{lee2007,
	author = {Lee, Lung-Fei},
	journal = {Journal of Econometrics},
	number = {2},
	pages = {333--374},
	title = {{Identification and Estimation of Econometric Models with Group Interactions, Contextual Factors and Fixed Effects}},
	volume = {140},
	year = {2007}}

@article{bramoulle2009,
	author = {Bramoull{\'e}, Yann and Djebbari, Habiba and Fortin, Bernard},
	journal = {Journal of Econometrics},
	number = {1},
	pages = {41--55},
	title = {{Identification of Peer Effects through Social Networks}},
	volume = {150},
	year = {2009}}

@article{vives1984,
	author = {Vives, Xavier},
	journal = {Journal of Economic Theory},
	number = {1},
	pages = {71--94},
	title = {{Duopoly Information Equilibrium: Cournot and Bertrand}},
	volume = {34},
	year = {1984}}

@article{uhlig1996,
	author = {Harald Uhlig},
	date-added = {2022-05-26 18:30:23 -0500},
	date-modified = {2022-05-26 18:32:10 -0500},
	journal = {Economic Theory},
	pages = {41--50},
	title = {{A Law of Large Numbers for Large Economies}},
	volume = {8},
	year = {1996}}

@article{judd1985,
	author = {Judd, Kenneth L.},
	date-added = {2022-05-26 18:28:16 -0500},
	date-modified = {2022-05-26 18:30:21 -0500},
	journal = {Journal of Economic Theory},
	number = {1},
	pages = {19--25},
	title = {{The Law of Large Numbers with a Continuum of IID Random Variables}},
	volume = {35},
	year = {1985}}

@article{lop2018,
	author = {Lambert, Nicolas S and Ostrovsky, Michael and Panov, Mikhail},
	date-added = {2022-05-26 01:13:28 -0500},
	date-modified = {2022-05-26 01:14:45 -0500},
	journal = {Econometrica},
	number = {4},
	pages = {1119--1157},
	title = {{Strategic Trading in Informationally Complex Environments}},
	volume = {86},
	year = {2018}}

@article{alnajjar1995,
	author = {Nabil Ibraheem Al-Najjar},
	date-added = {2022-05-25 16:27:49 -0500},
	date-modified = {2022-05-25 16:29:08 -0500},
	journal = {Econometrica},
	number = {5},
	pages = {1195--1224},
	title = {{Decomposition and Characterization of Risk with a Continuum of Random Variables}},
	volume = {63},
	year = {1995}}

@article{bm2016,
	author = {Dirk Bergemann and Stephen Morris},
	date-added = {2022-05-14 13:45:52 -0500},
	date-modified = {2022-05-14 13:48:23 -0500},
	journal = {Theoretical Economics},
	number = {2},
	pages = {487--522},
	title = {{Bayes Correlated Equilibrium and the Comparison of Information Structures in Games}},
	volume = {11},
	year = {2016}}

@article{bramoulle2014,
	author = {Bramoull{\'e}, Yann and Kranton, Rachel and D'amours, Martin},
	date-added = {2022-04-09 18:55:01 -0400},
	date-modified = {2022-04-09 18:56:09 -0400},
	journal = {American Economic Review},
	number = {3},
	pages = {898--930},
	title = {{Strategic Interaction and Networks}},
	volume = {104},
	year = {2014}}

@article{bhm2017,
	author = {Bergemann, Dirk and Heumann, Tibor and Morris, Stephen},
	date-added = {2022-04-09 18:04:00 -0400},
	date-modified = {2022-04-09 18:04:45 -0400},
	journal = {Working Paper},
	title = {{Information and Interaction}},
	year = {2017}}

@article{lmo2018,
	author = {Lambert, Nicolas S and Martini, Giorgio and Ostrovsky, Michael},
	date-added = {2022-04-09 17:39:08 -0400},
	date-modified = {2022-09-20 20:52:27 -0400},
	journal = {Working Paper},
	title = {{Quadratic Games}},
	year = {2018}}

@article{miyashita_ui_lqgd,
	author = {Masaki Miyashita and Takashi Ui},
	date-modified = {2026-05-27 20:27:38 +0800},
	journal = {Working Paper},
	title = {{LQG Information Design}},
	year = {2025}}

@book{horn,
	author = {Horn, Roger A and Johnson, Charles R},
	date-added = {2022-04-04 16:23:31 -0400},
	date-modified = {2022-04-04 16:24:18 -0400},
	edition = {2},
	publisher = {Cambridge University Press},
	title = {Matrix Analysis},
	year = {2012}}

@article{graphon2023,
	author = {Parise, Francesca and Ozdaglar, Asuman},
	date-added = {2022-03-17 12:45:25 -0600},
	date-modified = {2025-11-08 21:26:06 +0800},
	journal = {Econometrica},
	number = {1},
	pages = {191--225},
	title = {{Graphon Games:\ A Statistical Framework for Network Games and Interventions}},
	volume = {91},
	year = {2023}}

@book{brezis,
	author = {Brezis, Haim},
	date-added = {2022-03-11 17:08:33 -0600},
	date-modified = {2022-03-11 17:08:39 -0600},
	publisher = {Springer Science \& Business Media},
	title = {Functional Analysis, Sobolev Spaces and Partial Differential Equations},
	year = {2010}}

@book{lax,
	author = {Peter D. Lax},
	date-added = {2022-03-09 18:52:20 -0600},
	date-modified = {2022-03-09 18:53:27 -0600},
	publisher = {John Wiley \& Sons},
	title = {Functional Analysis},
	year = {2002}}

@book{krees,
	author = {Rainer Kress},
	date-added = {2022-03-09 15:36:20 -0600},
	date-modified = {2022-03-09 15:37:45 -0600},
	edition = {3},
	publisher = {Springer, New York},
	title = {Linear Integral Equations},
	year = {2014}}

@book{dudley,
	author = {Richard M. Dudley},
	date-added = {2022-02-20 16:32:51 -0500},
	date-modified = {2022-02-20 16:33:53 -0500},
	publisher = {Wadsworth, Belmont, CA},
	title = {Real Analysis and Probability},
	year = {1989}}

@book{ab2006,
	author = {Aliprantis, Charalambos D and Border, C. Kim},
	date-added = {2022-02-06 18:41:38 -0500},
	date-modified = {2025-11-19 18:25:57 +0800},
	publisher = {Springer},
	title = {Infinite Dimensional Analysis},
	year = {2006}}

@article{ap2007,
	author = {Angeletos, George-Marios and Pavan, Alessandro},
	date-modified = {2022-05-14 18:13:10 -0500},
	journal = {Econometrica},
	number = {4},
	pages = {1103--1142},
	publisher = {Wiley Online Library},
	title = {{Efficient Use of Information and Social Value of Information}},
	volume = {75},
	year = {2007}}

@article{bm2013,
	author = {Bergemann, Dirk and Morris, Stephen},
	date-modified = {2022-05-14 13:48:01 -0500},
	journal = {Econometrica},
	number = {4},
	pages = {1251--1308},
	title = {{Robust Predictions in Games with Incomplete Information}},
	volume = {81},
	year = {2013}}

@article{ms2002,
	author = {Morris, Stephen and Shin, Hyun Song},
	date-modified = {2026-05-26 18:36:35 +0800},
	journal = {American Economic Review},
	number = {5},
	pages = {1521--1534},
	title = {{Social Value of Public Information}},
	volume = {92},
	year = {2002}}

@article{ui2015,
	author = {Ui, Takashi and Yoshizawa, Yasunori},
	date-modified = {2022-04-09 17:54:29 -0400},
	journal = {Journal of Economic Theory},
	pages = {507--535},
	title = {{Characterizing Social Value of Information}},
	volume = {158},
	year = {2015}}

@article{ui2013,
	author = {Ui, Takashi and Yoshizawa, Yasunori},
	date-modified = {2022-04-09 17:54:36 -0400},
	journal = {Economics Bulletin},
	number = {1},
	pages = {72--77},
	title = {{Radner's Theorem on Teams and Games with a Continuum of Players}},
	volume = {33},
	year = {2013}}

@article{radner1962,
	author = {Radner, Roy},
	date-modified = {2022-09-11 15:11:13 -0400},
	journal = {The Annals of Mathematical Statistics},
	number = {3},
	pages = {857--881},
	title = {{Team Decision Problems}},
	volume = {33},
	year = {1962}}

@book{manski1995,
	author = {Manski, Charles F},
	date-modified = {2022-08-11 18:04:09 -0400},
	publisher = {Harvard University Press},
	title = {{Identification Problems in the Social Sciences}},
	year = {1995}}

%%%%%%%%%%%%%%%%%%%%%%%%%%%%%%%%%%%%%%%%%%%%%%%%%%%%%%%%%%%%%%%%%%%%%%%%%%%%%%%%%%%%%%%%%%
%%%%%%%%%%%%%%%%%%%%%%%%%%%%%%%%%%%%%%%%%%%%%%%%%%%%%%%%%%%%%%%%%%%%%%%%%%%%%%%%%%%%%%%%%%

\end{document}